\newtheorem{proposition}{Proposition}[section]
\newcommand{\Dk}{\lceil \Delta/\kappa \rceil}
\newcommand{\Dksq}{\lceil \Delta/\kappa^2 \rceil}
\newcommand{\dk}{\lceil \delta/\kappa \rceil}
\title{Permissionless Consensus}
\author{Andrew Lewis-Pye}
\affiliation{%
  \institution{\department{Department of Mathematics} \institution{London School of Economics} \city{London} \country{UK}}
}
\email{a.lewis7@lse.ac.uk}
\author{Tim Roughgarden}
\affiliation{%
  \institution{\department{Data Science Institute}
    \institution{Columbia University \& a16z Crypto} \city{New York} \country{USA}}
}
\email{tim.roughgarden@gmail.com}
\date{Jan 2023}
\begin{abstract} 
  Blockchain protocols typically aspire to run in a {\em
    permissionless} setting, in which nodes are owned and operated by
  a potentially large number of diverse and unknown entities, with
  each node free to use multiple identifiers and to start or stop
  running the protocol at any time.  
This paper 
offers a
  framework for reasoning about the rich design space of blockchain
  protocols and their capabilities and limitations in
  permissionless settings.

We propose a hierarchy of settings with fundamentally
different ``degrees of permissionlessness,'' specified by the amount
of knowledge that a protocol has about the current participants:
\begin{itemize}

\item {\em Fully permissionless setting:} A protocol has no knowledge
  about current participation. Proof-of-work protocols are typically
  interpreted as operating in this setting.

\item {\em Dynamically available setting:} A protocol knows a
  (dynamically evolving) list of identifiers (e.g., public keys associated
  with locked-up stake), and the current participants are a subset of
  these identifiers. Proof-of-stake longest-chain protocols are typically
  interpreted as operating in this setting.

\item {\em Quasi-permissionless setting:} The current participants are
  all of the identifiers in the set above. Proof-of-stake PBFT-style
  protocols are typically interpreted as operating in this setting.


\end{itemize}

This paper also presents a number of results that showcase the
versatility of our analysis framework for reasoning about blockchain
protocols that operate in one these settings.  For example:
\begin{itemize}

\item In the fully permissionless setting, even in the synchronous
  setting and with severe restrictions on the total size of the
  Byzantine players, every deterministic protocol for Byzantine
  agreement has an infinite execution in which honest players never terminate.

\item In the dynamically available and partially synchronous setting,
  no protocol can solve the Byzantine agreement problem with high
  probability, even if there are no Byzantine players at all.

\item In the quasi-permissionless and partially synchronous setting,
  by contrast, assuming a suitable bound on the total size of the
  Byzantine players, there is a deterministic proof-of-stake protocol
  for state machine replication. 
  
\item In the dynamically available,
  authenticated, and synchronous setting, no optimistically responsive
  state machine replication protocol guarantees consistency and
  liveness, even when there are no Byzantine players at all.

\item In the quasi-permissionless and synchronous setting, every
  proof-of-stake state machine replication protocol that uses only
  time-malleable cryptographic primitives is vulnerable to long-range
  attacks.

\end{itemize}
\end{abstract}
\begin{document}

\maketitle

\section{Introduction}

\subsection{Consensus: Permissioned vs.\ Permissionless} \label{fbi} 

The goal of a {\em consensus protocol} is, loosely speaking, to keep a
distributed network of nodes ``in sync,'' even in the presence of an
unpredictable communication network and incorrect behavior by some of
the nodes.  
In a typical traditional application, such as replicating a database
in order to boost its uptime, the nodes would generally be owned and
operated by one or a small number of entities, like a government
agency or a corporation.  Such applications are most naturally modeled
with what is now called the {\em permissioned setting}, meaning that
the set of nodes that will be running the consensus protocol is fixed
and known at the time of protocol deployment.  Many consensus
protocols are, quite reasonably, designed and analyzed with the
permissioned setting in mind.

Over the past 40+ years, the feasibility of permissioned consensus
has been studied intensely for a wide range of problems,
communication network models,
and fault models.
For example, for the Byzantine agreement problem,
classical work has characterized, as a function of the degree of
asynchrony (e.g., synchronous vs.\ partially synchronous vs.\
asynchronous) and the severity of faults (e.g., crash vs.\ omission
vs.\ Byzantine), the maximum-possible fraction of faulty nodes
that can be tolerated by a correct consensus protocol.

Like any mathematical framework worth its salt, the traditional models
and language for discussing consensus problems and protocols are
versatile and have proved relevant for contemporary, unforeseen
applications.  The focus of this work is blockchain protocols.  The
fundamental goal of a typical such protocol is state machine
replication---keeping a network of nodes in sync on the current state
of the blockchain.  For example, the goal of the Bitcoin protocol is
to keep the nodes running it in agreement on the current set of
``unspent transaction outputs (UTXOs).'' The goal of the Ethereum
protocol is to keep the nodes running it in agreement on the current
state of the Ethereum Virtual Machine.

Unlike traditional consensus protocols, blockchain protocols such as
Bitcoin and Ethereum aspire to run in a ``permissionless
  setting,'' which means, roughly, that the protocol has no idea who
  is running it.
The shift from permissioned to permissionless entails three distinct
challenges:

\begin{quote}
\textbf{The unknown players challenge.} The set of distinct
  entities that might run the protocol is unknown at the time of
  protocol deployment and is of unknown size.
\end{quote}

\begin{quote}
\textbf{The player inactivity challenge.}
Such entities can start or stop running the protocol at any time.
\end{quote}

\begin{quote}
\textbf{The sybil challenge.} Each such entity can operate
  the protocol using
  an unknown and unbounded set of identifiers (a.k.a.\ ``sybils'').
\end{quote}
We stress that these three challenges are logically independent, and
every subset of them leads to a distinct and well-defined setting that
has its own set of possibility and impossibility results (see also
Section~\ref{rw}).  Indeed, different impossibility results in this
paper are driven by different subsets of these challenges.

Why impose these three additional challenges to the tried-and-true
permissioned setting?  Because while the traditional motivation for
replication is tolerance to hardware and software faults, the primary
motivation for it in blockchain protocols such as Bitcoin and Ethereum
is ``decentralization,'' meaning that anyone should be able to
contribute meaningfully to running the protocol and that no entity or
small cartel of entities should be in a position to ``turn it off''
or otherwise interfere with its operation.  All known approaches to
designing blockchain protocols that are decentralized in this sense
must grapple with the unknown players, player inactivity, and sybil
challenges.

All else equal, consensus cannot be easier with these additional
challenges than without.
Many consensus protocols designed for the permissioned setting are not
even well defined outside of it.  For example, if a protocol makes
progress by assembling agreeing votes from a supermajority (i.e., more
than two-thirds) of the nodes running it, how could it be run in a
setting in which the number of nodes is unknown, or in a setting in
which one node can masquerade as many?  The best-case scenario would
seem to be that consensus problems that are solvable in the
permissioned setting remain solvable in a permissionless setting,
albeit with additional protocol complexity and/or under compensatory
assumptions; the worst-case scenario would be that such problems
cannot be solved at all in a permissionless setting.  A priori, there is
no reason to expect that any consensus protocol could provide
non-trivial guarantees
when facing the unknown players, player inactivity, and sybil
challenges.

The Bitcoin protocol~\cite{nakamoto2008bitcoin} was the first to
demonstrate that, under appropriate assumptions, state machine
replication is in fact possible in a permissionless setting.  We now
know that the design space of blockchain protocols is surprisingly
rich, and even the protocols that are most widely used in practice
differ substantially across multiple dimensions.  For example, 
different protocols employ different types of
sybil-resistance mechanisms to address the sybil challenge and
ensure that a participant's control over the protocol is independent
of the number of identities under which it participates.  The Bitcoin
protocol takes a ``proof-of-work'' approach to sybil-resistance (with
a participant's control determined by its computational power,
specifically hashrate).  Many more modern blockchain protocols use
alternative approaches including, most commonly, ``proof-of-stake''
(with a participant's power controlled by how much of a protocol's
native cryptocurrency it has locked up in escrow).  Blockchain
protocols can also differ in their approaches to permissionless
consensus along several other dimensions, including the approach to
voting (e.g., the longest-chain rule versus more traditional quorum
certificates) and the assumed cryptographic primitives (e.g., digital
signature schemes and cryptographic hash functions only, or also more
exotic primitives like verifiable delay functions or ephemeral keys).

Summarizing: (i) permissionless consensus is a fundamental component
of modern blockchain protocols and must address the unknown players,
player inactivity, and sybil challenges; (ii) 
these challenges appear to make consensus strictly harder to achieve
than in the traditional permissioned setting; and (iii) any model for
reasoning thoroughly about the design space of blockchain protocols
and their capabilities and limitations must accommodate an unruly
rogues' gallery of examples.  This paper provides such a model,
demonstrates several senses in which permissionless settings are
provably more difficult than the traditional permissioned setting, and
identifies unavoidable protocol ingredients and additional assumptions
that are necessary (and often sufficient) to achieve permissionless
consensus.\footnote{This paper subsumes two earlier versions with
  different titles~\cite{lewis2020resource,lewis2021byzantine} and
  includes a more fleshed-out model (e.g., the explicit hierarchy in
  Section~\ref{ss:hierarchy}) and a number of new results
  (Theorems~\ref{ortheorem}, \ref{posPoS}, \ref{lrtheorem},
  \ref{niceneg} and \ref{posPoS2}).}












\subsection{Degrees of Permissionlessness: A Hierarchy of Four Settings}\label{ss:hierarchy}


The traditional foils to a consensus protocol are message delays and
faulty nodes, and in much of the literature these challenges are
parameterized through hierarchies of successively weaker
assumptions. For example, the synchronous and asynchronous models can
be viewed as opposite ends of a ``degree of asynchrony''
hierarchy, while crash and Byzantine faults bookend a ``fault
severity'' hierarchy.

The first contribution of this paper is to make explicit a ``degree of
permissionlessness'' hierarchy that is defined by four settings
(Table~\ref{table:hierarchy}).%
\footnote{Sections~\ref{ss:rw_unknown}--\ref{ss:rw_sybils} discuss how these settings compare to
those studied or implicit in previous works.}
As the degree of permissionlessness of the setting
increases---roughly, as the knowledge a protocol has about current
participation decreases---positive results become more impressive and
harder to come by (and impossibility results are increasingly to be
expected).  Informally, the four settings are:
\begin{enumerate}

\item {\em Fully permissionless setting.} At each moment in time, the
  protocol has no knowledge about which players are currently running
  it.  Proof-of-work (PoW) protocols such as Bitcoin are typically
  interpreted as operating in this setting.

\item {\em Dynamically available setting.} At each moment in time, 
the protocol is aware of a dynamically evolving list of identifiers (e.g., public keys
that currently have stake committed in a designated staking contract).
Importantly, the evolution of this list can be a function of the
protocol's execution, and different players may have different views of
this list (depending on the messages they have received thus far).
Membership in this list is a necessary condition for participating in
the protocol at that time.
Proof-of-stake (PoS) longest-chain protocols such as Ouroboros \cite{kiayias2017ouroboros} 
and Snow White \cite{daian2019snow} are typically interpreted as operating in this setting.

\item {\em Quasi-permissionless setting.} Membership in the dynamically evolving list of
  identifiers is also a sufficient condition for current participation in the
  protocol. That is, every identifier in the list controlled by an honest
  player is assumed to be running the protocol (correctly) at that
  time.  Proof-of-stake PBFT-style protocols such as Algorand are
  typically interpreted as operating in this setting.\footnote{We use
    the term ``PBFT-style protocol'' informally to mean a consensus
    protocol that, like PBFT~\cite{castro1999practical} and many other protocols,
    proceeds via a series of voting stages and guarantees
    safety via some form of quorum certificates.}
Certain proof-of-work protocols such as Byzcoin
\cite{kokoris2016enhancing}, Hybrid \cite{pass2016hybrid} and
Solida~\cite{abraham2016solida} also operate in this setting (and,
unlike Bitcoin, do not function as intended in the fully
permissionless setting).

\item {\em Permissioned setting.} The list of identifiers is fixed at the
  time of the protocol's deployment, with one identifier per participant and
  with no dependence on the
  protocol's execution.  At each moment in time, membership in this
  list is necessary and sufficient for current participation in the
  protocol.  Tendermint is a canonical example of a blockchain
  protocol that is designed for the permissioned
  setting.\footnote{Tendermint Core is a permissioned protocol (see \url{https://docs.tendermint.com/v0.34/tendermint-core/validators.html}):
    ``Validators are expected to be online, and the set of validators
    is permissioned/curated by some external process. Proof-of-stake
    is not required, but can be implemented on top of Tendermint
    consensus.''
Permissionless
    blockchain protocols based on the Tendermint protocol (such as
    Cosmos~\cite{kwon2019cosmos}) must also define a sybil-resistant
    way of periodically choosing validators (e.g., via a proof-of-stake
    approach along the lines of the protocol described in Section~\ref{thirteen}).}

\end{enumerate}

\begin{table}[h]
\begin{center}
\begin{tabular}{|c|c|c|}\hline
Setting & Protocol's Knowledge of Current Participants & Canonical Protocol\\ \hline
Fully permissionless & None & Bitcoin\\
Dynamically available & Subset of the currently known ID list & Ouroboros\\
Quasi-permissionless & All of the currently known ID list & Algorand\\
Permissioned & All of the a priori fixed ID list & Tendermint\\ \hline
\end{tabular}
\caption{A hierarchy of four settings, ordered according to ``degree
  of permissionlessness.'' The results in this paper demonstrate that the
  four settings are mathematically distinct.}\label{table:hierarchy}
\end{center}
\end{table}
\noindent
The hierarchy is formally defined in Sections
\ref{pswrr}--\ref{defpermyay}. Our results demonstrate that the four
levels of this hierarchy are mathematically distinct (see
Section~\ref{ss:results} and
Tables~\ref{table:sync}--\ref{table:psync}).  The first three levels
can be viewed as explicitly parameterizing the player inactivity
challenge of permissionless consensus.  The sybil challenge is present
in all three of these settings, and is addressed by a protocol through
the use of ``resources'' such as hashrate or stake.  The unknown
players challenge is also fully present in each, although stronger
assumptions on player activity effectively provide more knowledge to a
protocol about the set of (relevant) players; for example, in the
quasi-permissionless setting, a protocol can regard the ``current
player set'' as the players that control the identifiers possessing a
non-zero amount of an on-chain resource such as stake.


With the definitions of these settings in place, we can then
colloquially speak about the ``permissionlessness of a blockchain
protocol,'' meaning the most permissionless setting in which the
protocol satisfies desired properties such as safety and liveness
(under appropriate assumptions, for example on the combined power of
Byzantine players and on message delays). For example, the same way 
the term ``asynchronous protocol'' might refer to a protocol that
functions as intended even in the asynchronous model, we could describe
Bitcoin as a ``fully permissionless protocol,'' and similarly for
Ouroboros (a ``dynamically available protocol'') and Algorand (a
``quasi-permissionless protocol'').\footnote{Appendix~\ref{apC}
  classifies several more blockchain protocols
  according to this hierarchy. In particular, the Ethereum protocol
  proves the point that a fixed blockchain protocol can be fruitfully
  analyzed (and possess different sets of non-trivial guarantees) in
  different levels of our hierarchy.}  In this way, our framework makes
precise the common intuition that proof-of-stake protocols are
typically ``more permissioned'' than proof-of-work protocols
(dynamically available or quasi-permissionless, but not fully
permissionless) but also ``more permissionless'' than traditional
(permissioned) consensus protocols.\footnote{One commonly expressed
  viewpoint is that proof-of-stake protocols are ``automatically
  permissioned,'' on the grounds that you cannot participate unless
  someone agrees to give or sell you some of the protocol's native
  currency.  We disagree with this viewpoint, on two grounds.  First,
  from a pragmatic perspective, the prospect of being unable to
  acquire enough ETH (say) from an exchange like Coinbase or Uniswap
  to become a (proof-of-stake) Ethereum validator is arguably as
  far-fetched as being unable to acquire enough ASICs to participate
  meaningfully in the (proof-of-work) Bitcoin protocol.  Second, from
  a mathematical perspective, the current participants of a
  proof-of-stake protocol generally depend on the protocol's
  execution---a surprisingly subtle issue, as we'll explore in
  Section~\ref{addedsec}---while the participants of a permissioned
  protocol are fixed once and for all.
Theorem~\ref{niceneg} shows that this distinction
  materially changes the conditions under which consensus is achievable.}

\subsection{Five Representative Results}\label{ss:results}

We next highlight informal statements of five of our results that
illustrate the flexibility of our analysis framework for protocols
that operate in one of the settings described in
Section~\ref{ss:hierarchy}; see also Tables~\ref{table:sync}
and~\ref{table:psync}.  The undefined terms in the statements below
are explained in Section~\ref{ss:model} or in
Sections~\ref{pswrr}--\ref{longr}.
%
%
See
Section~\ref{ss:rw_results} for a comparison of our results to those
in previous works.


\vspace{0.2cm} \noindent \textbf{Theorem \ref{fmt}}. 
In the fully permissionless, authenticated,
  and synchronous setting, every deterministic protocol for solving Byzantine
  agreement has an infinite execution, even when Byzantine players
  control only an arbitrarily small (but non-zero) fraction of the
active players' resources at each timeslot and deviate from honest behavior only by delaying message dissemination (or crashing).

\vspace{0.1cm} \noindent 
Theorem~\ref{fmt} can be viewed as an analog of the FLP impossibility
theorem~\cite{fischer1985impossibility}, which states a similar
conclusion for permissioned protocols in the asynchronous
setting. In this sense, the fully permissionless (and synchronous)
setting poses as great a challenge as the asynchronous (and
permissioned) setting.  
Theorem~\ref{fmt}, coupled with a recent result of Losa and
Gafni~\cite{losa2023consensus} (see Theorem~\ref{sep} and
Appendix~\ref{app:lg}), formally separates the fully permissionless
and dynamically available settings.



\vspace{0.2cm} \noindent \textbf{Theorem \ref{psm}}. 
In the dynamically available, authenticated,
  and partially synchronous setting, no protocol can solve the
  Byzantine agreement problem with high probability, even if there are
  no Byzantine players at all.

\vspace{0.1cm} \noindent 
Theorem~\ref{psm} can be viewed as a CAP-type result, stemming from the
ambiguity between waning participation (a challenge of the dynamically
available setting) and unexpectedly long message delays (a challenge
of the partially synchronous setting).
  



\vspace{0.2cm} \noindent \textbf{Theorem \ref{ortheorem}}. 
In the dynamically available,
  authenticated, and synchronous setting, no optimistically responsive
  state machine replication protocol guarantees consistency and
  liveness, even when there are no Byzantine players at all.

\vspace{0.1cm} \noindent 
Informally, ``optimistic responsiveness'' strengthens liveness by
requiring that the speed of transaction confirmation depend only on
the realized message delays (as opposed to on an assumed worst-case
bound on message delays). Theorem~\ref{ortheorem} shows that this
performance goal is fundamentally at odds with the unpredictable
inactivity of players in the dynamically available setting.

\vspace{0.2cm} \noindent \textbf{Theorem \ref{posPoS}}. 
In the quasi-permissionless,
  authenticated, and partially synchronous setting, there exists a
  deterministic and optimistically responsive proof-of-stake protocol
  that solves the state machine
  replication problem (and hence also the Byzantine agreement
  problem), provided Byzantine players always control less than
  one-third of the total stake.  

\vspace{0.1cm} \noindent 
Theorem~\ref{posPoS}, together with either Theorem~\ref{psm}
or~\ref{ortheorem}, proves that
the dynamically available setting is strictly more challenging than
the quasi-permissionless setting.

As an aside, Theorem~\ref{posPoS} requires the (standard) assumption
that the confirmed transactions always allocate Byzantine players less
than a bounded fraction of the stake.  Theorems~\ref{niceneg}
and~\ref{posPoS2} 
provide possibility and
impossibility results under relaxed assumptions that concern only the
transactions that have been sent (as opposed to the transactions that
have been confirmed).
In
particular, Theorem~\ref{niceneg} provides a formal separation between
the quasi-permissionless and permissioned settings.

  


{\small
\begin{table}[h]
\begin{center}
\begin{tabular}{|c|c|c|c|c|c|}\hline
Setting & prob.\ BA & det.\ BA (delays) & det.\ BA & acct.\ SMR
  & opt.\ resp.\ SMR\\ \hline
FP (pos.) & $< \tfrac{1}{2}$ \cite{garay2018bitcoin} & 0 (trivial) & 0 (trivial) & N/A
  & N/A \\
FP (imp.) & $\ge \tfrac{1}{2}$ (folklore) & $> 0$ (Thm.~\ref{fmt}) &
                                                         $> 0$
(Thm.~\ref{fmt}) & no \cite{neu2022availability} & no (Thm.~\ref{ortheorem})\\ \hline
DA (pos.) & $< \tfrac{1}{2}$ \cite{garay2018bitcoin} & $< \tfrac{1}{2}$ \cite{losa2023consensus} & $< \tfrac{1}{2}$? (conjecture) & N/A & N/A\\
DA (imp.) & $\ge \tfrac{1}{2}$ (folklore) & $\ge \tfrac{1}{2}$ (folklore) &  $\ge \tfrac{1}{2}$ (folklore) & no \cite{neu2022availability} & no (Thm.~\ref{ortheorem})\\ \hline
QP (pos.) & $< \tfrac{1}{2}$ \cite{dolev1983authenticated} & $< \tfrac{1}{2}$
\cite{dolev1983authenticated} & $<\tfrac{1}{2}$ \cite{dolev1983authenticated} & yes (Thm.~\ref{posPoS}) & yes (Thm.~\ref{posPoS})\\ 
QP (imp.) & $\ge \tfrac{1}{2}$ (folklore) & $\ge \tfrac{1}{2}$ (folklore) &
$\ge \tfrac{1}{2}$ (folklore)& N/A & N/A \\ \hline
\end{tabular}
\caption{Possibility (pos.) and impossibility (imp.) results in the
  synchronous setting.  ``FP'' refers to the fully permissionless
  setting (with resource restrictions) defined in
  Sections~\ref{pswrr}--\ref{ps2}; ``DA'' refers to the dynamically
  available setting defined in Section~\ref{dasection}; and ``QP''
  refers to the quasi-permissionless setting (with reactive resources)
  defined in Sections~\ref{pdr} and~\ref{defpermyay}.  ``Det.\ BA''
  refers to the Byzantine agreement problem defined in
  Section~\ref{2.5}, and ``prob.\ BA'' to its probabilistic version
  defined in Section~\ref{instance}. ``Det.\ BA (delays)'' refers to a
  version of the Byzantine agreement problem in which faulty players
  can deviate from honest behavior only by delaying the dissemination
  of messages (or crashing).  ``Acct.\ SMR'' and ``opt.\ resp.\ SMR''
  concern the existence of state machine replication protocols that
  satisfy, respectively, accountability (in the sense defined in
  Section~\ref{ss:da_imp2}) and optimistic responsiveness (in the
  sense defined in Section~\ref{ss:da_imp3}). The numbers in the
  second, third, and fourth columns refer to the fraction of resources
  controlled by Byzantine players.}\label{table:sync}
\end{center}
\end{table}

\begin{table}[h]
\begin{center}
\begin{tabular}{|c|c|c|c|c|c|}\hline
Setting & prob.\ BA & det.\ BA (delays) & det.\ BA & acct.\ SMR
  & opt.\ resp.\ SMR\\ \hline
FP (pos.) & N/A & N/A & N/A & N/A
  & N/A \\
FP (imp.) & $\ge 0$ (Thm.~\ref{psm}) & $\ge 0$ (Thm.~\ref{psm}) &
$\ge 0$ (Thm.~\ref{psm})                                                                 & no \cite{neu2022availability} & no (Thm.~\ref{ortheorem})\\ \hline
DA (pos.) & N/A & N/A & N/A & N/A & N/A\\
DA (imp.) & $\ge 0$ (Thm.~\ref{psm}) & $\ge 0$ (Thm.~\ref{psm}) &  $\ge 0$ (Thm.~\ref{psm}) & no \cite{neu2022availability} & no (Thm.~\ref{ortheorem})\\ \hline
QP (pos.) & $<\tfrac{1}{3}$ & $<\tfrac{1}{3}$ & $<\tfrac{1}{3}$  & yes  & yes \\ 
& (Thm.~\ref{posPoS} for SMR) &  (Thm.~\ref{posPoS} for SMR) &  (Thm.~\ref{posPoS} for SMR) & (Thm.~\ref{posPoS})&(Thm.~\ref{posPoS}) \\
QP (imp.) & $\ge \tfrac{1}{3}$ \cite{DLS88} & $\ge \tfrac{1}{3}$ \cite{DLS88} &
$\ge \tfrac{1}{3}$ \cite{DLS88} & N/A & N/A \\ \hline
\end{tabular}
\caption{Possibility and impossibility results in the partially
  synchronous setting. Abbreviations are defined in the caption of
  Table~\ref{table:sync}. 
The positive results for the QP setting all
follow from
Theorem~\ref{posPoS}, which provides an SMR
protocol that uses only reactive resources and achieves the stated
properties (and which can be used to
solve BA).}\label{table:psync}
\end{center}
\end{table}

}


\vspace{0.2cm} \noindent \textbf{Theorem \ref{lrtheorem}}. 
In the quasi-permissionless, authenticated,
  and synchronous setting, every proof-of-stake protocol that 
  uses only time-malleable cryptographic primitives is vulnerable to
  long-range attacks.

\vspace{0.1cm} \noindent 
Informally, a ``long-range attack'' is one in which a Byzantine player
violates the security of a blockchain protocol by acquiring old
private keys (e.g., from players for whom they no longer have any
value) and using them to create an alternative blockchain history.
Informally, a time malleable cryptographic primitive is one that can
be evaluated quickly (unlike, say, a verifiable delay function) and
whose output does not depend on the time of evaluation (unlike, say,
ephemeral keys).










\subsection{Main Ingredients of the Analysis Framework}\label{ss:model}

Making the informal statements in Section~\ref{ss:results} precise
requires a formal model, both of the design space of possible
protocols and of the assumptions imposed on the setting.  This section
is meant to give the reader a quick sense of our model and how it can
be used to formalize the specific results highlighted in
Section~\ref{ss:results}.
\begin{itemize}

\item To model the unknown players challenge,
we allow an unknown-to-the-protocol player set of unbounded size.

\item To model the sybil challenge,
we allow each player to control an
  unbounded number of unknown identifiers (e.g., public keys). 

\item To model the player inactivity challenge, 
we allow each player to be ``active''
  (participating in the protocol) or
  ``inactive'' (not participating) at any given timeslot.

\item At each timeslot, an (active) player can disseminate an
  arbitrary finite set of messages.\footnote{We could alternatively
    take an approach such as that in Khanchandani and
    Wattenhoffer~\cite{khanchandani2021byzantine}, which allows for
    point-to-point communication between players once their addresses
    are known; all of this paper's results would remain the same, with
    only minor changes to the proofs. The communication model
    considered here is chosen to reflect a permissionless setting, in
    which players cannot be expected to know how to reach specific
    other players and instead rely on a gossip protocol for message
    dissemination.}  All other active players eventually hear all such
  messages, possibly after a delay.  The assumptions of a setting
  include restrictions on the possible message delays. In this paper,
  we consider only two standard restrictions on message delays, the
  synchronous and partially synchronous settings.

\item As is common in distributed computing, we model a player and its
  behavior via a state diagram.  A protocol is then defined by the
  state diagram of an honest player, and the behavior of a Byzantine
  player is defined by an arbitrary state diagram.  Each timeslot,
  each player makes a state transition, which dictates the new state
  and the messages disseminated by the player at that timeslot.  The
  state transition is a function of the player's current state and the
  new information it learns in that timeslot (messages from other
  players and, as we'll see below, responses from ``oracles'').  This
  model is purely information-theoretic, with no computational
  constraints imposed on honest or Byzantine players
 (other than the inability to simulate oracles, as described below).

\item One consequence of allowing computationally unbounded players is
  that any cryptographic primitives that might be used by a protocol
  must be incorporated outside of the state diagram in an idealized
  form; we use a notion of ``oracles'' for this purpose. By
  definition, no player (honest or Byzantine) can compute an oracle
  output without directly querying the oracle (or being told the
  answer by some other player that has already queried it).

  For example, the ``time malleable cryptographic primitives''
  mentioned in the final result highlighted in
  Section~\ref{ss:results} translate formally in our model to oracles
  that respond to queries immediately (in the same timeslot as the
  query) with an
  answer that is independent of the current timeslot.


\item To operate in a permissionless setting in a sybil-resistant
  way, blockchain protocols must restrict, implicitly or
  explicitly, the allowable actions of a player as a function of how
  much of some scarce ``resource'' the player owns.  
(Proposition~\ref{prop:noresources} in Section~\ref{2.5} makes this
assertion precise.)
The two
  most canonical examples of such resources are hashrate (for
  proof-of-work protocols) and in-escrow cryptocurrency (for
  proof-of-stake protocols).  These two examples are fundamentally
  different: the latter is ``on-chain'' and publicly known (to the
  protocol and all the participating players), while the former is
  ``external,'' meaning private and not directly observable (by the
  protocol or by other players).  In our model, the description of a
  protocol includes the list of resources that its behavior depends
  upon, and a classification of each resource as 
  ``on-chain'' or ``external.'' 

\item We use a specific type of oracle (a ``permitter'') to model the
  restrictions imposed on behavior by an external resource.\footnote{Restrictions imposed by on-chain resources like stake can be
 enforced by honest players without aid of an oracle. For example, if
  the current (publicly observable) stake distribution determines a public key responsible for
  proposing a block in
  the current round, honest players can simply ignore any proposals
  for the round that do not include an appropriate signature.}
A  permitter is an oracle whose response can depend on the amount of
  the external resource that is controlled by the querying player.
For example, one way to model
  proof-of-work in a protocol like Bitcoin is with a permitter that,
  when queried by a player with current resource level~$b$,
  returns the smallest of~$b$ 256-bit strings drawn independently
  and uniformly at random.

\item The current distribution of an on-chain resource like stake
  depends on the execution-thus-far of the blockchain protocol. (Whereas the
  evolution of the distribution of an external resource like hashrate
  can be reasonably modeled as a process independent of the protocol's
  operation.) Thus, discussing on-chain resources requires
  a formal definition of an ``environment'' that issues
  ``transactions'' to players (which may, for example, update the
  resource distribution). Further, the description of a protocol that
  depends on on-chain resources must include a ``confirmation rule''
  (which specifies which transactions an honest player should regard
  as executed, given the messages and oracle responses received by the
  player thus far) and a ``resource update rule'' (which specifies how
  the execution of a transaction affects the distribution of the
  relevant resources).

\item This completes the high-level description of our model for the
  fully permissionless setting. The first two results in
  Section~\ref{ss:results} demonstrate the difficulty of operating in
  this setting: deterministic consensus is impossible even in the
  synchronous setting, and even probabilistic consensus is impossible
  in the partially synchronous setting. (A protocol like Bitcoin
  operates in the fully permissionless setting, but it enjoys consistency
  and liveness guarantees only in the synchronous setting, and only
  with high probability.)  


\item 
The dynamically available and quasi-permissionless settings
%
impose additional assumptions on the correlation between honest players' periods of
activity and their resource balances.  (Recall that each
player can be active or inactive at each timeslot.)  For example,
consider a proof-of-stake protocol that uses a single on-chain
resource that tracks how much cryptocurrency each player has locked up
in escrow in a designated staking contract.  The dynamically available
setting makes the minimal assumption that, at each timeslot, among the
honest players that control a non-zero amount of the on-chain
resource, at least one is active. 

\item A typical PBFT-style protocol based on weighted super-majority voting could stall in the dynamically available setting,
even if on-chain resources are controlled entirely by honest players,
by failing to assemble any quorum certificates due to unexpectedly low
participation.
The quasi-permissionless setting imposes the stronger assumption that,
at every timeslot, every honest player with a non-zero resource
balance---e.g., every honest player with stake locked up at that
time---is active. (The definition of the quasi-permissionless
setting can also be extended to on-chain resources beyond
stake; see Section~\ref{defpermyay} for details.)

\end{itemize}
























\subsection{Possibility vs.\ Impossibility Results}\label{ss:generous}

The primary focus of this paper is impossibility results
(Proposition~\ref{prop:noresources} and Theorems~\ref{fmt}, 
\ref{psm}, \ref{NTT}, \ref{ortheorem}, \ref{lrtheorem}, and
\ref{niceneg}), and our model is, accordingly, optimized to make such
results as strong as possible.  For example:
\begin{itemize}

\item We allow protocols to specify arbitrary oracles, without regard
  to their realizability.

\item We impose no restrictions on the computational or communication
complexity of a protocol.

\item We assume that newly active honest players learn about all
  previously disseminated messages (perhaps after a delay).

\end{itemize}
All of our impossibility results hold {\em even after} granting these
powers to protocols and the honest players that run them.

By the same token, any possibility result proved in our model should
be accompanied by a discussion of the extent to which its generosity
has been abused.  The two main possibility results in this paper
(Theorems~\ref{posPoS} and~\ref{posPoS2}) rely only on relatively
standard assumptions: that secure digital signature schemes exist,
that players can carry out a polynomial amount of computation and
communication, and that newly active honest players can somehow learn
about the blocks that have already been finalized.







\subsection{Organization of this Paper}

The following section summaries are meant to help the reader navigate the
rest of the paper.
\begin{itemize}

\item Section~\ref{rw} provides an overview of related work. 
  Sections~\ref{ss:rw_unknown}--\ref{ss:rw_sybils} and
  Section~\ref{ss:rw_results} compare our model and results,
  respectively, to those in the existing literature.

\item 
Section~\ref{pswrr} formally defines many of the basic
ingredients of the fully permissionless setting listed in
Section~\ref{ss:model}, up to and including oracles.
Section~\ref{2.5} defines the Byzantine agreement problem and
(in Proposition~\ref{prop:noresources})
formally shows that it is unsolvable in the fully permissionless setting
without some notion of ``resource restrictions'' to diffuse the sybil
challenge.

\item 
Section~\ref{fp} formally defines external resources and
permitter oracles (in Section~\ref{ss:external}), and illustrates
how they can model the Bitcoin protocol (in
Section~\ref{ss:bitcoin}).

\item
Section~\ref{belch}
formally defines blockchain
protocols, transactions, environments, stake, and confirmation rules
(in Section~\ref{bps}), 
as well as consistency and
liveness for blockchain protocols (in Section~\ref{ss:liveness}).

\item Section~\ref{ps2} considers the fully permissionless setting
  with external resource restrictions. Section~\ref{ss:fp} gives
  the formal statement of the first result
described in Section~\ref{ss:results} (Theorem~\ref{fmt}), which rules out deterministic
protocols for the Byzantine agreement problem, even in the synchronous
setting.  (This statement is proved in Section~\ref{proofoffmt}.)
Section~\ref{instance} introduces terminology for reasoning
about probabilistic protocols.

\item Section~\ref{dasection} concerns the dynamically
  available setting. Section~\ref{ss:da} formally defines the setting;
  Section~\ref{ss:sep} uses a result of Losa and
  Gafni~\cite{losa2023consensus} to separate the fully permissionless
  and dynamically available settings (Theorem~\ref{sep});
  Section~\ref{ss:da_imp1} formally states the second result highlighted in
  Section~\ref{ss:results} (Theorem~\ref{psm}, which is proved in
  Section~\ref{proofofpsm}); Section~\ref{ss:da_imp2} notes that a
  result of Neu, Tas, and Tse~\cite{neu2022availability} rules out
  accountability in the dynamically available setting
  (Theorem~\ref{NTT}); and Section~\ref{ss:da_imp3} formally states
  the third result highlighted in
  Section~\ref{ss:results} (Theorem~\ref{ortheorem}, which is proved in
  Section~\ref{proofofortheorem}).

\item Section~\ref{pdr} defines a version of the quasi-permissionless
  setting specifically for proof-of-stake protocols
  (Section~\ref{lfqp}) and, as a generalization of stake, the notion
  of reactive protocol-defined resources (Sections~\ref{blockrewards}--\ref{react}).

\item Section~\ref{defpermyay} formally defines the
  quasi-permissionless setting with arbitrary on-chain resources
  (Section~\ref{ss:qp}) and, in Section~\ref{posres}, gives the formal
  statement of the fourth result described in Section~\ref{ss:results}
  (Theorem~\ref{posPoS}, proved in Section~\ref{thirteen}).

\item
Section~\ref{longr} formally
defines long-range attacks and proves the last result highlighted in
Section~\ref{ss:results}.  

\item Section~\ref{addedsec} investigates relaxing restrictions on the
  (execution-dependent) transactions confirmed by a proof-of-stake
  protocol to restrictions only on the transactions issued by the
  environment.  Sections~\ref{ss:circle} and~\ref{ss:circle2} focus on
  impossibility results, culminating in Theorem~\ref{niceneg} (which
  is proved in Section~\ref{twelve}). Section~\ref{envrho} shows (via
  Theorem~\ref{posPoS2}, proved in Section~\ref{thirteen}) that the
  guarantees of Theorem~\ref{posPoS} can be recovered under additional
  natural assumptions on the environment.

\item As noted above, Sections~\ref{proofoffmt}--\ref{twelve} provide
  proofs and additional discussion for results in Sections~\ref{ps2},
  \ref{dasection}, \ref{defpermyay}, and~\ref{addedsec}.

\item Appendix~\ref{app:chia} supplements the proof-of-work discussion
  in Section~\ref{ss:bitcoin} with a second application of the
  permitter formalism in Section~\ref{ss:external}, to proof-of-space
  protocols.

\item Appendix~\ref{apC} supplements Table~\ref{table:hierarchy}
by describing how a number of other blockchain protocols fit into the
hierarchy of settings described
in Section~\ref{ss:hierarchy}.

\item Appendix~\ref{app:lg} presents, for completeness, a full proof
  of Theorem~\ref{sep} (adapted from Losa and
  Gafni~\cite{losa2023consensus}).

\end{itemize}

\section{Related work} \label{rw}

\subsection{Previous work on consensus without an a priori fixed and
  known player set}\label{ss:rw_unknown}

The Byzantine agreement problem was introduced by Lamport, Pease, and
Shostak~\cite{pease1980reaching,lamport1982byzantine} and has long
been a central topic in distributed computing.  Even before the advent
of Bitcoin and other blockchain protocols, a number of papers have
studied the problem in models that do not assume a fixed and known set
of players (but without considering the player inactivity or sybil
challenges).


\vspace{0.2cm}
\noindent \textbf{The CUP framework}.  Several papers, including
Alchieri et al.~\cite{alchieri2008byzantine} and Cavin, Sasson, and
Schiper~\cite{cavin2004consensus}, study the CUP framework (for
``Consensus amongst Unknown Participants'').  In this
model, the player set is fixed and finite but unknown, with
each (always-active and sybil-free) player aware of, and with
authenticated channels to, a subset of the other players that may
increase over time.  These works assume a known bound~$f$ on the
number (rather than the fraction) of faulty players.


Khanchandani and Wattenhoffer~\cite{khanchandani2021byzantine},
motivated in part by blockchain protocols, consider a variant of the
CUP framework that takes a different approach to ``player discovery.''
In their model, players can broadcast messages to all other players
(whether known or not) or send individual messages to players from
whom they have previously received messages.  (A Byzantine player is
allowed to immediately send individual messages to any player of its
choice). Each (always-active) player has a unique identifier and can
sign messages under that identifier only. The assumption is now that
less than one third of the players are Byzantine. Khanchandani and
Wattenhoffer~\cite{khanchandani2021byzantine} prove that consensus is
possible in this setting with synchronous communication, 
but impossible in the partially synchronous model.

\vspace{0.2cm} 
\noindent \textbf{Reconfiguration protocols}. 
There is
a substantial line of work on ``player reconfiguration protocols,''
meaning protocols for which the
set of players is allowed to change mid-execution, as dictated by
certain ``player-change'' instructions; see Duan and
Zhang~\cite{duan2022foundations} for a recent formal treatment of this
idea, and earlier papers by Lamport, Malkhi, and
Zhou~\cite{lamport2009vertical} and Ongaro and
Ousterhout~\cite{ongaro2014search} (for crash faults) and by Abraham
and Malkhi \cite{abraham2016bvp} and Bessani, Sousa, and Alchieri
\cite{bessani2014state} (for Byzantine faults).
The current players are assumed active and can participate under only
one identifier.
The specific case of proof-of-stake PBFT-style protocols in the
quasi-permissionless setting (Section~\ref{lfqp}) can be regarded as
player reconfiguration protocols in which the ``player-change''
instructions have a specific form (e.g., stake-transferring
transactions) and are issued by an adversarial environment (subject to
restrictions such as those proposed in Section~\ref{addedsec}).

\vspace{0.2cm} 
\noindent \textbf{Protocols tolerating unknown participants and crash
  faults}.  A number of papers (see~\cite{aguilera2004pleasant} for
a brief summary) consider distributed algorithms designed for a system
in which the number of participants may be unbounded and unknown, but
which evade the sybil challenge by restricting attention to crash
(rather than Byzantine) faults. For example, Gafni and Koutsoupias
\cite{gafni1998uniform} consider \emph{uniform protocols} that operate
in a setting in which each execution has a finite but unknown set of
participants. In a further generalization, Merritt and Taubenfeld
\cite{merritt2013computing} and Aguilera \cite{aguilera2004pleasant}
consider a setting in which infinitely many participants may join over
time during a single execution (and in which the set of participants is
again unknown).

\vspace{0.2cm} 
\noindent \textbf{Further related work}. 
Other works consider settings that differ both from the classical
permissioned setting and from the settings considered in this paper
(but again, without considering the player inactivity or sybil
challenges).  For example, Okun \cite{okun2005distributed} considers
a setting in which a fixed number of players 
communicate by private channels, 
but might be ``port unaware,'' meaning that they
are unable to determine from which private channel a message has
arrived. Okun~\cite{okun2005distributed} shows that deterministic
consensus is not possible in the absence of a signature scheme 
when players are port unaware, even in the synchronous setting.
Bocherding~\cite{borcherding1996levels} considers a setting in which
a fixed set of $n$ players communicate by private channels (without
the ``port unaware'' condition of Okun~\cite{okun2005distributed}),
and in which a signature scheme is available. Here, however, each
player is unaware of the others' public keys at the start of the
protocol's execution. In this setting,
Bocherding~\cite{borcherding1996levels} shows that Byzantine
agreement is impossible when at least one-third of the players may
be Byzantine (even in the synchronous model).

 
\subsection{Previous work on consensus with intermittently active players}\label{ss:rw_inactive}

\noindent \textbf{Sleepy consensus}.  The ``sleepy consensus'' model
of Pass and Shi \cite{pass2017sleepy} focuses squarely on the player
inactivity challenge (but without addressing the unknown players or
sybil challenges). In this model, there is a fixed and known finite
set of players (with no sybils). At each timeslot, however, an unknown
subset of these players may be inactive. Pass and
Shi~\cite{pass2017sleepy} design a Nakamoto-type consensus protocol
that is live and consistent in the synchronous setting provided
that, at each timeslot, a majority of the active players are honest. 
A number of follow-up works
(e.g.~\cite{momose2022constant,malkhi2022byzantine,d2023streamlining,d2023recent})
present further results for the sleepy consensus model and extensions
of it (without incorporating the unknown players or sybil challenges).
Momose and Ren~\cite{momose2022constant}, for example, extend the
classic PBFT-style approach from static quorum size to dynamic quorum
size to give a protocol with constant latency (i.e., with
time-to-confirmation bounded by a constant factor times the worst-case
message delay).



\vspace{0.2cm}
\noindent \textbf{Precursors to the dynamically available setting}.
There are several works that, as a byproduct of analyzing the
mathematical properties of a specific blockchain protocol, implicitly define
special cases of our dynamically available setting.  For example, Neu,
Tas, and Tse~\cite{neu2021ebb} describe a protocol that, in the
terminology of this paper, satisfies consistency and liveness in both
the synchronous and dynamically available setting and in the partially
synchronous and quasi-permissionless setting (under appropriate bounds
on the resources controlled by Byzantine players).
Similarly, the rigorous analyses of the Snow
White~\cite{daian2019snow} and Ouroboros~\cite{kiayias2017ouroboros}
proof-of-stake longest-chain protocols can be interpreted as working
in the special case of the dynamically available setting in which
there are no
external resources, no on-chain resources other than stake, and, in
the terminology of Section~\ref{bps}, a $\tfrac{1}{2}$-bounded
adversary.



\subsection{Previous work on consensus with sybils}\label{ss:rw_sybils}

\noindent \textbf{Frameworks for analyzing proof-of-work (PoW)
  protocols}.  Many of the previous works on permissionless consensus,
such as Garay, Kiayias, and Leonardos~\cite{garay2018bitcoin} and a
number of follow-up papers, focus specifically on the Bitcoin
protocol~\cite{nakamoto2008bitcoin} and its close relatives.  Pass,
Seeman, and shelat~\cite{WHGSW16} and Pass and
Shi~\cite{pass2017rethinking} consider more general models of
proof-of-work protocols. (Indeed, Pass and
Shi~\cite{pass2017rethinking} explicitly discuss analogs of the
unknown players, player inactivity, and sybil challenges studied here,
albeit in a narrower proof-of-work setting.)  This line of work
generally uses the UC framework of Canetti
\cite{canetti2001universally}, as opposed to the state diagram (with
oracles) approach taken here, and models proof-of-work using a
random-oracle 
functionality (which, in the terminology of Section~\ref{ss:external},
plays the role of a single-use permitter).  For example, Pass and
Shi~\cite{pass2017rethinking} describe a model for analyzing
proof-of-work protocols with a fixed set of participants, where the
number of participants controlled by the adversary depends on its
hashing power.  They show that, if a protocol is unsure of the number
of participants (up to a factor of~2) and must provide availability
whenever the number of participants is at least $n$ and at most $2n$,
the protocol cannot guarantee consistency in the event of a network
partition.
More recent papers in this line of work include Garay et
al.~\cite{garay2020resource}, in which a player's ability to broadcast
is limited by access to a restricted resource, and
Terner~\cite{terner2020permissionless}, in which the model is
reminiscent of that in~\cite{garay2018bitcoin} but with all
probabilistic elements of player selection ``black-boxed.''  

\vspace{0.2cm} 
\noindent \textbf{Permitters}.  The idea of black-boxing the process
of participant selection as an oracle, akin to our notion of
permitters (Section~\ref{ss:external}), was also explored by Abraham
and Malkhi~\cite{abraham2017blockchain}.  Azouvi et
al.~\cite{azouvi2022modeling}, motivated in part by a preliminary
version of this paper~\cite{lewis2021byzantine}, identify precisely
how the properties of a permitter (called a ``resource allocator''
in~\cite{azouvi2022modeling}) govern the properties of a longest-chain
blockchain protocol that uses it, and prove possibility and
impossibility results for the construction of permitters with various
properties. The updated notion of a single-use permitter (see
Section~\ref{ss:external}) in this paper, relative
to~\cite{lewis2021byzantine}, is inspired
by~\cite{azouvi2022modeling}.

\vspace{0.2cm} 
\noindent \textbf{Aside: going beyond proof-of-work}.  
There are several reasons why it is important
to study permissionless consensus in a model that, like the one in
this paper, also accommodates protocols that use on-chain resources such as
stake for sybil-resistance.  First, with the obvious exception of
Bitcoin, most of the major blockchain protocols deployed today (and
most notably Ethereum) are proof-of-stake protocols.  Second,
restricting attention to pure proof-of-work protocols forces other
design decisions that rule out still larger swaths of the protocol
design space. For example, there is a formal sense in which
proof-of-work protocols cannot work with PBFT-style quorum
certificates (see Theorem 5.1 of~\cite{lewis2021does}) and therefore
typically default to a longest-chain rule, while proof-of-stake
protocols can (as shown formally by Theorem~\ref{posPoS}, or
informally by a number of major deployed proof-of-stake
protocols). Finally, confining attention to external resources like
hashrate, which evolve independently of a blockchain's execution,
completely evades the key technical challenges of analyzing protocols
with on-chain resources: resource balances are subjective (that is, a
function of the messages received), and inextricably linked to the
blockchain's execution (specifically, the confirmed transactions).
The results in this paper show that widening the scope of analysis from
proof-of-work protocols to general blockchain protocols reveals a
remarkably rich and nuanced picture (e.g., Tables~\ref{table:sync}
and~\ref{table:psync} and Theorem~\ref{lrtheorem}).

\vspace{0.2cm} 
\noindent \textbf{Proof-of-stake longest-chain protocols}.  As
discussed in Section~\ref{ss:rw_inactive}, in formally analyzing the
Snow White and Ouroboros proof-of-stake longest-chain protocols,
respectively, Daian, Pass, and Shi~\cite{daian2019snow} and Kiayias et
al.~\cite{kiayias2017ouroboros} implicitly define special cases of our
general framework that accommodate both stake and stake-changing
transactions. The main positive results
in~\cite{daian2019snow,kiayias2017ouroboros}, which effectively match
the consistency and liveness guarantees of the Bitcoin
protocol~\cite{garay2018bitcoin} with proof-of-stake protocols, can be
viewed as analogs of our positive results (Theorems~\ref{posPoS}
and~\ref{posPoS2}) for the dynamically available (rather than
quasi-permissionless) setting (and with a $\tfrac{1}{2}$-bounded
adversary rather than a~$\tfrac{1}{3}$-bounded one).  Consistent with
Theorems~\ref{psm}--\ref{ortheorem}, the positive results
in~\cite{daian2019snow,kiayias2017ouroboros} hold only in the
synchronous setting, and the protocols defined in those papers satisfy
neither accountability nor optimistic responsiveness.

Brown-Cohen et al.~\cite{brown2019formal} describe a model for the
analysis of proof-of-stake longest-chain protocols, with a novel focus
on the incentives faced by the players running the protocol. They do
not attempt to model protocols outside of this class. (Of the major
deployed proof-of-stake blockchain protocols, only Cardano, based on
Ouroboros~\cite{kiayias2017ouroboros}, is a longest-chain protocol.)

Dembo et al.~\cite{dembo2020everything} describe a model for 
analyzing the consistency and liveness properties of longest-chain
protocols (proof-of-stake in particular, but also proof-of-work and proof-of-space).
Again, the goals of this paper are more specific and fine-grained than
ours here; the main results in~\cite{dembo2020everything} provide a
precise analysis of the security properties of certain canonical
longest-chain protocols in the synchronous setting and with static
resource balances.


\vspace{0.2cm} 
\noindent \textbf{Ripple, Stellar, and asymmetric trust
  assumptions}. Classical consensus protocols typically operate with
\emph{symmetric} trust assumptions, meaning that all honest players
operate under the same assumption on the largest-possible fraction of
faulty players.  However, a number of papers
(e.g. \cite{damgaard2007secure,cachin2019brief,cachin2021asymmetric})
have considered protocols that operate in the permissioned setting,
but which allow for \emph{asymmetric} trust assumptions. For example,
each player could specify which other players they believe to be honest,
and a protocol might then aim to provide liveness and consistency for
players whose assumptions are correct.  To function as intended, such
protocols typically require the trust assumptions of honest players to
be strongly aligned. For example, Cachin and
Tackmann~\cite{cachin2019brief} describe a protocol that achieves a
form of reliable broadcast assuming, among other things, the existence
of a ``guild,'' meaning
a large group of influential and mutually
trustful players that all make correct trust assumptions.
Cachin and Zanolini~\cite{cachin2021asymmetric} extend these methods
to define a protocol that solves the Byzantine agreement problem under
similar conditions.

Developing this idea further, a number of works
\cite{schwartz2014ripple,amores2020security,mazieres2015stellar,garcia2018federated,losa2019stellar,cachin2022quorum},
including the Stellar and Ripple white papers, have explored
asymmetric trust assumptions beyond the permissioned setting. At a
high level, the basic idea is to ``outsource'' the problem of
sybil-resistance to the protocol participants and then
deliver liveness and consistency in the event that participants
(somehow) make good choices as to whom to trust.  Mirroring the
permissioned setting, positive results require strongly aligned trust
assumptions (see e.g.~\cite{amores2020security,cachin2019brief}).
Protocols based on this approach would not qualify as
``permissionless'' in the senses defined in this paper, as in our
model players have no knowledge about each other (beyond the
disseminated messages that have been received) and thus the blockchain
protocol itself must shoulder the burden of sybil-resistance.

\subsection{Comparison with existing results}\label{ss:rw_results}

Some of the results in this paper have precursors in the existing
literature.  We discuss each of our results in turn.

\vspace{0.2cm} 
\noindent \textbf{Proposition~\ref{prop:noresources}}.
Proposition~\ref{prop:noresources} demonstrates that the sybil
challenge cannot be addressed without supplementing the protocol design
space with some notion of ``resources'' to prevent a Byzantine player
from simulating an arbitrarily large number of honest players. This
result is similar in spirit to Theorem~1 in Pass and
Shi~\cite{pass2017rethinking}. The proof given here must accommodate
the full generality of our framework (e.g., arbitrary oracles),
however.  It is also, in our view, substantially simpler than the
proof in \cite{pass2017rethinking}.

\vspace{0.2cm} 
\noindent \textbf{Theorem \ref{fmt}}. Theorem \ref{fmt} shows that in
the fully permissionless, authenticated, and synchronous setting,
every deterministic protocol for solving Byzantine agreement has an
infinite execution in which honest players never terminate, even when
Byzantine players control only an arbitrarily small (but non-zero)
fraction of the active players' resources at each timeslot. The
theorem holds even when faulty players can deviate from honest
behavior only by crashing or delaying message dissemination.  Pu,
Alvisi, and Eyal \cite{pu2022safe}, motivated in part by a preliminary
version of this paper~\cite{lewis2021byzantine}, show the existence of
a protocol that guarantees deterministic agreement and termination
with probability 1 under general omission failures.
The two results do not contradict each other because the protocol of
Pu, Alvisi, and Eyal is probabilistic and only guarantees termination
with probability~1, i.e., there remain infinite executions in which
honest players never terminate.

The proof of Theorem \ref{fmt} given in Section \ref{proofoffmt} is
similar in form to the proof of Moses and Rajsbaum
\cite{moses2002layered} that deterministic consensus is not possible
in the synchronous and permissioned setting in the case of a mobile
adversary (with both proofs being bivalency arguments in the style of
Fischer, Lynch, and
Paterson~\cite{fischer1985impossibility}).\footnote{See Attiya and
  Ellen~\cite[Chapter 7]{attiya2022impossibility} for an introduction to valency arguments.} We note
that impossibility results for the permissioned setting with a mobile
adversary (but without resource restrictions) or the asynchronous
permissioned setting do not generally
carry over to the fully permissionless setting (with resource
restrictions and synchronous communication); see also
footnotes~\ref{foot:bitcoin} and~\ref{foot:flm} in
Section~\ref{instance}. We leave it for future work to determine
whether Theorem~\ref{fmt} can alternatively be established through
some kind of reduction to an already-known impossibility result
in a different model (perhaps by building on the ideas
in~\cite{moses2002layered,santoro2007agreement,gafni2023time}).



\vspace{0.2cm} 
\noindent \textbf{Theorems~\ref{sep}--\ref{ortheorem}}.  The four
results stated in Section~\ref{dasection} map out what is possible and
impossible in the dynamically available setting.  Theorem~\ref{sep} is
a positive result showing that Theorem~\ref{fmt} does not hold in this
setting---thus proving a formal separation between the fully
permissionless and dynamically available settings---and it is
essentially due to Losa and Gafni~\cite{losa2023consensus}.
Theorem~\ref{NTT}, starting that accountability is impossible in the
dynamically available setting, is a result of Neu, Tas, and
Tse~\cite{neu2022availability}.  

Theorem~\ref{psm} is new to this paper; its proof resembles that of
the ``CAP Theorem''~\cite{brewer2000towards,gilbert2002brewer} and
other results in the same lineage.  For example, as mentioned in
Section~\ref{ss:rw_sybils}, Pass and Shi~\cite{pass2017rethinking}
prove an analogous result specifically for proof-of-work protocols
with a fixed set of participants.  Khanchandani and
Wattenhoffer~\cite{khanchandani2021byzantine} prove an analogous
result in their variant of the CUP framework described in
Section~\ref{ss:rw_unknown}.

Theorem~\ref{ortheorem} is, to the best of our knowledge, the first
result that identifies a fundamental conflict between dynamic
availability and optimistic responsiveness. The closest analog is a
result of Pass and Shi~\cite{pass2018thunderella} that concerns their
stronger notion of \emph{responsiveness}, which insists that
confirmation times scale with the realized message delays
even when there are (a limited number of) Byzantine players.
Pass and Shi~\cite{pass2018thunderella} show that, even in a
synchronous, authenticated, and permissioned setting, and even if one
augments that setting with resources, responsive state machine
replication protocols cannot tolerate a $\rho$-bounded adversary for
any $\rho\geq 1/3$.  To maximize the strength of our impossibility
result in Theorem~\ref{ortheorem}, our definition of optimistic
responsiveness in Section~\ref{ss:da_imp3} is specified to be as weak
as possible.  Theorem~\ref{ortheorem} holds even for $0$-bounded
adversaries, and as such cannot extend to the permissioned (or even
quasi-permissionless) setting; the result is fundamentally driven by
the player inactivity challenge present in the dynamically available
setting.


\vspace{0.2cm} 
\noindent \textbf{Theorem~\ref{posPoS}}. Theorems \ref{posPoS} and
\ref{posPoS2} formally establish that proof-of-stake protocols can be
(deterministic and) live and consistent in the partially synchronous
setting.  A result along the lines of Theorem~\ref{posPoS} is strongly
hinted at in several previous works.  For example, the analysis of the
Algorand protocol by Chen and Micali~\cite{chen2016algorand} provides
a rigorous proof of the protocol's liveness and consistency properties
in the synchronous setting, but does not deal formally with partial
synchrony.  The analysis of the Algorand protocol in Chen et
al.~\cite{chen2018algorand}, meanwhile, establishes liveness and
consistency in the partially synchronous setting, but only for a
permissioned version of the protocol.  Chen et
al.~\cite{chen2018algorand} suggest, without formal analysis, that in
a permissionless setting, one can implement a proof-of-stake version
of their permissioned protocol. A recent paper by Benhamouda et
al.~\cite{benhamouda2023analyzing} gives a probabilistic analysis of
consistency and liveness for the presently deployed instantiation of
the Algorand protocol (which differs substantially from the protocol
presented in the older papers cited above), but does not consider
changing players. Similar comments apply to the analysis of the Casper
protocol by Buterin and Griffith~\cite{buterin2017casper}, which
formally treats a permissioned version of the protocol and,
without any further formal analysis,
suggests how it could be extended to a proof-of-stake protocol
suitable for a permissionless setting.

Theorems \ref{posPoS} and \ref{posPoS2} also formally establish the
existence of a proof-of-stake protocol for the authenticated and
partially synchronous setting that is $(1/3,1)$-accountable (as
defined in Section~\ref{ss:da_imp2}) and optimistically responsive (as
defined in Section \ref{ss:da_imp3}). A number of papers
(e.g.,~\cite{sheng2021bft,ranchal2020zlb,civit2021polygraph,buterin2017casper,shamis2022ia})
have previously established accountability for permissioned
protocols. 
Similarly, several variations of optimistic responsiveness have been
achieved by permissioned protocols (e.g.,~\cite{yin2019hotstuff,castro1999practical,shrestha2020optimality}). 


\vspace{0.2cm} 
\noindent \textbf{Theorem~\ref{lrtheorem}}.  In~\cite{daian2019snow},
Daian, Pass, and Shi assert that, in the absence of ``additional trust
assumptions,'' proof-of-stake protocols are vulnerable to long-range
attacks. As shown by the Algorand protocol, however, cryptographic
primitives such as ephemeral keys can be used to prevent long-range
attacks. Theorem~\ref{lrtheorem} formally establishes precisely what
form of cryptographic primitive is required to avoid vulnerability to
long-range attacks.

In more detail, Theorem \ref{lrtheorem} shows that in the
quasi-permissionless, authenticated, and synchronous setting, every
proof-of-stake protocol that uses only time-malleable cryptographic
primitives is vulnerable to long-range attacks (see
Section~\ref{longr} for the definition of time-malleable oracles).
This result holds even if the adversary obtains stake only through
posterior corruption, meaning the acquisition of the private keys of
players who no longer own stake.
Some previous
works~\cite{chen2016algorand,badertscher2018ouroboros} can be
interpreted as exploring, in our terminology, how non-time-malleable
oracles might be used
to prevent long-range attacks. 
In a different but related direction, Tas et al.~\cite{tas2023bitcoin,tas2022babylon}
show that
\emph{slashable-safety} and liveness cannot be achieved for
proof-of-stake protocols (even those with non-time-malleable oracles)
if the adversary can, even without any posterior corruption, acquire a
majority of the stake 
(and without extra trust assumptions such as those considered in
\cite{wsbut,daian2019snow}). 
Tas et al.~\cite{tas2023bitcoin,tas2022babylon} also show how
long-range attacks on proof-of-stake protocols can be mitigated by
using a system of checkpoints recorded in the Bitcoin blockchain (the
security of which is, in effect, an extra trust assumption).

Finally, Azouvi, Danezis, and Nikolaenko~\cite{azouvi2020winkle}
describe an approach to mitigating long-range attacks on
proof-of-stake protocols by having clients sign off on checkpoints,
where a ``client'' refers to an entity that does not participate
directly in a blockchain protocol but may monitor and submit
transactions to the protocol. The motivation for this idea is that the
potentially large set of clients may be
less vulnerable to widespread posterior corruption than the
(presumably smaller) set of validators.

\vspace{0.2cm} 
\noindent \textbf{Theorems~\ref{niceneg} and~\ref{posPoS2}}.  All
existing analyses of the consistency and liveness properties of
proof-of-stake protocols in the partially synchronous setting have
restricted attention to protocol executions in which Byzantine players
always control less than a certain fraction of the total stake.
%
In a proof-of-stake protocol, however, the amount of stake controlled
by Byzantine players------and, thus, whether an execution meets the
hypothesis of such a guarantee---can depend on the realized network
delays, the strategies of the Byzantine players, and even the
decisions of the protocol itself.  Theorems~\ref{niceneg}
and~\ref{posPoS2} are, to our knowledge, the first results in the
literature to address this issue by investigating what can be achieved
in the partially synchronous setting
under assumptions only on the environment (e.g., on the set of
transactions issued at each timeslot) rather than on the specific
execution.

\section{The fully permissionless setting without resource restrictions }  \label{pswrr}

This section describes a basic version of our model, without
any notion of ``resources''   (which will be introduced in
Section~\ref{fp}).  The most important concept in this section is
``oracles,'' which we use to model various cryptographic primitives.
%
%
Some of the impossibility results presented in this paper hold
independent of the assumed cryptographic primitives, while others 
depend on which cryptographic primitives are assumed to exist.  

\subsection{The set of players and the means of communication}\label{ss:inputs}

\vspace{0.2cm} 
\noindent \textbf{The set of players}. We consider a potentially
infinite set of players $\mathcal{P}$. Each player $p\in \mathcal{P}$
is allocated a non-empty and potentially infinite set of
\emph{identifiers} $\mathtt{id}(p)$.  One can think of
$\mathtt{id}(p)$ as an arbitrarily large pre-generated set of public
keys for which~$p$ knows the corresponding private key; a player~$p$
can use its identifiers to create an 
arbitrarily large number of sybils.
Identifier sets are disjoint, meaning
$\mathtt{id}(p)\cap \mathtt{id}(p')=\emptyset$ when $p\neq p'$;
intuitively, no player knows the private keys that are held by other
players.

\vspace{0.2cm} 
\noindent \textbf{Permissionless entry and exit}. Time is divided into
discrete timeslots $t=1,2,\dots, d$, where
$d\in \mathbb{N}_{\geq 1} \cup \{ \infty \}$ is referred to as the
\emph{duration}. Each player may or may not be \emph{active} at each
timeslot.  A \emph{player allocation} is a function specifying
$\mathtt{id}(p)$ for each $p\in \mathcal{P}$ and the timeslots at
which each player is active.  Because protocols generally require
\emph{some} active players to achieve any non-trivial functionality,
unless explicitly stated otherwise we assume that a non-zero but finite number of players is active at
each timeslot $\leq d$.

\vspace{0.2cm} 
\noindent \textbf{Inputs}. Each player is given a finite set of
\emph{inputs}, which capture its knowledge at the beginning of the
execution of a protocol.  If a variable is specified as one of $p$'s
inputs, we refer to it as \emph{determined for} $p$, otherwise it is
\emph{undetermined for} $p$. If a variable is determined/undetermined
for all $p$ then we refer to it as \emph{determined/undetermined}. For
example, to model a permissionless environment with sybils, we
generally assume that the player set and the player allocation
are undetermined.  The duration~$d$, on the other hand, we typically
treat as determined.

\vspace{0.2cm} 
\noindent \textbf{Message sending}.  At each timeslot, each active
player may \emph{disseminate} a finite set of messages (each of finite
length) to the other players, and will receive a (possibly empty)
multiset of messages that have been disseminated by other players at
previous timeslots.  (Inactive players do not disseminate or receive
any messages.)  Each time $p$ disseminates a message at some timeslot,
we wish to stipulate that it is received at most once by each of the
other players.
Formally, for each dissemination of the message $m$,
i.e., for each triple $(p,m,t)$ such that $p$ disseminates $m$ at $t$,
and for each $p'\neq p$, we allow that $p'$ may \emph{receive that
  dissemination} at a single timeslot $t'>t$.  If $p'$ receives $k$
disseminations of~$m$ at $t'$, then $k$ is the multiplicity with which
$m$ appears in the multiset of messages received by~$p'$ at~$t'$.
If $p$ disseminates $m$ at $t$, then it is convenient to suppose that
$p$ regards that dissemination as received (by itself) at timeslot $t+1$ (if active). 

\vspace{0.1cm} We consider two of the most common models of message reliability,
the synchronous and partially synchronous
models. 
These models have to be adapted, however, to deal with the
fact that players may not be active at all timeslots.\footnote{As
  discussed in Section~\ref{ss:generous}, the fact that honest active
  players eventually receive all messages disseminated in the past
  (even after a long period of inactivity) only makes our impossibility
  results stronger. In the protocol used to prove our possibility
  results (Theorems~\ref{posPoS} and~\ref{posPoS2}), newly active
  players need to know only the set of already-finalized blocks.}

\noindent \textbf{Synchronous model.}  There exists some determined
$\Delta \in \mathbb{N}_{>0}$ such that if $p$ disseminates $m$ at $t$,
and if $p'\neq p$ is active at $t'\geq t+\Delta$, then $p'$ receives
that dissemination at a timeslot $\leq t'$.\footnote{We typically
  assume that $\Delta \ge 2$.  (With our dissemination model, if
  $\Delta=1$, active honest players would automatically proceed in lockstep,
  with each receiving the same set of disseminated messages in each
  timeslot.)}

\noindent \textbf{Partially synchronous model.}  There exists some determined $\Delta \in \mathbb{N}_{>0}$, and some undetermined timeslot called GST (with GST$\leq d$), such that if $p$ disseminates $m$ at $t$, and if $p'\neq p$ is active at $t'\geq \text{max} \{ \text{GST},  t \}  +\Delta$, then $p'$ receives that dissemination at a timeslot $\leq t'$.

\vspace{0.2cm} 
\noindent \textbf{Timing rules}.  It is useful to consider the notion
of a \emph{timing rule}, which specifies how long messages take to be
received by each player after they are disseminated.
Formally, a timing rule is
 a partial function $T$ that maps tuples of the form $(p,p',m,t)$
to timeslots. An execution of the protocol follows the timing rule $T$
if  the following holds for all players $p$ and $p'$:  We have that
$p'$ receives $m$ at $t'$  if and only if there exists some $p$ and $t<t'$ such
that $p$ disseminates the message $m$ at  $t$ and
$T(p,p',m,t)\downarrow =t'$.\footnote{Generally, we write
  $x\downarrow$ to indicate that the variable $x$ (e.g., the output of
  a partial function) is defined, and we
  write $x\uparrow$ to indicate that $x$ is undefined.} Of course, we restrict attention to
timing rules that are consistent with the communication model
(synchronous or partially synchronous).

\subsection{Players and oracles}\label{ss:st}

\vspace{0.2cm} 
\noindent \textbf{The oracles}. 
The description of a protocol will specify
 a (possibly empty) set of \emph{oracles}
$\mathcal{O}= \{ O_1,\dots, O_z \}$ that are generally used to
capture, within our information-theoretic model, idealized
cryptographic primitives (e.g., 
VDFs)
and (in Section~\ref{fp}) the use of external resources.\footnote{As
  discussed in Section~\ref{ss:generous}, allowing idealized oracles
  in place of concrete implementations only makes our impossibility
  results stronger. 
Our possibility results (Theorems~\ref{posPoS}
  and~\ref{posPoS2}) rely on only an idealized version of a standard secure
  digital signature scheme (as described in Section~\ref{ss:permitted}).}  At
each timeslot, each active player may send a set of \emph{queries} to
each oracle~$O_i$. Each query is a binary string of finite length. 

If $p$ sends a query $q$ to $O_i$ at timeslot $t$,
it receives a \emph{response} to that query from $O_i$ at a
timeslot $t' \ge t$.  Unlike message delays, which are determined by the
communication network and therefore unpredictable, the time at which a
querying player receives an oracle response is determined by the oracle
specification. (Intuitively, typical oracle computations are carried
out ``locally'' by a player.)
For some oracles, the response may be received instantly (i.e., 
$t'=t$). For others, such as those modeling VDFs, the response may be
received at a timeslot later than $t$.
Oracles are ``non-adaptive,'' in that their response (or
distribution over responses) to a query is independent of any previous
queries asked of them.
    
\vspace{0.2cm} 
\noindent \textbf{Oracles in the deterministic model}. To prove
impossibility results that hold for all possible sets of oracles, we
must be fully precise about the way in which oracles function.
Oracles may be given some inputs from amongst the determined inputs at
the start of the protocol execution (e.g., the duration
$d$). During the protocol execution, they are then sent oracle queries
by the players. 
In the ``deterministic model,'' we assume that oracles
are deterministic. This means that an oracle's inputs at the start of
the protocol execution determine a \emph{response function} $f$
that specifies, for every possible query~$q$ and timeslot~$t$, a
response~$r$ to~$q$ (if sent at~$t$ by some~$p$) and a delivery timeslot $t' \ge
t$ for that response; we write $f(q,t)=(r,t')$. 
We call the response \emph{undelivered} until received by $p$. 

%


\vspace{0.2cm} 
\noindent \textbf{Oracles in the probabilistic model}. In the
``probabilistic model,'' each oracle $O$ specifies, given its inputs,
a distribution over response functions. At the start of the protocol
execution, a response function is sampled from the distribution
specified by $O$, and that function then specifies the responses of
the oracle in that protocol execution.  (Intuitively, the oracle flips
all its coins ``up front.'')

\vspace{0.2cm} 
\noindent \textbf{Modeling players as state diagrams}. Each player is
modeled as a processor specified by a state diagram, for which the
set of states $St$ may be infinite. 
A player's state transition at a given timeslot can depend on the
oracle responses received at that time (in addition to the player's
current state, the messages received from other players, and the
number of the current timeslot).\footnote{Thus, in effect, players
  automatically know the number of the current timeslot (even after
  waking up from a long period of inactivity); this modeling choice
  serves to make our impossibility results as strong as possible.  For
  our positive results in partial synchrony (Theorems~\ref{posPoS}
  and~\ref{posPoS2}), where such knowledge arguably goes against the
  spirit of the setting, we use a protocol for which the behavior of
  honest players depends only on the set of messages received (and not
  on the number of the current timeslot).\label{foot:know_timeslot}}

One subtle point, which can be ignored on a first reading, is that
some oracles may respond ``instantaneously'' (i.e., $f(q,t)=(r,t')$
with $t'=t$) and players may wish to query such oracles repeatedly in
the same timeslot. (By contrast, messages disseminated by a player in
a timeslot~$t$ cannot arrive at their destinations prior to
timeslot~$t+1$.)
%
For example, $p$ may query an oracle that models a signature scheme (perhaps with specific properties, such as are satisfied by ephemeral keys \cite{chen2016algorand}), and then make further queries at the same timeslot depending
on the response. 

For this reason, we use state diagrams that specify a special subset
$St^{\ast}$ of the set $St$ of all states, indicating the states from
which the player will proceed to the next timeslot without further
oracle queries.  At the beginning of the protocol's execution, a
player $p$ begins in an \emph{initial state} in $St^{\ast}$, which is
determined by $\mathtt{id}(p)$ and by the other inputs given to $p$.
At each timeslot $t$ for which $p$ is active, $p$ begins in some state
$s^{\ast}\in St^{\ast}$ and receives a multiset of messages
$M$. Player $p$ then enters a state $s$ (which may or may not belong
to $St^{\ast}$).  In the deterministic model, $s$ is determined by
$s^{\ast}$, $M$, and~$t$. In the probabilistic model, $s$ is a random
variable with distribution determined by $s^{\ast}$, $M$,
and~$t$.
Player
$p$ then repeatedly carries out the following steps, until instructed
to proceed to the next timeslot:

\begin{enumerate} 

\item[(1)] Player $p$ receives  $R$, which is the set of all undelivered oracle responses (to queries sent by $p$) with delivery timeslots $\leq t$. 
\item[(2)] If $s\in St^{\ast}$, then $p$ disseminates a set of messages $M'$ and stops instructions for timeslot $t$.   The set of messages $M'$ is determined by $s$ and $R$. Player $p$ will begin its next active timeslot in state $s$. 
\item[(3)] If $ s\notin  St^{\ast}$, then $p$  sends a set of oracle queries $Q$ and enters a new state $s'$, where $Q$ and $s'$  are determined by $s$ and $R$. Set $s:=s'$ and return to (1).

\end{enumerate}


For certain state diagrams, the instructions above might not
terminate. In this case, $p$ disseminates no messages at timeslot $t$
and is regarded as inactive at all subsequent timeslots. If $p$ is
inactive at timeslot $t$, then it does not disseminate or receive any messages, does not send any
queries at $t$, and does not change state.

\vspace{0.2cm}
\noindent \textbf{Byzantine and honest players}. 
To ensure that our impossibility results are as strong
as possible, we focus primarily on a \emph{static adversary}.
(Section~\ref{longr} considers a weak type of dynamic adversary, as is
necessary to discuss long-range attacks.)
In the static adversary model, each player is either \emph{Byzantine}
or \emph{honest}, and an arbitrary and undetermined subset of the
players may be Byzantine. The difference between Byzantine and honest
players is that honest players must have the state diagram $\Sigma$
specified by the protocol, while Byzantine players may have arbitrary
state diagrams. To model a perfectly coordinated adversary, we also
allow that the instructions carried out by each Byzantine player can
depend on the messages and responses received by other Byzantine
players.  That is, if $p$ is Byzantine, the 
messages $p$ disseminates, the queries $p$ sends, and $p$'s
state transition at a timeslot~$t$ are a function not only of $p$'s
state and the multiset of messages and oracle responses received by
$p$ at $t$, but also of the corresponding values for all the other
Byzantine players.


\vspace{0.2cm} 
\noindent \textbf{Protocols and executions}. A \emph{protocol} is a
pair $(\Sigma,\mathcal{O})$, where $\Sigma$ is the state diagram of
honest players and $\mathcal{O}$ is a finite set of oracles. An
\emph{execution} of the protocol $(\Sigma,\mathcal{O})$ is a
specification of the set of players $\mathcal{P}$, the player
allocation, the state diagram of each player and their inputs, and the
following values for each player $p$ at each timeslot:
\begin{enumerate} 
\item[(i)]  $p$'s state at the beginning of the timeslot.  
\item[(ii)]  The multiset of messages received by $p$. 
\item[(iii)]  The sequence of sets of queries sent by $p$. 
\item[(iv)] The sequence of sets of responses received by $p$. 
\item[(v)]  The messages disseminated by $p$.  
\end{enumerate} 
As described above, messages disseminated by a player at the end of a
timeslot (item~(v)) can depend on the messages and all oracles
responses received during the timeslot (items~(ii) and~(iv)).  Oracle
queries sent during a timeslot can depend on the messages received in
that timeslot, as well as any oracle responses received previously during
the timeslot.  A player's (ultimate) state transition in a timeslot can be
interpreted as occurring simultaneously with item~(v).

\vspace{0.2cm} 
\noindent \textbf{Executions ``consistent with the setting''}.  Part
of the definition of a ``setting'' is the class of executions to be
considered; we say that such executions are ``consistent with the
setting.''  For example, if we are assuming the synchronous
communication model as part of the setting, an execution consistent
with the setting will, in particular, follow a timing rule that is
consistent with the synchronous communication model.

\subsection{Permitted messages}\label{ss:permitted}

To model certain cryptographic primitives, and to model resources later on, it
is convenient to restrict the set of messages and queries that a
player can send at any given timeslot. The approach we'll describe
here may initially seem a little heavy-handed, but in exchange it
gives a general methodology that can be used to model signature
schemes, 
VDFs, proof-of-work (PoW), proof-of-space
(PoSp), and proof-of-stake (PoS).

\vspace{0.2cm} 
\noindent \textbf{Message entries and types}. We assume that all
messages and queries are ordered tuples $(e_1,\dots,e_k)$, where $k$
may vary for different messages and queries, and where each entry
$e_i$ may be of a specific \emph{type}. This does not lose any
generality, as tuples may be of length 1 and
of 
\emph{general} type. The type of each entry in the tuple could be
specified explicitly, but to keep things simple we suppress type
information in our notation.

\vspace{0.2cm} 
\noindent \textbf{Permitted messages/queries}. At a given timeslot,
and for a given player $p$, each entry $e_i$ in a message or query
$(e_1,\dots,e_k)$ may or may not be \emph{permitted} for $p$. The
message/query is permitted for $p$ if and only if all entries are permitted for
$p$. The messages that $p$ disseminates at a given timeslot, and the
set of queries that $p$ sends, are those specified by the state
diagram of $p$ which are also permitted.  (Importantly, Byzantine
players cannot send oracle queries or messages that are not permitted
for them.)

In addition to the \emph{general} entry type (which is
permitted for all $p$), entries may be of type
\emph{signed} (used below to model a signature
scheme) or of a type $O_i$ corresponding to an oracle (intuitively,
representing a signature by the oracle~$O_i$).

\vspace{0.2cm}
\noindent \textbf{A simple model of a signature scheme}. 
To model 
a signature scheme, we suppose that an entry of signed type must be of
the form $(id,m)$, where $id$ is an identifier and $m$ is
arbitrary. In the \emph{authenticated} setting, the entry $e=(id,m)$
of signed type is permitted for $p$ at $t$ if and only if
$id\in \mathtt{id}(p)$ or $p$ has previously received a message which
contained the entry $e$ (in which case~$p$ is free to repeat it to
others).  We say that $e$ is $m$ ``signed by'' $id$.\footnote{Modeled
  this way, an idealized signature scheme requires no oracle,
only restrictions on the messages that players are permitted
to send.  (Intuitively, we grant idealized standard signatures ``native
status'' in our framework.)} In the
unauthenticated setting, entries of signed type are always
permitted.\footnote{The practical deployment of a
  signature scheme requires probabilistic processes for key
  generation. Following a traditional approach in distributed
  computing (see, e.g., Lynch~\cite{lynch1996distributed}), we
  treat a protocol that uses a signature scheme as deterministic if
  its state diagram and oracles are deterministic (in effect,
  ``fast-forwarding'' past any probabilistic processes that might
  have been required for the setup).}

\vspace{0.1cm} In general, the ``signed'' type is used when different
entries are initially permitted for different players.  Meanwhile, if
an entry $e$ is of type $O$, then $e$ is permitted for $p$ if $p$ has
previously received a message or response which contained the entry
$e$ of type $O$. (Intuitively, such an entry can be obtained directly
by querying the oracle~$O$, or can be repeated after hearing it from
others.)

\vspace{0.2cm} 
\noindent \textbf{Modeling a perfectly coordinated adversary}. If $p$
and $p'$ are both Byzantine, then the set of entries permitted for $p$
is modified to include those permitted for $p'$. (E.g., Byzantine
players might pool their private keys.)

\vspace{0,2cm} 
\noindent \textbf{Further terminology}. If a message is a tuple of
length 1, then the \emph{type of the message} is the type of its
single entry.

\subsection{Modeling VDFs and ephemeral keys} \label{modhash}

Many of the impossibility results in this paper
(Proposition~\ref{prop:noresources} and Theorems~\ref{fmt}, \ref{psm},
\ref{NTT}, \ref{ortheorem}, and \ref{niceneg}) apply no matter what oracles a protocol might use
(be they practically realizable or not).  Accordingly, the proofs of
these results refer to a protocol's oracles only in a ``generic'' way,
typically in the context of a simulation of honest players carried out
by Byzantine players.  The exception is Theorem~\ref{lrtheorem}, which
investigates how the choice of oracles affects a protocol's
vulnerability to long-range attacks.  In preparation for that result,
and to furnish additional examples of our oracle formalism in action,
we next show how to model via oracles two different cryptographic
primitives that are useful for repelling long-range attacks.
 
\vspace{0,2cm} 
\noindent \textbf{Modeling a VDF}. To model VDFs, we suppose queries
to the VDF oracle $O_v$ must be of the form $(m,k)$, where
$k\in \mathbb{N}$ and $m$ is arbitrary. If $p$ submits the query
$(m,k)$ at timeslot $t$, then it receives the response $(m,k)$ at
timeslot $t+k$, where the response is of type $O_v$ (in effect, signed
by the oracle).  A player can send a message that includes such a
type-$O_v$ entry only after receiving it as a response from the oracle
or in a message from another player.


\vspace{0,2cm} 
\noindent \textbf{Modeling ephemeral keys}.  For our purposes, we can
model (the signatures produced by) ephemeral keys using an oracle $O$
to which player $p$ may submit any finite number of requests of the
form $((id,m),t')$ at any timeslot~$t$, where the entry $(id,m)$ is of
signed type (and must be permitted for $p$). If $t'> t$, then $O$
sends $p$ the response $((id,m),t')$ at timeslot~$t$, which is an
entry of type $O$ (``signed by $O$,'' in effect). If $t'\leq t$, then
$O$ sends some predefined ``null'' response at~$t$ (intuitively, the
key for timeslot~$t'$ has already expired).

\subsection{Consensus is not possible in the fully permissionless setting without resource restrictions}
\label{2.5} 

We conclude this section by showing that the Byzantine Agreement
problem is unsolvable without some notion of ``resources'' to limit
the collective power of the Byzantine players.

 \vspace{0,2cm} 
 \noindent \textbf{Defining the Byzantine Agreement Problem (BA)}. We
 must modify the standard definition of the problem to accommodate
 players that may not be active at all timeslots.  Suppose $d=\infty$
 and that each player is given an input in $\{ 0,1 \}$. We refer to
 the latter input as the \emph{protocol} input for $p$, as
 distinguished from other \emph{parameter} inputs, such as the
 duration~$d$ and $\mathtt{id}(p)$. 
For simplicity, in this section we consider deterministic solutions to
the problem; probabilistic solutions are taken up in Section~\ref{ps2}.
We say that a deterministic
 protocol ``solves Byzantine Agreement'' if, for every protocol
 execution consistent with the setting, there exists a timeslot
 $t^{\ast}$ for which the following three conditions are satisfied:

\textbf{Termination}. All honest players active at any timeslot $\geq t^{\ast}$ terminate and give an output in $\{ 0,1 \}$. 

\textbf{Agreement}. All honest players that output give the same output. 

\textbf{Validity}.  If all players are given the same input
$i$,  then every honest player that outputs gives  $i$ as their
output.

 \vspace{0,2cm} 
 \noindent \textbf{Solving BA is impossible in this setting.}  Because
 we have imposed no bounds on the ``combined power'' of Byzantine
 players, no protocol (even with the use of oracles) can solve
 Byzantine Agreement in the model that we have described thus
 far.

\begin{proposition}\label{prop:noresources}
Consider the fully permissionless setting (with oracles but without
resources) and suppose that the player set~$\mathcal{P}$ is
finite. For every $\rho > 0$, no protocol solves Byzantine
agreement when up to a $\rho$ fraction of the players may be
Byzantine. This result holds even in the synchronous setting, with a
known player set, and with all players active at all
timeslots.
\end{proposition}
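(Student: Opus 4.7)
The plan is a classical indistinguishability argument leveraging the unbounded identifier sets available to Byzantine players: with no resource restriction in place, a single Byzantine player can allocate distinct identifiers to an arbitrary collection of ``virtual'' participants and drive each of them through the honest state diagram $\Sigma$ in parallel, so that a simulated participant is information-theoretically indistinguishable (to the honest players) from a real one. This collides with validity once the simulation is set up symmetrically in two scenarios.

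Fix any candidate protocol $\Pi = (\Sigma, \mathcal{O})$ and any $\rho > 0$; choose an integer $N$ with $1/(N+1) \le \rho$. I first construct a reference execution $\Xi$ consisting of $2N$ honest players $V_0 \cup V_1$ (with $|V_0| = |V_1| = N$), each active at every timeslot and using one distinct identifier, inputs $0$ on $V_0$ and $1$ on $V_1$, and some fixed synchronous timing rule. Termination and agreement then yield a common output $y \in \{0,1\}$ for all honest players in $\Xi$.

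I then reinterpret $\Xi$ as two ``real'' scenarios $\mathcal{S}_0$ and $\mathcal{S}_1$, each with $N+1$ actual players of whom exactly one is Byzantine. In $\mathcal{S}_0$, the actual player set is $V_0 \cup \{p^\ast\}$; the $V_0$ players remain honest with input $0$; and $p^\ast$ is Byzantine with input $0$, allocated an identifier set containing exactly the identifiers used by $V_1$ in $\Xi$, and internally running in parallel the $N$ honest state diagrams of the simulated $V_1$ players, disseminating precisely the messages and submitting precisely the oracle queries that $V_1$ did. Because oracle response functions depend only on the query and the timeslot (not on the querier's identity), $p^\ast$ receives the same oracle responses as $V_1$ did; and because the messages, timings, identifiers, and inputs of each $V_0$ player match those in $\Xi$, its state-diagram trajectory---and therefore its output---in $\mathcal{S}_0$ is identical to its trajectory in $\Xi$. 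Hence each $V_0$ player in $\mathcal{S}_0$ outputs $y$. But every player in $\mathcal{S}_0$ has input $0$ and the Byzantine fraction $1/(N+1) \le \rho$ is within budget, so validity forces the common output to be $0$: $y = 0$. A symmetric scenario $\mathcal{S}_1$, with $V_1$ honest of input $1$ and a Byzantine $q^\ast$ (input $1$) simulating $V_0$, forces $y = 1$, contradicting $\mathcal{S}_0$.

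The main obstacle is the bookkeeping needed to verify that both reinterpretations are bona fide executions consistent with the setting: (i) the disjointness condition $\mathtt{id}(p) \cap \mathtt{id}(p') = \emptyset$ must be preserved, which is immediate since we simply transfer ownership of $V_1$'s identifiers to $p^\ast$ (and the model permits $\mathtt{id}(p^\ast)$ to be arbitrarily large); (ii) every message and oracle query that $p^\ast$ sends must be \emph{permitted} in the sense of Section~\ref{ss:permitted}, which holds because $p^\ast$ owns every identifier underneath each signed-type entry it produces and can repeat any type-$O$ entry after its own simulated query returns it; and (iii) the timing rule in $\mathcal{S}_0$ must be consistent with synchronous communication while matching $\Xi$'s delivery pattern on messages to $V_0$, which is always possible since the set of disseminated messages is the same. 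Once these are in place, the validity-driven contradiction between $\mathcal{S}_0$ and $\mathcal{S}_1$ is immediate.
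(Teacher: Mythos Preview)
Your core idea---have one Byzantine player simulate many honest ones under sybil identifiers, then collide validity in two symmetric scenarios---is the same as the paper's, and your argument is correct when the player set is \emph{undetermined}. But the proposition explicitly claims the impossibility holds ``even \ldots\ with a known player set,'' and your construction does not cover that case. Your reference execution $\Xi$ has $2N$ players while $\mathcal{S}_0$ and $\mathcal{S}_1$ each have $N+1$. If $\mathcal{P}$ is determined (handed to every player as an input), then a $V_0$ player's inputs in $\Xi$ include the datum $\mathcal{P}=V_0\cup V_1$, whereas in $\mathcal{S}_0$ they include $\mathcal{P}=V_0\cup\{p^\ast\}$; these differ, so your indistinguishability claim that the ``inputs of each $V_0$ player match those in $\Xi$'' fails. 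Nothing prevents the honest state diagram from branching on~$|\mathcal{P}|$, so you cannot conclude that $V_0$'s trajectory in $\mathcal{S}_0$ replicates its trajectory in $\Xi$.

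The paper's proof sidesteps this by keeping the player set $\{p_0,\dots,p_{n-1}\}$ fixed across all three of its executions $\mathtt{E}_1,\mathtt{E}_2,\mathtt{E}_3$ and varying only the (undetermined) identifier allocation and the Byzantine set. In place of your all-honest $2N$-player reference, it uses a hybrid execution $\mathtt{E}_2$---still $n$ players, one of them Byzantine---that is indistinguishable from $\mathtt{E}_1$ for $p_1$ and from $\mathtt{E}_3$ for $p_2,\dots,p_{n-1}$, and then invokes agreement inside $\mathtt{E}_2$ to bridge the two endpoints. Your symmetric $\Xi$-based route is cleaner when it applies, but the three-execution hybrid (with a fixed player set throughout) is precisely what buys the ``known player set'' clause.
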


\begin{proof}
We give a proof for the deterministic case, but the argument extends
easily to probabilistic protocols and the probabilistic version of
Byzantine Agreement defined in Section~\ref{instance}; see also footnote~\ref{foot:noresources}.

Towards a contradiction, suppose that such a protocol
exists. Fix a set of parameter inputs with $\Delta=1$. Choose $n\geq 3$ such that $2/n < \rho$
and consider a set of players  $\mathcal{P}:=\{ p_0,p_1,\dots,p_{n-1}
\}$.
Let $\mathtt{id}_1,\mathtt{id}_2,\ldots,\mathtt{id}_{n-1},
\mathtt{id}_1',\mathtt{id}_2',\ldots,\mathtt{id}_{n-1}'$ denote $2n-2$
disjoint sets of identifiers. We consider three possible executions of
the given protocol:
\begin{enumerate}

\item Execution $\mathtt{E}_1$: 
\begin{itemize}

\item Player~$p_0$ is Byzantine, players~$p_1,p_2,\ldots,p_{n-1}$ are honest.  

\item All players are active in all timeslots.

\item For each honest player~$p_i$, $\mathtt{id}(p_i)=\mathtt{id}_i$.

\item All players receive an input of~0.

\item For~$p_0$, $\mathtt{id}(p_0)= \cup_{i=1}^{n-1} \mathtt{id}_i'$.

\item $p_0$ ignores its input and simulates the behavior (including
  messages sent and oracle queries made) of~$n-1$
  honest (and fictitious) players~$p'_1,p'_2,\ldots,p'_{n-1}$ that all
  have an input of~1 and in which
  $\mathtt{id}(p_i')=\mathtt{id}_i'$.\footnote{Player~$p_0$ is in a
    position to carry out this simulation because neither honest
    players nor oracles know whether the player to which an identifier belongs
 is Byzantine, and because there is no limit
    on the number of messages or oracle queries and responses that a
    player can send
    or receive in a single timeslot.}

\end{itemize}

\item Execution $\mathtt{E}_2$: 

\begin{itemize}

\item Player~$p_0$ is Byzantine,
  players~$p_1,p_2,\ldots,p_{n-1}$ are honest. 

\item All players are active in all timeslots.

\item For~$p_1$, $\mathtt{id}(p_1)=\mathtt{id}_1$. For every other
honest player~$p_i$, $\mathtt{id}(p_i)=\mathtt{id}_i'$.

\item Player~$p_1$ receives
  an input of~0 and all other  players receive an input of~1.

\item For~$p_0$, $\mathtt{id}(p_0)= \mathtt{id}_1' \cup \bigcup_{i=2}^{n-1} \mathtt{id}_i$.

\item $p_0$ simulates~$n-1$ honest players~$p'_1,p'_2,\ldots,p'_{n-1}$
under the conditions that~$p'_1$ has an input of~1,
$p_2',p_3',\ldots,p_{n-1}'$ each has an input of~0, 
$\mathtt{id}(p'_1)=\mathtt{id}_1'$, and
$\mathtt{id}(p_i')=\mathtt{id}_i$ for each~$i=2,3,\ldots,n-1$.

\end{itemize}

\item Execution $\mathtt{E}_3$: 

\begin{itemize}

\item Players~$p_0$ and~$p_1$ are Byzantine,
  players~$p_2,p_3,\ldots,p_{n-1}$ are honest. 

\item All players are active in all timeslots.

\item All players receive an input of~1.

\item Identifier assignments for all players are the same as in
  execution $\mathtt{E}_2$.

\item $p_0$ and $p_1$ behave identically as in
  execution $\mathtt{E}_2$.  (In particular, player~$p_1$ acts as if
  its input is~0.)

\end{itemize}

\end{enumerate}
The following observations complete the contradiction:
\begin{enumerate}

\item By validity and termination, in $\mathtt{E}_1$, all honest
  players must eventually output~0.

\item By validity and termination, in $\mathtt{E}_3$, all honest
  players must eventually output~1.

\item Because the executions $\mathtt{E}_1$ and $\mathtt{E}_2$ are
  indistinguishable for~$p_1$---that is, $p_1$ receives the same
  inputs and the same messages and oracle responses at each timeslot
  in both executions---$p_1$ must eventually output~0 in $\mathtt{E}_2$.

\item By agreement and termination, in $\mathtt{E}_2$, all honest
  players must eventually output~0.

\item Because the executions $\mathtt{E}_2$ and $\mathtt{E}_3$ are
  indistinguishable for all the players~$p_2,p_3,\ldots,p_{n-1}$, they
  must output the same value in both executions.
\end{enumerate}
\end{proof}

\noindent \textbf{Discussion}. The proof of
Proposition~\ref{prop:noresources} is driven by the fact that
identifier sets are undetermined and of unbounded
size (i.e., that Byzantine players can employ
sybils).\footnote{This aspect of the model will not feature in any of
  our other impossibility results as, from Section~\ref{fp} onward,
  protocols will be allowed to make use of ``resources'' that,
  in many scenarios, suffice to nullify the sybil challenge.}
In particular, the unbounded size assumption is crucial:
The positive result of Khanchandani and
Wattenhofer~\cite{khanchandani2021byzantine} shows that
Proposition~\ref{prop:noresources} no longer holds with a finite
undetermined player set and undetermined player allocation when $|\mathtt{id}(p)|=1$
for every player~$p$.

The statement of Proposition~\ref{prop:noresources} assumes that the
player set is finite. A similar result holds quite trivially in the
case that the player set is infinite after dropping the (no longer
meaningful) assumption that at most a $\rho$ fraction of the players
are Byzantine.
Consider two infinite sets of (always-active) players $A$ and $B$. Let
players in $A$ have input 0, while those in $B$ have input 1. If all
players are honest, then all must terminate and give a common
output. Without loss of generality, suppose they all output 0 in this
case.  Then validity is violated in the case that all players receive input 1, but the players in $A$
are Byzantine and behave exactly like honest players with
input~0.



\vspace{0.2cm} 

Proposition~\ref{prop:noresources} shows that, to make progress, we
must somehow bound the combined power of Byzantine players, and in a
way different from the traditional approach of bounding the fraction
of players that are Byzantine.  How should we phrase such a bound?
The proof of Proposition~\ref{prop:noresources} fundamentally depends
on the ability of Byzantine players to costlessly generate an
arbitrarily large number of sybils, which in turn enables the
simultaneous simulation of an arbitrarily large number of honest
players.  The bound must therefore concern something that is scarce
and costly to acquire---something that prevents a Byzantine player
from carrying out a simulation of many honest players.  These ideas
bring us to the notion of {\em resources} that limit the number of
times that a player can query an oracle in a single timeslot; these
are discussed next.


\section{External resources} \label{fp} 
  

This section introduces {\em external} resources, which typically
represent the hardware (such as ASICs or memory chips) that is
used by validators in proof-of-work (PoW) or proof-of-space (PoSp)
protocols.  Sections~\ref{belch} and \ref{pdr}  consider on-chain resources such as
staked cryptocurrency.


\subsection{External resources and permitters}\label{ss:external}
      
\emph{Permitter oracles}, or simply {\em permitters}, are
required for modeling external resources (but not for on-chain
resources).  In the fully permissionless setting (with resource
restrictions), a protocol may use a finite set of permitters.
These are listed as part of the protocol 
amongst the oracles in~$\mathcal{O}$, but have some distinguished
features. Like other oracles, permitters may be deterministic or
probabilistic.
  
\vspace{0,2cm} 
\noindent \textbf{The resource allocation}. For each execution of the
protocol, and for each permitter oracle $O$, we are given a
corresponding \emph{resource allocation}, denoted $\mathcal{R}^O$.  We
assume that $\mathcal{R}^O$ is undetermined, as befits an ``external''
resource.  The resource allocation can be thought of as assigning
each player some amount of the external resource at each
timeslot.  That is, for all $p\in \mathcal{P}$ and $t\leq d$, we have
$\mathcal{R}^O(p,t)\in \mathbb{N}$.  We refer to $\mathcal{R}^O(p,t)$
as $p$'s {\em resource balance} at $t$.  Because resource balances are
unknown to a protocol, an inactive player might as well have a zero
resource balance: $\mathcal{R}^O(p,t)=0$ whenever $p$ is not active at
$t$.  (For example, a protocol can't know whether a player owns no
ASICs, or owns a large number of ASICs but currently has them all turned
off.)  For each $t$, we also define
$\mathcal{R}^O(t):= \sum_p \mathcal{R}^O(p,t)$.  

\vspace{0,2cm} 
\noindent \textbf{Restricting the adversary}. An arbitrary value
$R^O_{\text{max}}$ is given as a protocol input. This value is
determined, but the protocol must function for any given value of
$R^O_{\text{max}}\geq 1$.\footnote{Our assumption that the (possibly
  very loose) upper bound $R^O_{\text{max}}$ on the total resource
  usage is given as an input reflects the fact that protocols such as
  Bitcoin are not expected to function properly for arbitrarily
  changing total hashrates. For example, the Bitcoin protocol need not
  be secure (even in the synchronous setting) when the total hashrate
  can increase extremely quickly. (One could alternatively bound the
  maximum rate of change of $\mathcal{R}^O(t)$, but we'll work here
  with the simpler assumption.)
} Let $B$ denote the set of Byzantine players and define
$\mathcal{R}^O_B(t):= \sum_{p\in B} \mathcal{R}^O(p,t)$.  For
$\rho \in [0,1]$, we say that $\mathcal{R}^O$ is {\em $\rho$-bounded}
if the following conditions are satisfied for all $t\leq d$:
  \begin{itemize} 
  \item   $\mathcal{R}^O(t)\in [1, R^O_{\text{max}}]$. 
  \item $\mathcal{R}^O_B(t)/\mathcal{R}^O(t) \leq \rho$.
  \end{itemize}  
The smaller the value of~$\rho$, the more severe the restriction on
the combined ``power'' of the Byzantine players.  
When $\rho < 1$, the assumption that
$\mathcal{R}^O(t) \ge 1$ for all $t\leq d$ avoids the degenerate case
in which there are no active honest players. If all resource allocations corresponding to permitters in~$\mathcal{O}$ are $\rho$-bounded, then we say the adversary is \emph{externally} $\rho$-\emph{bounded}.

Generally, protocols in the fully permissionless setting will be
expected to function for arbitrary and undetermined resource
allocations, so long as they are externally $\rho$-bounded for a
certain specified $\rho$.\footnote{Further restrictions on resource
  allocations can be imposed when convenient, as when we model the
  (proof-of-space) Chia protocol in Appendix~\ref{app:chia}.}

    
\vspace{0,2cm} 
\noindent \textbf{Permitter oracles}. The difference between permitter
oracles and other oracles is that the queries that a player $p$ can
send to a permitter oracle $O$ at timeslot $t$ depend on
$\mathcal{R}^O(p,t)$.  If a player $p$ sends a query to the permitter
oracle $O$ at timeslot $t$, the query must be of the form $(b,\sigma)$ such
that $b\in \mathbb{N}$ and $b\leq \mathcal{R}^O(p,t)$. Importantly,
this constraint applies also to Byzantine players.

A player may make multiple queries~$(b_1,\sigma_1),\ldots,(b_k,\sigma_k)$ to a
permitter in a single timeslot~$t$---possibly across multiple
iterations within the timeslot, as described in
Section~\ref{ss:st}---subject to the following constraints.  With a
{\em single-use} permitter---the appropriate version for modeling the
Bitcoin protocol (Section~\ref{ss:bitcoin})---these queries are
restricted to satisfy $\sum_{i=1}^k b_i \le \mathcal{R}^O(p,t)$.
(Interpreting the~$b_i$'s as hashrates devoted to the queries, this
constraint asserts that none of a player's overall hashrate can be
``reused.'')  With a {\em multi-use} permitter---the more convenient
version for modeling the Chia protocol in
Appendix~\ref{app:chia}---the only restriction is
that~$b_i \le \mathcal{R}^O(p,t)$ for each~$i=1,2,\ldots,k$.
(Intuitively, space devoted to storing lookup tables can be reused
across different ``challenges.'')




\subsection{Modeling PoW in Bitcoin}\label{ss:bitcoin}

We next show how to model proof-of-work, as used in the Bitcoin
protocol, using our permitter formalism.  The basic idea is that, at
each timeslot, a player should be able to request permission to
publish a certain block, with the permitter response depending on the
player's resource balance (i.e., hashrate).  To integrate properly
with the ``difficulty adjustment'' algorithm used by the Bitcoin
protocol to regulate the average rate of block production (even as the
total hashrate devoted to the protocol fluctuates), we define the
permitter oracle so that its responses can have varying ``quality.''
Each PoW oracle response will be a 256-bit string~$\tau$, and we can
regard the quality of $\tau$ as the number of 0s that it begins with.



\vspace{0.2cm} 
\noindent \textbf{An example}. As an example, suppose that
$\mathcal{R}^O(p,t)=5$. Then $p$ might send a single query
$(5,\sigma)$ at timeslot $t$, and this should be thought of as a
request for a PoW for $\sigma$ (which could represent a block
proposal, for example).  If $p$ does not get the response that it's
looking for, it might,
at a later timeslot $t'$ with $\mathcal{R}^O(p,t')=5$, 
send another query $(5,\sigma)$ to the PoW oracle.

\vspace{0.2cm} 
\noindent \textbf{Formal details}.  We consider a single-use
permitter. All queries to the permitter must be of the form
$(b,\sigma)$, where $\sigma$ is any finite string.  As already
stipulated above, the fact that $O$ is a single-use permitter means
that $p$ can send the queries
$(b_1,\sigma_1),\dots (b_k,\sigma_k)$ at the same timeslot $t$ only if
$\sum_{i=1}^k b_i\leq \mathcal{R}^O(p,t)$.

\vspace{0.1cm} 
\noindent \textbf{Sampling the response function}. At the beginning of
the protocol execution, we choose a response function $f $, as
follows. For each triple $(b,\sigma,t)$, we independently sample
$b$-many 256-bit strings uniformly at random and let $\tau$ be the
lexicographically smallest of these sampled strings.
We define $f((b,\sigma),t)$ to be $(r,t)$, where $r:=(\sigma,\tau)$ is
an entry of type $O$.\footnote{Recall from Section \ref{pswrr} that
  players cannot send messages with a specific entry of type $O$ until
  they have received a message or a response including that same entry
  of type $O$.  For example, a Byzantine player cannot acquire an
  original PoW without querying the PoW oracle directly.}
%
That is, if $p$ submits the query $(b,\sigma)$ at timeslot $t$, it
receives the above type-$O$ response $r=(\sigma,\tau)$ at timeslot
$t$.


\section{Blockchain protocols and stake} \label{belch} 
  
For an external resource (Section~\ref{fp}), players' resource
balances evolve according to an exogenous function $\mathcal{R}^O$
that does not depend on the execution of the protocol under
consideration.  On-chain resources (such as staked cryptocurrency)
evolve in lockstep with a protocol's execution, and as such require a
more sophisticated model.  This section focuses on the specific case
of stake, which is relevant, in particular, to the fully
permissionless setting (with resource restrictions)
defined in Section~\ref{ps2}.
More general protocol-defined resources, which are defined in Section~\ref{pdr},
are not relevant to the fully permissionless setting but play an
important role in the quasi-permissionless setting studied in
Sections~\ref{defpermyay}--\ref{addedsec}.

%

\subsection{Stake and transaction confirmation} \label{bps} 

On-chain resources, and stake in particular, make sense in the context
of blockchain protocols, meaning protocols that \emph{confirm} certain
messages such as \emph{transactions}.\footnote{Many blockchain protocols
decide on a total ordering on the set of confirmed transactions. 
The general form of ``payment system'' considered here also captures,
for example, DAG-based (``directed acyclic graph'') protocols that do
not specify such a total ordering (e.g., \cite{sompolinsky2016spectre}).}
  
 \vspace{0.2cm} 
 \noindent \textbf{The initial stake distribution}. Players are
 allocated an \emph{initial stake distribution}, denoted $S_0$, as
 input.\footnote{``Stake'' could be interpreted as the cryptocurrency
   balance of an account, or more restrictively as the amount of an
   account's cryptocurrency that is held in escrow in a designated
   staking contract.  Except where necessary, we won't distinguish
   between these two interpretations.\label{foot:stake}}  This distribution allocates
a positive integer amount of stake to each of a 
finite number of identifiers,
 and can be thought of as chosen by an adversary, subject to any
 constraints imposed on the fraction of stake controlled by Byzantine
 players.
  
\vspace{0.2cm} 
\noindent \textbf{The environment}.  For each execution of the
blockchain protocol, there exists an undetermined \emph{environment},
denoted $\mathtt{En}$, which sends messages of \emph{transaction type}
to players---messages with a single entry, which is of
\emph{transaction type}. An entry/message $\mathtt{tr}$ of transaction
type, henceforth referred to as a \emph{transaction}, is not permitted
for $p$ until received by $p$ (directly from the environment, or from
another player).
 
If $\mathtt{En}$ sends $\mathtt{tr}$ to $p$ at timeslot $t$, then $p$
\emph{receives} $\mathtt{tr}$ at $t$ as a member of its multiset of
messages received at that timeslot. Formally, the environment
$\mathtt{En}$ is simply a set of triples of the form
$(p,\mathtt{tr},t)$ such that $p$ is active at $t$. We stipulate that,
if $(p,\mathtt{tr},t)\in \mathtt{En}$, then $p$ receives the
transaction $\mathtt{tr}$ at $t$, in addition to the other
disseminations that it receives at $t$. We assume that, for each
$p\in \mathcal{P}$ and each $t$, there exist at most finitely many
triples $(p,\mathtt{tr},t)$ in $\mathtt{En}$.

One may think of the environment as being chosen by an adversary,
subject to constraints that are detailed below.

\vspace{0.2cm} 
\noindent \textbf{The stake allocation function}.  We assume some
fixed notion of validity for transactions: Any set of transactions
$\mathtt{T}$ may or may not be \emph{valid} relative to $S_0$.  We say
that $\mathtt{tr}$ is {\em valid relative to $S_0$ and~$\mathtt{T}$} if
$\mathtt{T}\cup \{ \mathtt{tr} \}$ is valid relative to $S_0$.  If
$\mathtt{T}$ is a set of transactions that is valid relative to $S_0$,
we let $\mathtt{S}(S_0,\mathtt{T}, id)$ ($\in \mathbb{N}$) be the
stake owned by identifier $id$ after execution of the transactions in
$\mathtt{T}$. It will also be notationally convenient to let
$\mathtt{S}(S_0,\mathtt{T})$ denote the function which on input $id$
gives output $\mathtt{S}(S_0,\mathtt{T}, id)$.%
\footnote{We are
  assuming that the stake distribution depends only on the subset of
  transactions that have been processed, and not on the order in which
  they were processed. (Similarly, validity is assumed to be a
  property of an unordered set of transactions, not of an ordering of
  transactions.)  This property holds for payment systems, but does
  not generally hold in Turing-complete blockchain protocols. For our
  impossibility results, this restriction on the possible transactions
  only makes them stronger.  Theorem~\ref{posPoS}, a positive result,
  holds more generally when the transaction order matters. The role of
  this restriction in Theorem~\ref{posPoS2} is discussed at length in
  Sections~\ref{addedsec} and~\ref{thirteen}.}

We assume that stake can be transferred and that ``negative balances''
are not allowed.\footnote{If stake represents cryptocurrency in a
  designated staking contract (see footnote~\ref{foot:stake}), then we are
  additionally assuming that each account is free to stake or unstake
  any amount of the cryptocurrency owned by that account.}
Formally,
suppose $\mathtt{T}$ is a valid set of transactions relative to $S_0$.
Suppose 
$\mathtt{S}(S_0,\mathtt{T},id_1)=x_1$ and
$\mathtt{S}(S_0,\mathtt{T},id_2)=x_2$.
Then, for each
$x\in \mathbb{N}$ with $x\leq x_1$, there exists a transaction
$\mathtt{tr}$ that is valid relative to $\mathtt{T}$ and $S_0$ and
such that
$ \mathtt{S}(S_0,\mathtt{T} \cup \{ \mathtt{tr} \},id_1)=x_1-x$,
$ \mathtt{S}(S_0,\mathtt{T} \cup \{ \mathtt{tr} \},id_2)=x_2+x$, and  $\mathtt{S}(S_0,\mathtt{T} \cup \{ \mathtt{tr} \},id)=  \mathtt{S}(S_0,\mathtt{T} ,id)$ for $id\notin \{ id_1, id_2 \}$. If
$\mathtt{tr}$ is as specified and $\mathtt{T}'$ is a set of
transactions which is valid relative to $S_0$ with
$\mathtt{S}(S_0,\mathtt{T}',id_1)<x$, then
$\mathtt{T}' \cup \{ \mathtt{tr} \}$ is not a valid set of
transactions.

We say transactions $\mathtt{tr}_1$ and $\mathtt{tr}_2$ are
\emph{conflicting} relative to $\mathtt{T}$ and $S_0$ if
$\mathtt{T}\cup \{ \mathtt{tr}_1 \}$ and
$\mathtt{T}\cup \{ \mathtt{tr}_2 \}$ are both valid relative to $S_0$,
but $\mathtt{T}\cup \{ \mathtt{tr}_1 \} \cup \{ \mathtt{tr}_2 \}$ is
not valid relative to $S_0$.
   
\vspace{0.2cm}
\noindent \textbf{Confirmed transactions}. Each \emph{blockchain protocol}
specifies a \emph{confirmation rule} $\mathcal{C}$, which
is a function that takes as input an arbitrary set of messages $M$ and
returns a subset $\mathtt{T} \subseteq M$ of the transactions among
those messages; the confirmed transactions $\mathtt{T}$
must be valid with respect to the initial stake distribution~$S_0$.
At timeslot $t$, if $M$ is
the set of all messages received by an honest player~$p$ at timeslots
$\leq t$, then $p$ regards the transactions in $\mathcal{C}(M)$ as
\emph{confirmed}.\footnote{In principle, a confirmation rule could
  depend on the timeslots at which an honest player 
  received its messages---necessarily timeslots at which the player
  was active---in addition to the messages themselves.  This
  distinction plays an interesting role in our treatment of long-range
  attacks in Section~\ref{longr}. (All other results in this paper
  hold whether or not a confirmation rule is allowed to depend on the
  timeslots in which messages were received.)\label{foot:conf_rule}}
For a set of messages~$M$, 
define $\mathtt{S}(S_0,M,id):=\mathtt{S}(S_0,\mathcal{C}(M),id)$ and
$\mathtt{S}(S_0,M):=\mathtt{S}(S_0,\mathcal{C}(M))$.

The confirmation rule $\mathcal{C}$ is separate from the state
diagram~$\Sigma$ (i.e., the protocol instructions), and one may
therefore consider different confirmation rules for the same protocol
instructions.  In the fully permissionless setting, a blockchain
protocol is thus a triple $(\Sigma, \mathcal{O},\mathcal{C})$. For
example, the Bitcoin protocol instructs players to extend the longest
chain.
One confirmation rule might extract the ``longest chain'' from $M$ and
regard a transaction as confirmed if it belongs to a block that is at
least 6 deep in that chain. Another rule might regard a transaction as
confirmed once it belongs to a block that is at least 10 deep in the
longest chain.

\vspace{0.2cm} 
\noindent \textbf{Defining $\rho$-bounded adversaries.}  We say that an
execution of a  protocol is {\em $\rho$-bounded} if:
\begin{itemize}

\item The adversary is externally $\rho$-bounded (in the sense of Section~\ref{ss:external}).


\item Among active players, Byzantine players never control more than
  a $\rho$ fraction of the stake.  Formally,
for every honest player $p$ at timeslot $t$,
if~$\mathtt{T}$ is the set of
transactions that are confirmed for $p$ at $t$ in this execution,
then at most a $\rho$ fraction of the stake allocated to 
players active at $t$ by $\mathtt{S}(S_0,\mathtt{T})$ is allocated to
Byzantine players.

\end{itemize}
 
\noindent When we say that ``the adversary is $\rho$-bounded,'' we mean that we
restrict attention to $\rho$-bounded executions.  In a slight abuse of
language, we say that a protocol is {\em $\rho$-resilient} when it
satisfies a certain claimed functionality under the assumption that
the adversary is $\rho$-bounded. So, for example, we might say that a
protocol for solving BA is $\rho$-resilient if it solves BA for
$\rho$-bounded adversaries.\footnote{The restriction to executions
  that are $\rho$-bounded may seem natural---what could
  one hope to
  prove for an execution of a proof-of-stake protocol in which Byzantine players amass enough stake
  to exceed a critical threshold?  But it is arguably preferable and
  more interpretable to instead impose a protocol-independent restriction on
  the {\em environment}; this point is explored at length in
  Section~\ref{addedsec}.}



\subsection{Liveness and Consistency}\label{ss:liveness}

With our notions of transactions and confirmation in place, we can
proceed to defining \emph{liveness} and \emph{consistency} for blockchain
protocols.
These definitions are complicated by the fact that
probabilistic protocols such as Bitcoin and
Algorand can offer consistency guarantees only for protocol executions
of finite duration.
For this reason, our liveness definition requires
that transactions are confirmed sufficiently quickly (i.e., bounded by
a parameter sublinear in the duration).
Our definitions must also accommodate players that are not always active.
   
\vspace{0.2cm} 
\noindent \textbf{Defining liveness}. Blockchain protocols are run
relative to a determined input $\epsilon \in [0,1)$, which is called the
\emph{security parameter}.
(In the deterministic model, $\epsilon=0$.)
Liveness requires the existence of a value $\ell$, which can depend on
the determined inputs (such as~$\epsilon$ and~$d$) but must be
sublinear in~$d$ (i.e., with $\ell=o(d)$ as $d \rightarrow \infty$), 
such that with probability at least $1-\epsilon$, the following
condition is satisfied for every~$t$.
Suppose that:
\begin{itemize}

  \item $t^{\ast}:=\text{max}\{ \text{GST}, t \} + \ell$ is at most $d$.
  \item
    The transaction $\mathtt{tr}$ is 
received by an honest player at some timeslot $\leq t$,
and $\mathtt{tr}$ is
     valid for all honest players through timeslot~$t^*$. (Formally,
 for
     every honest $p$ and every $t'\in [t,t^{\ast}]$, if $\mathtt{T}$
     is the set of transactions confirmed for $p$ at $t'$, then
     $\mathtt{T}\cup \{ \mathtt{tr} \}$ is a valid set of transactions
     relative to $S_0$.)
   \end{itemize} 
   \noindent
     Then, $\mathtt{tr}$ is confirmed for all
   honest players active at any timeslot after $t^{\ast}$, and is
   confirmed for those players at the first timeslot $\geq t^{\ast}$
   at which they are active.\footnote{This liveness requirement would  become
     vacuous if~$\ell$ were allowed to depend linearly on~$d$.
We work with the weak liveness requirement considered here
(with~$\ell$ required
only to be sublinear in~$d$) in order to make our impossibility
results as strong as possible.}
   
    \vspace{0.2cm} 
    \noindent \textbf{Defining consistency}. Consistency requires that, with
    probability at least $1-\epsilon$, the following two conditions
    always hold:
\begin{itemize} 
\item[(i)] No roll-backs: If $\mathtt{tr}$ is confirmed for honest $p$ at $t$,
  then $\mathtt{tr}$ is confirmed for $p$ at all $t'\geq t$. 
\item[(ii)] Confirmed transactions never conflict: If $\mathtt{T}$ and $\mathtt{T}'$ are the sets of transactions confirmed for honest $p$ and $p'$ at $t$ and $t'$, respectively, then $\mathtt{T} \cup \mathtt{T}'$ is a valid set of transactions relative to $S_0$. 
\end{itemize}

\section{The fully permissionless setting (with resource restrictions)} \label{ps2}

\subsection{Defining the setting and the main result}\label{ss:fp}

The fully permissionless setting (with
resource restrictions) is the same as the 
setting without resources
described in Section \ref{pswrr}, but with the following changes:
  
  \begin{enumerate} 
  \item[(a)] Protocols may now  make use of a finite set of permitters
    (and their corresponding resource allocations), as described in Section \ref{fp},  
   and may now run relative to an environment that sends transactions
   (which might, for example, update a stake distribution), as described in Section \ref{belch}.%
\footnote{In the fully permissionless setting, we do not make any
assumptions about the activity status of players with non-zero
stake. For example, having stake locked up
in a staking contract does not prevent a player from going offline for
any number of reasons.  For an external resource, by contrast,
inactive players might as well possess a zero resource balance (a
protocol cannot distinguish between different external resource
balances that might be possessed by an inactive player).}

   \item[(b)] One may assume that the adversary is $\rho$-bounded (as
     defined in Section~\ref{belch}), for some $\rho\in [0,1]$.
 \end{enumerate} 
  

\noindent In the fully permissionless setting, a protocol is a triple
$(\Sigma, \mathcal{O}, \mathcal{C})$. We will sometimes consider
protocols that are not blockchain protocols (i.e., that do not receive
any transactions from an environment, only the initial protocol
inputs) and do not need a confirmation rule; in this case,
$\mathcal{C}$ can be interpreted as, for example, a function that always
returns the empty set.

\vspace{0.2cm} Now that we have defined the fully permissionless
setting (with resource restrictions), we can state our first main
result---the analog of the FLP
impossibility result for the fully permissionless (and synchronous)
setting mentioned in Section~\ref{ss:results}. (Recall the definition
of the Byzantine Agreement Problem from
Section~\ref{2.5}.)


  
\begin{theorem} \label{fmt} 
Consider the fully permissionless,
authenticated, and synchronous setting, and 
suppose that, for some~$\rho > 0$, 
the adversary is $\rho$-bounded. Suppose that dishonest players can
only deviate from honest behavior 
by crashing or delaying message dissemination by
an arbitrary number of timeslots. 
Every
deterministic protocol for the Byzantine Agreement Problem has
an infinite execution in which honest players never terminate.
\end{theorem}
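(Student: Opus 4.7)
The plan is to prove Theorem \ref{fmt} by a bivalency argument in the style of Fischer--Lynch--Paterson and Moses--Rajsbaum, adapted to the features of the fully permissionless setting. Assume for contradiction that a deterministic protocol $(\Sigma, \mathcal{O}, \mathcal{C})$ solves BA against a $\rho$-bounded adversary whose only deviations are to crash or to delay its own message disseminations. Call a finite execution prefix consistent with the setting \emph{$i$-valent} ($i \in \{0,1\}$) if every continuation consistent with the setting has every terminating honest player output $i$, and \emph{bivalent} otherwise. The aim is to build an infinite execution each of whose finite prefixes is bivalent, directly contradicting termination.

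First I would exhibit a bivalent initial configuration. By validity, an all-honest execution with every input equal to $0$ forces output $0$, and the analogous all-$1$ execution forces output $1$; interpolating one input at a time between these extremes produces a neighboring pair of initial configurations of opposite valencies. The permissionless setting --- unknown player set, undetermined player allocation, undetermined identifier sets --- lets the adversary reconcile these neighbors into a single bivalent initial prefix by choosing whether the pivotal player is honest or Byzantine and whether it is active in the opening timeslot; since $\rho > 0$, the pivotal player may indeed be designated Byzantine while keeping $\mathcal{R}^O_B(t)/\mathcal{R}^O(t) \leq \rho$.

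The main obstacle is the extension lemma: from any bivalent prefix $\alpha$ one must produce a bounded-length extension to another bivalent prefix while respecting synchrony and the $\rho$-bound at every timeslot. Given $\alpha$, if all one-step continuations were univalent, one can find continuations $\alpha \cdot e_0$ and $\alpha \cdot e_1$ of opposite valency that differ only in the behaviour of a single Byzantine sender --- whether a particular disseminated message is delayed past the current timeslot, or whether that sender is treated as crashed. The permissionless and authenticated features are then used to construct a third continuation $\alpha'$ that is indistinguishable to one carefully chosen set of honest recipients from $\alpha \cdot e_0$, and indistinguishable to a second such set from $\alpha \cdot e_1$: the Byzantine sender delays the pivotal message, while fresh honest players are revealed (or withheld) at subsequent timeslots so as to calibrate the denominator of the resource ratio and keep it at most $\rho$ throughout. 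Authentication ensures that honest recipients cannot detect any forgery, and determinism of $\Sigma$ and of the oracles in $\mathcal{O}$ guarantees that $\alpha'$ is a well-defined execution; termination would then force agreement across the two indistinguishability classes, compelling $\alpha'$ to be bivalent. The real difficulty is the combinatorial bookkeeping: synchrony caps the delay of any honest message at $\Delta$, and $\rho$-boundedness must hold at every intermediate timeslot. Analogously to Moses--Rajsbaum, one must show that because $\rho > 0$ and the fully permissionless setting permits arbitrary entry and exit of honest players (and arbitrary identifier allocation), the adversary always retains enough scheduling freedom to realise the required indistinguishabilities; this is the step where I expect most of the work to concentrate.

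Iterating the extension lemma yields an infinite execution consistent with the setting in which no honest player ever terminates, contradicting termination and completing the proof. Throughout, the oracles in $\mathcal{O}$ and the signature scheme enter only generically, since a dishonest player in this restricted fault model cannot forge signed entries or fabricate oracle responses; the argument treats $\mathcal{O}$ as a black box, which is precisely what makes the impossibility uniform across all choices of cryptographic primitives.
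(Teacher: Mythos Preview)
Your high-level plan---a bivalency argument with an initial bivalent configuration obtained by input interpolation and an inductive extension lemma---matches the paper's, and you correctly anticipate that the fully permissionless setting supplies the ``fresh players at every timeslot'' freedom needed to keep the adversary $\rho$-bounded while allowing one Byzantine player per step.

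The gap is in your extension step. You propose to construct a single continuation $\alpha'$ that is indistinguishable to one set $S_0$ of honest recipients from the $0$-valent $\alpha\cdot e_0$ and simultaneously to another set $S_1$ from the $1$-valent $\alpha\cdot e_1$, and then infer bivalence of $\alpha'$ from ``agreement across the two indistinguishability classes.'' That inference does not go through: indistinguishability \emph{at the prefix} does not imply that $\alpha'$ has a continuation in which some honest player outputs $i$, because continuations of $\alpha'$ need not remain indistinguishable (to anyone) from continuations of $\alpha\cdot e_i$. The split-brain device you sketch is how one violates agreement \emph{directly} in a single execution (as in CAP-style arguments); it is not, by itself, a mechanism for propagating bivalence one step.

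The paper's extension argument is instead a receiver-chain in the Aguilera--Toueg style. It first restricts to a concrete family of executions: at each timeslot exactly $s\ge\max\{1/\rho,3\}$ fresh players enter (one from each of $s$ fixed input-groups, each with unit external resource only at its entering timeslot), honest players are active for exactly one slot, $\Delta=2$, and at most one player per slot is Byzantine (crashing or delaying). Given a bivalent $t$-run, it compares the ``no-delay'' extension $x^*$ with an opposite-valent extension $x^\dagger$ in which one Byzantine player's message $m$ is withheld, and then interpolates a chain $x_0,\dots,x_r$ of $(t{+}1)$-runs by toggling, one receiver at a time, whether each next-timeslot player receives $m$. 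Adjacent runs $x_k,x_{k+1}$ differ only in whether a single receiver $p_{k+1}$ gets $m$; if both are univalent of opposite values, extend each by having $p_{k+1}$ crash before it sends anything. The resulting \emph{full executions} are indistinguishable to every remaining honest player, contradicting agreement. The contradiction comes from neutralizing the sole difference between two adjacent univalent runs, not from making one prefix look like two univalent prefixes to two different audiences.
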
  

We present the proof of Theorem~\ref{fmt} in Section \ref{proofoffmt}, 
along with a discussion of which aspects of the fully permissionless
setting matter for the proof.\footnote{Could a 
  simple proof of Theorem \ref{fmt} be achieved by adapting the
  proof of Dolev and Strong \cite{dolev1983authenticated} that $f+1$
  rounds are necessary to solve Byzantine Agreement with $f$ faulty
  players? (The intuition being that, if a new Byzantine player can enter
  in each round, then no finite number of rounds should suffice to solve
  Byzantine Agreement.) The difficulty with this approach is that the
  non-existence of a protocol with bounded termination time does
  \emph{not} imply the non-existence of a protocol that always
(eventually) terminates, unless one can establish an appropriate compactness
  condition on the execution space. Such an approach can be made
  to work (as in the conference version of this paper
  \cite{lewis2021byzantine}), but the alternative argument we present in
Section~\ref{proofoffmt} is arguably somewhat simpler.}

\vspace{0.2cm}
Can the impossibility results in Proposition~\ref{prop:noresources} or
Theorem~\ref{fmt} be circumvented by a
probabilistic solution to BA?  
The next section introduces the definitions necessary to discuss such
questions.


  \subsection{Defining 
the probabilistic setting} \label{instance}

  \textbf{The meaning of probabilistic statements}. For a given
 (possibly randomized) protocol, one way to completely specify an
 execution is via the following breakdown:
  
  \begin{enumerate} 
\item[1.]  The set of players $\mathcal{P}$ and their inputs;
\item[2.]  The player allocation; 
\item[3.]  The inputs and resource allocation for each permitter; 
\item[4.] The set of Byzantine players 
and their state transition diagrams;
\item[5.] The timing rule;
\item[6.] The transactions sent by the environment (and to which players and when); 
\item[7.] The (possibly probabilistic) state transitions of each player at each timeslot; 
\item[8.] The (possibly probabilistic) responses of permitter oracles.
\end{enumerate}  

When we say that a protocol satisfies a certain condition (such as
solving the Byzantine Agreement Problem), we mean that this holds for
all values of (1)-(8) above that are consistent with the setting. We
call a set of values for (1)-(6) above a \emph{protocol
  instance}. When we make a probabilistic statement to the effect that
a certain condition holds with at most/least a certain probability,
this means that the probabilistic bound holds for all protocol
instances that are consistent with the setting (with the randomization
over the realizations in~(7) and~(8)).\footnote{This approach
  implicitly restricts attention to protocol instances that
  can be defined independently of the realizations in~(7) and~(8); for
  example, we do not allow the transactions issued by the environment
  to depend on the outcomes of players' coin flips. The proofs of our
  impossibility results for probabilistic protocols
  (Proposition~\ref{prop:noresources} and Theorems~%
  \ref{psm}, \ref{NTT}, \ref{ortheorem}, \ref{lrtheorem}, and \ref{niceneg}) all use only
  instances of this restricted type; this restriction to ``oblivious''
adversaries
  and environments only makes those results stronger.  
This discussion is not relevant for the other results in the paper
(Theorems~\ref{fmt}, \ref{posPoS}, and \ref{posPoS2}), which concern
deterministic protocols.\label{foot:adaptive}}

\vspace{0.2cm} 
 \noindent \textbf{Indistinguishability of instances}.  We say that two protocol instances are \emph{indistinguishable for $p$ at timeslots $\leq t$} if both of the following hold:  (a) Player $p$ receives the same inputs for both instances, and; (b)  The distributions on messages and responses  received by $p$ at each timeslot $t'\leq t$ are the same for the two instances.  
  We say that two protocol instances are \emph{indistinguishable for
    $p$} if they are indistinguishable at all timeslots
  $<\infty$.\footnote{For example, with this definition of
    indistinguishability, steps~(3) and~(5) of the proof of
    Proposition~\ref{prop:noresources} apply more generally to
    probabilistic protocols. (For a probabilistic protocol with security
    parameter~$\epsilon$, steps~(1), (2), and~(4) of that proof hold
    except with at
    most $\epsilon$ probability each, leading to a contradiction
    provided $\epsilon < 1/3$.)\label{foot:noresources}}

\vspace{0.2cm}
\noindent \textbf{Discussion}.
Some results that are impossible in the deterministic setting become
possible in the probabilistic setting.
%
%
For example,
Garay, Kiayias, and Leonardos~\cite{garay2018bitcoin} prove that, for
every $\rho< 0.5$ and $\epsilon > 0$, the Bitcoin protocol
can be used to define a $\rho$-resilient protocol (in the sense of Section~\ref{bps})
that solves probabilistic BA with security parameter~$\epsilon$ in the
synchronous and fully permissionless setting (with
resource restrictions).\footnote{More precisely, for every~$\rho <
  1/2$, $\epsilon > 0$, and value $\Delta > 0$ for the delay bound
  parameter, there exists a finite~$T$ such
  that the variant of the Bitcoin protocol with an average interblock
  time of~$T$ can be used to define such a $\rho$-resilient protocol.\label{foot:bitcoin}}
That is, the impossibility result in Theorem~\ref{fmt} does not extend
to the probabilistic setting, even when arbitrary Byzantine behavior
is allowed.\footnote{An easy argument---similar to
  the argument in the discussion following
  Proposition~\ref{prop:noresources}---shows that no protocol solves
  probabilistic BA with a security parameter $\epsilon < 1/2$ when $\rho \ge 1/2$, even in the permissioned and
  authenticated setting (with resource restrictions).}
Interestingly, this positive result holds even in the
unauthenticated setting.%
%
%
\footnote{Pease,
  Shostak, and Lamport~\cite{pease1980reaching} and Fischer, Lynch,
  and Merritt~\cite{fischer1986easy} show that, already in the permissioned
  setting, probabilistic BA cannot be solved without authentication
  when $\rho \ge 1/3$, even with synchronous communication.  Why
  doesn't this impossibility result contradict the positive result
  (for all~$\rho < 1/2$) in~\cite{garay2018bitcoin}? Because the
  Bitcoin protocol uses a permitter (see Section~\ref{ss:bitcoin}) and
  the negative results in~\cite{pease1980reaching,fischer1986easy} consider only
  protocols that do not use any resources or permitters. Intuitively,
  resources and permitters can partially substitute for authentication
  in limiting the ability of a Byzantine player to simulate 
  honest players.\label{foot:flm}} 

\section{The dynamically available setting} \label{dasection} 

Theorem~\ref{fmt} 
establishes limitations on what is possible in the fully
permissionless setting; are there palatable additional assumptions
under which stronger positive results are possible?  Relatedly, the
fully permissionless setting studied in Section~\ref{ps2} does not
meaningfully differentiate between protocols that use only external
resources (like typical proof-of-work protocols) and those that use
on-chain resources (like typical proof-of-stake protocols) as, in the
fully permissionless setting, all the players with on-chain resources
might well be inactive.  The way forward, then, would seem to be
additional assumptions on how player activity correlates with on-chain
resource balances.  The dynamically available setting, defined in
Section~\ref{ss:da}, is an instantiation of this idea.  (As is the
quasi-permissionless setting, which is defined in Section
\ref{defpermyay}.) Section~\ref{ss:sep} notes that a recent result of
Losa and Gafni~\cite{losa2023consensus} can be adapted to show that
the dynamically available setting is indeed meaningfully less
demanding than the fully permissionless setting (with resource
restrictions).

The primary goal of this section is to demonstrate fundamental
limitations on what protocols can achieve in the dynamically available
(and hence also the fully permissionless) setting: they cannot solve the
Byzantine Agreement problem in partial synchrony (Theorem~\ref{psm} in
Section~\ref{ss:da_imp1}); they cannot provide accountability, even in
the synchronous setting (Theorem~\ref{NTT} in
Section~\ref{ss:da_imp2}, which follows directly from work of Neu,
Tas, and Tse~\cite{neu2022availability}); and they cannot provide
optimistic responsiveness, even in the synchronous setting
(Theorem~\ref{ortheorem} in Section~\ref{ss:da_imp3}).  (With the
stronger assumptions of the quasi-permissionless setting, on the other
hand, Theorem~\ref{posPoS} shows that there is a protocol that attains
all three of these properties.)

\subsection{Defining the setting}\label{ss:da}

  
  


In the dynamically available setting, a protocol is a tuple
$(\Sigma,\mathcal{O},\mathcal{C})$.  No assumptions are made about
participation by honest players, other than the minimal assumption
that, if any honest player owns a non-zero amount of stake, then at
least one such player is active. (Additional assumptions about the
fraction of stake controlled by active honest players are, as usual,
phrased using the notion of $\rho$-bounded adversaries from
Section~\ref{bps}.)

\vspace{0.2cm} 
\noindent \textbf{Dynamically available setting.} Consider the protocol $(\Sigma,\mathcal{O},\mathcal{C})$.  By
definition, an execution of the protocol is {\em
  consistent with the dynamically available setting} if:
\begin{itemize}

\item [] Whenever~$p$ is honest and active at~$t$, with~$\mathtt{T}$
  the set of transactions confirmed for~$p$ at $t$ in this execution, if
  there exists an honest player assigned a non-zero amount of stake in
  $\mathtt{S}(S_0,\mathtt{T})$, then at least one such player is
  active at~$t$.

\end{itemize}

\subsection{Separating the dynamically available and fully
  permissionless settings}\label{ss:sep}

The additional assumptions imposed by the dynamically available
setting, relative to the fully permissionless setting, may appear
modest. But the following result, due (essentially) to Losa and
Gafni~\cite{losa2023consensus}, provides a formal separation
between the two settings: the impossibility result in
Theorem~\ref{fmt} no longer holds in the dynamically available
setting, even in the unauthenticated case.


\begin{theorem}[Losa and Gafni \cite{losa2023consensus}]\label{sep} 
Consider the dynamically available, unauthenticated, and synchronous
setting, and suppose that, for some~$\rho < 1/2$, the adversary is $\rho$-bounded. 
There is a deterministic protocol that solves the Byzantine Agreement
problem provided dishonest players can deviate from honest
behavior only by crashing or delaying message dissemination by
an arbitrary number of timeslots.  
\end{theorem}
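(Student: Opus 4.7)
The plan is to exploit two special features of this setting: the crash/delay restriction on dishonest players prevents any identifier from equivocating (each either broadcasts its fixed input or stays silent), while dynamic availability ensures that, at every timeslot with any active honest player, at least one honest stakeholder is also active. Together these permit a stake-weighted flooding protocol that converges deterministically without appealing to authentication.

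I would design a flooding protocol in which each active honest player, at every active timeslot, broadcasts (i) its own identifier and input, and (ii) every identifier-input pair it has previously received. By the synchronous delay bound $\Delta$ together with dynamic availability, every pair ever disseminated propagates monotonically into the views of all subsequently-active honest players. On top of this, I would layer a convergence rule — for example, over rounds each player adopts the stake-weighted plurality of its current view and re-broadcasts — and commit to value $v$ once the observed stake-weight for $v$ exceeds a threshold $\tau$. Choosing $\tau$ with $\rho < \tau$ ensures that Byzantine stake alone cannot fabricate a commitment, while $\tau > 1/2$ ensures that two honest players cannot commit to opposite values (because non-equivocation bounds the sum of observed stake-weights for $0$ and $1$ by the total observed stake). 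Both conditions are simultaneously achievable precisely because $\rho < 1/2$.

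Validity is immediate: with unanimous inputs, the opposite value has zero stake-weight support in every view, so its threshold is never met. Agreement follows from the threshold argument just described. The main obstacle will be termination, under an adversary that can delay its votes arbitrarily and honest stakeholders who may be inactive for long stretches. The argument must establish that, after enough rounds of flooding and stake-weighted adoption, every active honest player's view contains enough consistent honest stake to cross the threshold. The key ingredients are the honest stake majority $1-\rho > 1/2$, the monotone growth of views under flooding, a careful convergence rule that brings divergent initial votes into alignment across rounds, and the slack $1/2 - \rho > 0$ which provides the margin needed for deterministic commitment. A detailed proof, adapted from Losa and Gafni~\cite{losa2023consensus}, is reproduced in Appendix~\ref{app:lg}.
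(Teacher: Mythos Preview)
Your sketch has a genuine gap at exactly the point you flag: termination. The simple flood-and-threshold rule does not terminate. Take $\rho=0.1$, honest stake split evenly with $0.45N$ on input~$0$ and $0.45N$ on input~$1$, and let all Byzantine coins crash immediately. After flooding stabilizes, every honest view shows $0.45N$ for each value; with $\tau>1/2$ neither threshold is ever met. You gesture at a ``convergence rule'' in which players adopt the stake-weighted plurality and re-broadcast, but this is underspecified, does not resolve ties, and---more seriously---undercuts your own agreement argument: that argument rests on ``each identifier either broadcasts its fixed input or stays silent,'' which fails the moment honest players change the value they broadcast across rounds. You would then need a completely different safety proof, and you have not supplied one.

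The paper's protocol (following Losa--Gafni, in the Dolev--Strong tradition) is structurally different and is engineered precisely to force termination. Let $N$ be the total number of coins and $N^*:=\lceil N/2\rceil+1$. In each round $r\le N^*$ an active player sends a round-$r$ \emph{activity message} for each of its coins, and forwards any length-$r$ \emph{attestation chain} to $b$ it has received by appending one of its own coins (at most once per value). A player is ``convinced of $b$'' if, for some round $r$, a strict majority of the coins appearing in round-$r$ activity messages also appear on length-$\le r$ attestations to $b$. The commitment criterion is thus relative to the \emph{active} stake in a round (so $\rho<1/2$ guarantees an honest majority among those coins), not to global stake. Termination in $N^*+1$ rounds is then forced: if some honest player is convinced via a chain of length $\ell<N^*$, the honest coin on that chain already triggered all round-$(\ell{+}1)$ honest actives to attest, convincing everyone; if $\ell=N^*$, the chain carries $N^*>\rho N$ distinct coins, so one of them is honest and the previous case applies. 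None of this machinery---attestation chains, per-round activity messages, the $N^*$ bound---appears in your proposal, and it is exactly what makes termination go through.
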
  

The protocol used in the proof of Theorem~\ref{sep} is similar in
spirit to the permissioned Dolev-Strong
protocol~\cite{dolev1983authenticated}.  Losa and
Gafni~\cite{losa2023consensus} phrase their results in a model that
differs from ours in several respects; for completeness, we include a
full proof of Theorem~\ref{sep} in Appendix~\ref{app:lg}.  The
additional assumptions of the dynamically available setting circumvent
the impossibility result in Theorem~\ref{fmt} by allowing a protocol
to ignore, without suffering an immediate loss of liveness, all
identifiers with no initial stake.  Restricting attention to
identifiers with non-zero initial stake has the effect of bounding the
maximum number of identifiers that could be controlled by faulty
players, which in turn opens the door to a Dolev-Strong-style
approach.

%



\subsection{The dynamically available and partially synchronous setting}\label{ss:da_imp1}

Theorem~\ref{fmt} established that deterministic protocols cannot
solve BA in the fully permissionless setting, even with synchronous
communication. The following result shows that, in the dynamically
available and partially synchronous setting (and therefore also in the fully permissionless and partially synchronous setting), BA cannot be solved, even
in the probabilistic sense and with no Byzantine players.\footnote{Recall that
a protocol is $\rho$-resilient if it achieves the desired
functionality---such as solving BA, or satisfying consistency and
liveness with a security parameter~$\epsilon$---for adversaries that are
$\rho$-bounded in the sense of Section~\ref{bps}. Thus, a
``0-resilient'' protocol is required to function correctly only when
all external resources and stake are always controlled by honest
players.}

\begin{theorem} \label{psm} 
For every $\epsilon < \tfrac{1}{3}$, there is no 0-resilient protocol
solving probabilistic BA with security parameter~$\epsilon$
in the dynamically available, authenticated, and partially synchronous setting.
\end{theorem}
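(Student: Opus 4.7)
The plan is to prove Theorem~\ref{psm} via a three-execution indistinguishability argument in the spirit of the CAP theorem, exploiting the ambiguity (available only in the partially synchronous setting) between two scenarios that are incompatible under dynamic availability: a network partition between two active groups versus an execution in which one group is entirely inactive.

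Formally, I would fix any candidate 0-resilient protocol $(\Sigma,\mathcal{O},\mathcal{C})$, pick two disjoint nonempty finite sets of honest players $A$ and $B$, each allocated positive initial stake (and no Byzantine players), and consider three protocol instances:
\begin{itemize}
\item $\mathtt{E}_1$: every player receives input $0$; only $A$ is active, with $B$ inactive throughout; $\text{GST}=1$ and intra-$A$ messages are delivered within $\Delta$.
\item $\mathtt{E}_2$: symmetric with input $1$ and only $B$ active.
\item $\mathtt{E}_3$, parameterized by $T$: $A$-players receive input $0$ and $B$-players receive input $1$, and all are active; $\text{GST}=T+1$; intra-group messages are delivered within $\Delta$, while all cross-group messages are delayed to arrive at time $T+1+\Delta$.
\end{itemize}
Each instance is consistent with the dynamically available and partially synchronous setting: in $\mathtt{E}_1$ and $\mathtt{E}_2$ the (unique) active group contains honest stake-holders, so dynamic availability is met; in $\mathtt{E}_3$ every player is active; and in all three cases the timing rule obeys the partial synchrony bound relative to the chosen GST.

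Applying termination and validity of probabilistic BA to $\mathtt{E}_1$ and $\mathtt{E}_2$: termination implies that with probability $>1-\epsilon$ there is a finite $t^{\ast}$ at which every active honest player outputs, and validity forces those outputs to be $0$ in $\mathtt{E}_1$ and $1$ in $\mathtt{E}_2$. By monotone convergence, for any $\delta>0$ I can choose a finite $T$ so that with probability $>1-\epsilon-\delta$ the $A$-players in $\mathtt{E}_1$ have all output $0$ by time $T$, and symmetrically for $B$ in $\mathtt{E}_2$. Now couple the three executions on a single probability space for all oracle response functions and all players' random coins, so that (i) up to time $T$, each $A$-player's view in $\mathtt{E}_3$ is identical to its view in $\mathtt{E}_1$ (no $B$-sent message has yet arrived, by the timing rule), and (ii) symmetrically each $B$-player's view in $\mathtt{E}_3$ matches its view in $\mathtt{E}_2$. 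A union bound then shows that with probability $>1-2\epsilon-2\delta$, in $\mathtt{E}_3$ the $A$-players output $0$ and the $B$-players output $1$ by time $T$, violating Agreement. Since 0-resilience requires Agreement in $\mathtt{E}_3$ with probability $>1-\epsilon$, we obtain $\epsilon \geq 1-2\epsilon-2\delta$, and letting $\delta\to 0$ yields $\epsilon\geq\tfrac{1}{3}$, contradicting the hypothesis.

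The main obstacle is to make the coupling rigorous within the ``protocol instance'' formalism of Section~\ref{instance}: we must ensure that the timing rule, player allocation, initial stake distribution, and environment of each of $\mathtt{E}_1,\mathtt{E}_2,\mathtt{E}_3$ are specifiable independently of the random realizations, so that the three executions may share a common probability space and the indistinguishability of views (over the coupled coins, up to time $T$) holds sample by sample rather than merely in distribution. The paper's restriction to oblivious instances is precisely what makes this step routine, after which the counting argument above completes the proof.
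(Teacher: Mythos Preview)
Your proposal is correct and follows essentially the same CAP-style three-instance indistinguishability argument as the paper's proof in Section~\ref{proofofpsm}: your $\mathtt{E}_1,\mathtt{E}_2,\mathtt{E}_3$ are exactly the paper's $\mathtt{I}_1,\mathtt{I}_2,\mathtt{I}_0$, and the contradiction is derived identically. Your explicit use of monotone convergence (the $\delta$ argument) to pin down a finite $T$ is slightly more careful than the paper's phrasing, which asserts directly the existence of a fixed $t^\ast$ achieving probability $1-\epsilon$; both lead to the same $\epsilon\ge 1/3$ conclusion.
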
 


We present the proof of Theorem~\ref{psm} in Section \ref{proofofpsm}, 
along with a discussion of which aspects of the dynamically available
setting matter for the proof.  As noted in
Section~\ref{ss:rw_results}, the argument resembles the logic
behind the ``CAP Theorem.''

\vspace{0.2cm} Theorem \ref{posPoS} in Section \ref{defpermyay} shows
that, in the quasi-permissionless and partially synchronous setting,
there is a (deterministic) proof-of-stake protocol that solves BA.
Theorems~\ref{psm} and~\ref{posPoS} thus provide a formal separation
between the dynamically available and quasi-permissionless settings,
with the former setting strictly more difficult than the
latter.  


%


\subsection{Accountability in the dynamically available setting}\label{ss:da_imp2}

The remaining results in this section concern blockchain protocols
in the sense of Section~\ref{belch}.
Intuitively, such a protocol is
``accountable'' if there is always a ``smoking gun'' for consistency
violations, such as two signatures by a common identifier that support
conflicting transactions. Formally, we adopt the definition of
``forensic support'' by Sheng et\ al. \cite{sheng2021bft}---who
studied only the traditional permissioned setting without external or
on-chain resources---to our framework.


\vspace{0.2cm} 
\noindent \textbf{Defining accountability}.  Consider a blockchain
protocol $(\Sigma,\mathcal{O},\mathcal{C})$ and
security parameter $\epsilon$.
For a player set
$\mathcal{P}$ and set of identifiers $\mathtt{id}^*$, let
$P(\mathtt{id}^*):= \{ p\in \mathcal{P}: \ \mathtt{id}^*\cap
\mathtt{id}(p) \neq \emptyset \}$
denote the players that control at least one identifier
in~$\mathtt{id}^*$. For $\rho_1\in [0,1]$, we say that $\mathtt{id}^*$
is of \emph{weight at least} $\rho_1$ in a given execution if either:
\begin{itemize} 

\item there exists $t$ and a permitter $O\in \mathcal{O}$ such that
  $\sum_{p\in P(\mathtt{id}^*)} \mathcal{R}^O(p,t)\geq \rho_1\cdot \mathcal{R}^O(t)$; or

\item there exists $t$ and honest $p$, with $\mathtt{T}$ the set of
  confirmed transactions for $p$ at $t$, such that, for some $\mathtt{T}'\subseteq \mathtt{T}$,
  $\mathtt{S}(S_0,\mathtt{T}')$ allocates the identifiers in  $\mathtt{id}^*$ at least a
  $\rho_1$-fraction of the total stake.

\end{itemize}  
We say that a \emph{consistency failure} occurs in an execution if
there exist honest $p$ and $p'$, with $\mathtt{T}$ and~$\mathtt{T}'$
the sets of transactions confirmed for $p$ and $p'$ at timeslots $t$
and $t'$, respectively, such that $\mathtt{T} \cup \mathtt{T}'$ is not
a valid set of transactions relative to $S_0$.  We say that a protocol
is $(\rho_1,\rho_2)$-\emph{accountable} (with security
parameter~$\epsilon$) if there exists a \emph{blame}
function $F$ which maps message sets to identifier sets and satisfies:
\begin{enumerate} 

\item only Byzantine players are ever blamed, meaning that in any
  execution, and   for any subset of messages $M$
  disseminated in that execution,
  $F(M)$ is a (possibly empty) set of identifiers belonging to Byzantine players; and

\item for each protocol instance in which the adversary is
  $\rho_2$-bounded, the following holds with probability
  at least $1-\epsilon$: if a consistency failure occurs, then there exists
  some finite set of disseminated messages $M$ with $F(M)$ of weight
  at least $\rho_1$.\footnote{To make the impossibility result in
    Theorem~\ref{NTT} as strong as possible, this definition deems a
    set of blamed identifiers ``high-weight'' in an
    execution if there exists any subset of transactions confirmed in
    that execution that allocate them a large fraction of the
    stake. The protocol used to prove our positive results (Theorems
    \ref{posPoS} and \ref{posPoS2}) offers a stronger form of
    accountability, with the set of blamed identifiers 
    allocated a large amount of stake specifically at an ``epoch
    boundary''; see Section~\ref{secprof} for details.}

\end{enumerate} 
A $\rho$-resilient protocol is vacuously
$(\rho_1,\rho)$-accountable for all~$\rho_1$ (with $F(M)=\emptyset$
for all~$M$), as with a $\rho$-bounded adversary consistency
violations can occur only with probability at most~$\epsilon$. Thus,
in the interesting parameter regime, $\rho_2$ exceeds the values
of $\rho$ for which $\rho$-resilient protocols exist.
In this regime, one cannot expect $(\rho_1,\rho_2)$-accountability
unless~$\rho_1 \le \rho_2$.  For fixed~$\rho_2$, then, the goal is to
design protocols with~$\rho_1$ as close to~$\rho_2$ as possible.

\vspace{0.2cm} 
\noindent \textbf{An impossibility result}. The proof of Theorem
\ref{psm} implies that, in the dynamically available, authenticated,
and partially synchronous setting, no blockchain protocol can be
$(\rho_1,\rho_2)$-accountable for any $\rho_1>0$ and $\rho_2 \ge 0$.
In the dynamically available, authenticated, and {\em synchronous}
setting, the interesting case is when $\rho_2\geq 1/2$.  (As discussed
in Section~\ref{instance} and footnote~\ref{foot:bitcoin}, for every
$\rho < 1/2$, there is a Bitcoin-style protocol that is
$\rho$-resilient in this setting.)  Neu, Tas, and
Tse~\cite{neu2022availability} have shown that accountability is not
possible in this case, and their proof is easily adapted to our formal
framework:

\begin{theorem}[Neu, Tas, and Tse \cite{neu2022availability}] \label{NTT}
For every $\epsilon < 1/4$, $\rho_1 > 0$, and $\rho_2 \ge 1/2$,
there is
no 0-resilient blockchain protocol that is $(\rho_1,\rho_2)$-accountable
with security parameter~$\epsilon$ in the dynamically available,
authenticated, and synchronous setting. 
\end{theorem}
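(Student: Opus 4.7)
The plan is to adapt the availability--accountability argument of Neu, Tas, and Tse~\cite{neu2022availability} to our framework. At a high level, I will construct a small family of executions in which (i) a consistency failure is forced with non-negligible probability, and (ii) the blame function $F$ cannot both indict Byzantine identifiers of weight $\geq \rho_1$ \emph{and} satisfy property~(1) of accountability (that only Byzantine identifiers are ever blamed), because the disseminated-message set used as input to $F$ is indistinguishable from one arising in another execution where the blamed identifiers are honest.

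Setup. Choose two players $p_A, p_B$ with disjoint identifier sets $\mathtt{id}_A, \mathtt{id}_B$, each holding exactly half of the initial stake $S_0$, and two conflicting transactions $tr_1, tr_2$ (guaranteed by the transferability structure described in Section~\ref{bps}). Consider four executions, all consistent with the synchronous, authenticated, dynamically available setting:
\begin{itemize}
\item $E_A$: $p_A$ and $p_B$ are both honest; $p_A$ is active throughout; $p_B$ is inactive throughout; the environment issues $tr_1$ to $p_A$ at timeslot~1. This execution has no Byzantine players, so by 0-resilience and liveness, $p_A$ confirms $tr_1$ with probability $\geq 1-\epsilon$.
\item $E_B$: the mirror image, with $p_B$ active, $p_A$ inactive, and $tr_2$ issued to $p_B$; $p_B$ confirms $tr_2$ with probability $\geq 1-\epsilon$.
\item $E$: both $p_A, p_B$ present throughout; $p_A$ honest and active, receiving $tr_1$ from the environment; $p_B$ Byzantine, using the strategy of replaying the exact set of messages that honest $p_B$ would have disseminated in $E_B$ (ignoring its own received view and inputs in $E$).
\item $E'$: the mirror image of $E$, with roles of $p_A, p_B$ exchanged.
\end{itemize}
In both $E$ and $E'$ the Byzantine stake is exactly $\tfrac{1}{2} \leq \rho_2$, so the adversary is $\rho_2$-bounded; all four executions are consistent with the dynamically available setting (at least one honest stake-holder is always active).

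Main argument. Let $M$ denote the multiset of messages disseminated in $E$ (similarly $M'$ for $E'$). The key technical step is to show that, with probability at least $1-4\epsilon>0$, $E$ exhibits a consistency failure and the resulting message set $M$ agrees (in a precise sense) with $M'$ when one relabels honest and Byzantine roles between $\mathtt{id}_A$ and $\mathtt{id}_B$. The consistency failure arises because (a) $p_A$ in $E$ sees a view strictly extending its view in $E_A$—namely, messages signed by $\mathtt{id}_B$ that are identical to those produced by honest $p_B$ in $E_B$, together with its own $tr_1$-supporting messages---and (b) by 0-resilience together with a short hybrid between $E_A$ and $E$ in which the replayed $\mathtt{id}_B$-messages are introduced, one can force the confirmation rule to declare $tr_1$ confirmed for $p_A$ (otherwise the protocol would fail liveness in $E_A$ or~$E_B$, both of which are 0-resilient executions). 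A symmetric argument in $E'$ yields $p_B$ confirming $tr_2$, and a final hybridization argument merges these into a single execution containing a consistency failure. By accountability, there exists a finite submessage set $\widehat{M} \subseteq M$ with $F(\widehat{M})$ of weight at least $\rho_1$; by property~(1) and the construction, $F(\widehat{M}) \subseteq \mathtt{id}_B$. The mirror argument yields $F(\widehat{M}') \subseteq \mathtt{id}_A$ for the corresponding $\widehat{M}' \subseteq M'$. Because the replay strategy makes $\widehat{M}$ and $\widehat{M}'$ arise identically in two executions in which the honest/Byzantine roles are swapped, $F$ must blame identifiers in $\mathtt{id}_A \cap \mathtt{id}_B = \emptyset$, contradicting $|F(\cdot)|\geq \rho_1>0$. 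The $\epsilon < 1/4$ condition allows a union bound over the four relevant failure events (liveness in $E_A$, liveness in $E_B$, and accountability in the two mirror executions), ensuring a positive-probability joint failure.

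The main obstacle is step~(b): synchronously, unlike in the proof of Theorem~\ref{psm}, one cannot genuinely partition honest players by delaying messages, so the consistency failure must be wrung from dynamic availability alone. The crux of the Neu--Tas--Tse idea, translated into our model, is that Byzantine players with half the stake can \emph{replay} the honest-but-inactive behavior from a counterfactual execution, and the confirmation rule---which by 0-resilience commits to a transaction upon seeing sufficient evidence---is unable to distinguish ``honest counterpart was absent'' from ``Byzantine counterpart pretended to be present''. The remainder of the proof is a careful union-bound and indistinguishability argument that is routine once step~(b) is in hand.
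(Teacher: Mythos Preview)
Your high-level strategy---the Neu--Tas--Tse symmetry argument in which the blame function, applied to a message set that arises in two role-swapped executions, must simultaneously output only $\mathtt{id}_B$ and only $\mathtt{id}_A$---is the right one, and is what the paper's Discussion following the theorem has in mind. But the proposal as written has two genuine gaps that the paper's (sketched) approach addresses and yours does not.

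First, you never establish a consistency failure in a \emph{single} execution. In your $E$ the only honest player is $p_A$; even granting that $p_A$ confirms $tr_1$, there is no second honest party confirming $tr_2$, and you have not argued that $p_A$ itself later confirms $tr_2$ or rolls back $tr_1$. The sentence ``a final hybridization argument merges these into a single execution containing a consistency failure'' is not a valid step: $E$ and $E'$ have \emph{different} Byzantine sets and cannot be merged. The paper's Discussion names the missing ingredient explicitly (point~(ii)): a \emph{late-arriving} honest player who wakes to see both transcripts and cannot disambiguate between them---this is the second honest party that creates the consistency failure. Relatedly, your step~(b) is unjustified: in your $E$, Byzantine $p_B$ replays $M_B$ from timeslot~1 onward, so $p_A$'s received-message set at the time it must act is \emph{not} that of $E_A$ (it is a strict superset, but the confirmation rule is not assumed monotone). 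The fix is \emph{time-separation}: have $p_B$ stay silent until $p_A$ has already confirmed $tr_1$, so that $E$ is genuinely indistinguishable from $E_A$ for $p_A$ through that point, and only then release $M_B$. This is point~(i) of the paper's Discussion.

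Second, the assertion that ``$\widehat{M}$ and $\widehat{M}'$ arise identically'' is unjustified and, as you have set things up, false. Property~(1) of accountability constrains $F(\widehat{M})$ only when $\widehat{M}$ is a subset of the messages disseminated in the execution in question. In your $E$, honest $p_A$ \emph{reacts} to the replayed $M_B$ and may disseminate messages beyond $M_A$; those extra messages are available to $F$ in $E$ but are not disseminated in your $E'$ (where Byzantine $p_A$ replays only $M_A$). Hence the witness set $\widehat{M}$ that accountability produces in $E$ need not be a subset of the messages disseminated in $E'$, and the intended contradiction $F(\widehat{M})\subseteq \mathtt{id}_A\cap\mathtt{id}_B$ does not follow. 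Repairing this requires arranging the Byzantine strategies (and the timing of the late-arriving observer) so that every message disseminated in the consistency-failing execution is also disseminated in a role-swapped one.
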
 


\noindent \textbf{Discussion}. The proof of Theorem~\ref{NTT}
in~\cite{neu2022availability}, once adapted to our model, is driven by
the possibility of inactive players in the dynamically available
setting.  This feature of the setting is necessary for
Theorem~\ref{NTT} to hold, as otherwise the set of players allocated
non-zero stake by the initial stake distribution could carry out a
permissioned and accountable state machine replication protocol (e.g.,
as described in~\cite{sheng2021bft}).  Inactive players serve two
purposes in the proof: (i) to force active honest players to confirm
transactions when the other players might plausibly be inactive;
and (ii) to allow for ``late-arriving'' players, who cannot
disambiguate between conflicting sets of allegedly confirmed transactions.

\vspace{0.2cm}

Theorem \ref{posPoS} in Section \ref{defpermyay} shows that, in the
quasi-permissionless, authenticated, and partially synchronous
setting, there is a proof-of-stake blockchain protocol that is
$(1/3,1)$-accountable. Together, Theorems \ref{NTT} and \ref{posPoS}
therefore establish another formal separation between what is possible
in the dynamically available and quasi-permissionless settings.

\subsection{Optimistic responsiveness in the dynamically available setting}\label{ss:da_imp3}

Ideally, a blockchain protocol would not only be consistent and live
in ``worst-case'' executions that are consistent with a given setting,
but would also automatically offer stronger guarantees (such as faster transaction confirmations) for ``non-worst-case'' executions.
Pass and Shi~\cite{pass2018thunderella} introduce the  notion of
\emph{optimistic responsiveness} to capture this idea.
%


\vspace{0.2cm} 
\noindent \textbf{Defining optimistic responsiveness}.  
Roughly speaking, for our purposes, an optimistically responsive
protocol is one that confirms transactions at network speed whenever
all players are honest.
The formal definition resembles that of liveness
(Section~\ref{ss:liveness}), with a liveness parameter~$\ell$ that
scales with the realized (post-GST) maximum message delay.

More formally, for a given protocol execution in the partially synchronous
setting with no Byzantine players, define $\delta$ as the realized
maximum message delay following GST (which will be at most the
worst-case delay bound~$\Delta$, but may be much smaller). Said
differently,~$\delta$ is the smallest number
such that whenever $p$ disseminates $m$ at $t$, every $p'\neq p$ that
is active at $t'\geq \text{max} \{ \text{GST}, t \} +\delta $ receives
that dissemination at a timeslot $\leq t'$. 
(In the special case of the synchronous setting, GST$=0$.)

A protocol is then \emph{optimistically responsive} (with
security parameter~$\epsilon$) if there exists a liveness
parameter~$\ell = O(\delta)$ and a ``grace period'' parameter
$\Delta^* = O(\Delta)$ such that, in every instance consistent with
the setting, with probability at least $1-\epsilon$,
the following condition is satisfied for every~$t$.
Suppose that:
\begin{itemize}

  \item $t^{\ast}:=\text{max}\{ \text{GST}+\Delta^*, t \} + \ell$ is at most $d$.

  \item
    The transaction $\mathtt{tr}$ is 
received by an honest player at some timeslot $\leq t$,
and $\mathtt{tr}$ is
valid for all honest players through timeslot~$t^*$
(in the same sense as in the definition of liveness
in
Section~\ref{ss:liveness}).

\end{itemize}
Then, $\mathtt{tr} $ is confirmed for all
honest players active at any timeslot after $t^{\ast}$, and is
confirmed for those players at the first timeslot $\geq t^{\ast}$ at
which they are active.\footnote{Intuitively, the ``grace period''
  $\Delta^*$ grants a protocol time to ``switch over'' to ``fast
  confirmation mode.''  Allowing a grace period that scales
  with~$\Delta$ is natural (and similar to the definition in Pass and
  Shi~\cite{pass2018thunderella}), and the impossibility result
  in Theorem~\ref{ortheorem} holds for this definition of optimistic
  responsiveness. The PoS-HotStuff protocol used to prove our positive results
  (Theorems~\ref{posPoS} and~\ref{posPoS2}), meanwhile, is in fact optimistically
  responsive with $\Delta^* = 0$, and also satisfies stronger
  definitions of optimistic responsiveness (as in, e.g.,~\cite{lewis2023fever}).}

%
%


\vspace{0.2cm} 
\noindent \textbf{An impossibility result}.  The next result asserts
that blockchain protocols cannot be optimistically responsive in the
dynamically available setting.

\begin{theorem} \label{ortheorem} 
For every $\epsilon < \tfrac{1}{3}$,  there is no $0$-resilient
blockchain protocol that is optimistically responsive
with security parameter~$\epsilon$
in the dynamically available, authenticated, and synchronous
setting.
\end{theorem}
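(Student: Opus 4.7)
The plan is to adapt a CAP-style partition argument to the synchronous setting, using dynamic availability to simulate a network partition via player inactivity rather than delayed messages. I will construct three $0$-bounded executions $E_1$, $E_2$, $E_3$ of the purported protocol, each with two honest players $P_A$ and $P_B$ allocated positive stake in $S_0$, and employ two conflicting transactions $\mathtt{tr}_0, \mathtt{tr}_1$ (individually valid against $S_0$, but with $\{\mathtt{tr}_0, \mathtt{tr}_1\}$ invalid), whose existence follows from the stake-transferability axiom in Section~\ref{bps}. Let $L$ be an upper bound on $\ell(0)$; since $\ell = O(\delta)$, $L$ is a constant independent of $\Delta$. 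Pick $\Delta > L$, and then pick $t > \Delta^*(\Delta)$.

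The executions, each consistent with the dynamically available, authenticated, and synchronous setting, are: in $E_1$, $P_A$ is active at every timeslot $\geq t$ and $P_B$ is inactive throughout, with the environment sending $\mathtt{tr}_0$ to $P_A$ at time $t$; $E_2$ is symmetric with the roles of $P_A$ and $P_B$ (and of the two transactions) swapped; and in $E_3$, both $P_A$ and $P_B$ activate at $t$, the environment sends $\mathtt{tr}_0$ to $P_A$ and $\mathtt{tr}_1$ to $P_B$ at time $t$, and the timing rule assigns the maximum delay $\Delta$ to every message between the two players. In $E_1$ the realized post-GST delay $\delta_1$ is $0$ vacuously, since no $p' \neq P_A$ is ever active to receive a dissemination; so optimistic responsiveness yields that, with probability at least $1-\epsilon$, $P_A$ confirms $\mathtt{tr}_0$ by time $t + \ell(\delta_1) \leq t + L$, and symmetrically in $E_2$ for $P_B$ and $\mathtt{tr}_1$.

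Couple the three executions by using the same realization of every oracle response function and of every randomized state transition; non-adaptivity of oracles then makes common queries yield common responses. Under this coupling, $P_A$'s view through timeslot $t + L$ is identical in $E_1$ and $E_3$: in both, $P_A$ wakes up at $t$ with no backlog since no player was active earlier, and in $E_3$ the first dissemination from $P_B$ occurs at some time $\geq t$ and, under the maximum-delay timing rule, reaches $P_A$ only at time $\geq t + \Delta > t + L$. A symmetric argument handles $P_B$'s view in $E_2$ versus $E_3$. Therefore, in $E_3$, with probability at least $1 - 2\epsilon$ by a union bound, $P_A$ confirms $\mathtt{tr}_0$ and $P_B$ confirms $\mathtt{tr}_1$ by time $t+L$, yielding a consistency failure. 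But consistency in the $0$-bounded execution $E_3$ must hold with probability at least $1 - \epsilon$, so $1 - 2\epsilon \leq \epsilon$, contradicting $\epsilon < 1/3$.

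The main obstacle is arranging the setup so that the indistinguishability window of length $\Delta$ exceeds the confirmation delay $L$, even though the grace period $\Delta^* = O(\Delta)$ can be a large constant multiple of $\Delta$. The key trick is (i) to keep both players dormant until $t$, so that no backlogged messages arrive at a waking player (otherwise $P_B$-side indistinguishability between $E_2$ and $E_3$ would be spoilt by $P_B$ receiving $P_A$'s pre-$t$ messages in $E_3$), and (ii) to take $\Delta$ much larger than $L$, exploiting that $L$ is a protocol-dependent constant independent of $\Delta$. The timing parameter $t$ may then be chosen freely above $\Delta^*(\Delta)$ without coupling the tight $\Delta$-versus-$L$ separation that drives indistinguishability to the loose $t$-versus-$\Delta^*$ constraint.
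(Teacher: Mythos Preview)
Your high-level strategy---a CAP-style partition simulated via player inactivity---matches the paper's, but your specific construction has a genuine gap. The model's standing assumption (Section~\ref{ss:inputs}) is that a non-zero number of players is active at every timeslot $\le d$; all three of your executions violate this, since both $P_A$ and $P_B$ are dormant before $t$. This is not cosmetic: you yourself identify the central obstacle---the grace period $\Delta^*$ may be a large multiple of $\Delta$---and your ``dormant until $t$'' device is precisely your fix for it. That fix is disallowed by the model, and the obvious repairs do not work. Adding an always-active third staked player to satisfy the activity constraint either lets that player relay messages between $P_A$ and $P_B$ in $E_3$ (destroying indistinguishability within two hops) or, if its links are delayed by $\Delta$, forces $\delta \approx \Delta$ already in $E_1$ and $E_2$ (destroying the small-$\ell$ bound). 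Setting $t=1$ instead brings you back to needing confirmation by $\Delta^*+\ell$ with only a length-$\Delta$ indistinguishability window, which fails whenever $\Delta^* > \Delta$. A secondary issue: you rely on $\delta_1=0$ and on $\ell(0)$ being a finite constant, but the paper only uses $\ell<K\delta$ for $\delta\ge 1$, so the $\delta=0$ edge case is not obviously covered by $\ell=O(\delta)$.

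The paper resolves the grace-period obstacle differently: in all three instances, \emph{all} players are active with unit-delay delivery throughout $[1,\Delta^*]$, so the grace period is consumed while the three executions remain identical. Only after $\Delta^*$ do they diverge---one side goes inactive in each of the two ``split'' instances, while in the ``merged'' instance both sides stay active but with cross-group delay $\Delta$. This yields $\delta=1$ (not the degenerate $0$) in the split instances, hence $\ell<K<\Delta$, together with an indistinguishability window of length $\Delta$ beginning at $\Delta^*+1$, which is exactly what is needed.
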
 

We present a proof of Theorem \ref{ortheorem}  in Section \ref{proofofortheorem}. 

Theorem \ref{posPoS} in Section \ref{defpermyay} shows that, in the
quasi-permissionless and partially synchronous setting, there is a
proof-of-stake protocol that is optimistically responsive. Thus,
Theorems~\ref{ortheorem} and \ref{posPoS} provide yet another formal
separation between the dynamically available and quasi-permissionless
settings.

\section{General on-chain resources}  \label{pdr} 
  
The discussion in Section~\ref{ss:da} noted that, in the fully
permissionless setting, protocols cannot generally take advantage of
stake or other on-chain resources because all the honest players with
a non-zero amount of some such resource might well be inactive.  The
dynamically available setting defined there insists on a very weak
correlation between player activity and on-chain resource balances: at
each timeslot, if there is an honest player with a non-zero amount of
stake, some such player must be active.
While Theorem~\ref{sep} demonstrates that this additional assumption
can lead to strictly stronger positive results,
Theorems~\ref{psm}--\ref{ortheorem} illustrate the persisting
limitations of the dynamically available setting. All three
of these impossibility results take advantage of the fact that, in this
setting, network delays and some types of Byzantine player behavior
can be indistinguishable from the inactivity of honest players.
Making further progress requires resolving this ambiguity via stronger
player activity assumptions; this is the idea behind the
``quasi-permissionless'' setting.

The quasi-permissionless setting can be
defined narrowly for ``pure proof-of-stake protocols,'' as is outlined
in Section~\ref{lfqp}, but its full versatility is unlocked once we allow
protocols to define arbitrary on-chain resources that are expected to
correlate with player activity.  Section~\ref{blockrewards} defines
such ``protocol-defined resources'' and notes examples of existing
protocols that make use of them.  Section~\ref{react} defines the
subclass of ``reactive'' protocol-defined resources, which rules out
certain ``essentially permissioned'' protocols and describes more
precisely the nature of the on-chain resources used by the protocols
surveyed in Section~\ref{blockrewards}.  The general
quasi-permissionless setting is then defined in
Section~\ref{defpermyay} with respect to arbitrary on-chain resources.

\subsection{Looking forward to the quasi-permissionless setting} \label{lfqp} 
  

For the purposes of this section, by a ``proof-of-stake (PoS)''
protocol, we mean a blockchain protocol $(\Sigma, \mathcal{O},
\mathcal{C})$ in 
which $\mathcal{O}$ contains no permitters.\footnote{Technically, a
  blockchain protocol with no stake distribution (and no permitters) qualifies as
  a PoS protocol according to this definition. To offer any
  non-trivial guarantees in a permissionless setting, however,
  blockchain protocols without permitters must make use of a
  non-trivial stake distribution and player activity assumptions with
  respect to it (otherwise, all honest players might well be
  inactive at all timeslots).}  (The full definition of PoS protocols,
in the more general setting that allows for protocol-defined
resources, is stated in Section~\ref{ss:qp}.)
Typically, PBFT-style PoS
protocols such as Algorand require stronger assumptions on the set of
active players than are provided by the dynamically available setting:
 
\vspace{0.2cm} 
\noindent \textbf{Quasi-permissionless setting (specifically for
  PoS protocols).}  An execution of a proof-of-stake protocol is {\em consistent
  with the quasi-permissionless setting} if:
\begin{itemize}

\item [] Whenever~$p$ is honest and active at~$t$, with~$\mathtt{T}$
  the set of transactions confirmed for~$p$ at $t$ in this execution,
  every honest player that is assigned a non-zero amount of stake in
  $\mathtt{S}(S_0,\mathtt{T})$ is active at~$t$.

\end{itemize}
By definition, every execution of a PoS protocol consistent with the
quasi-permissionless setting is also consistent with the dynamically
available setting.

Sections~\ref{blockrewards} and~\ref{react} describe terminology for
discussing forms of on-chain resources other than stake, as are used
by, for example,
%
Byzcoin \cite{kokoris2016enhancing}, Hybrid
\cite{pass2016hybrid}, and Solida \cite{abraham2016solida}.
Readers 
uninterested in on-chain resources other than stake can keep in mind
the simplified version of the quasi-permissionless setting defined
above and 
skip to Section \ref{posres}.

\subsection{Protocol-defined resources}  \label{blockrewards} 

  
What would be an on-chain resource other than stake?  Consider, for
example, the two-stage protocol of Andrychowicz and
Dziembowski~\cite{andrychowicz2015pow}.  The basic idea is to, in the
first stage, employ a proof-of-work protocol in which players compete
to establish identities that are recorded as having provided PoW.
(For example, one could interpret a public key as ``established'' in a
Bitcoin-style protocol if it contributed
one of the first 100 blocks of the longest chain.)  After a set of
identities has been established, in the second stage, those identities
then carry out a version of the (permissioned) Dolev-Strong protocol
\cite{dolev1983authenticated} to execute an instance of single-shot
consensus.
(The Andrychowicz-Dziembowski protocol, like the Dolev-Strong
protocol, is intended for the synchronous setting.)  This idea
translates to a protocol that relies on one external resource (for the
PoW) and one on-chain resource (e.g., a 0-1 function encoding which
identities have permission to participate in the second-stage
permissioned protocol).  Protocols of this type are typically assumed
to operate in a generalization of the simplified quasi-permissionless
setting
described in Section~\ref{lfqp}, with all honest
players with on-chain resources assumed to be active.

\vspace{0.2cm} Taking this idea further, protocols such as Byzcoin
\cite{kokoris2016enhancing}, Hybrid \cite{pass2016hybrid}, and Solida
\cite{abraham2016solida} implement state machine replication
\cite{schneider1990implementing} by using PoW to select a rolling
sequence of committees. The basic idea is that, once a committee is
selected, it should carry out a PBFT-style protocol to implement the
next consensus decision. This consensus decision includes the next
sequence of transactions to be committed to the blockchain and also
determines which identifiers have provided sufficient PoW in the
meantime for inclusion in the next committee. Again, such protocols
are assumed to operate in a generalization of the simplified
quasi-permissionless setting: all honest players with on-chain
resources (i.e., those belonging to the current committee) are
required to be active for the PBFT-style protocol to function
correctly.

\vspace{0.2cm} 
\noindent
\textbf{Defining protocol-defined resources}.  A
\emph{protocol-defined resource} is a function $\mathtt{S}^{\ast}$
such that $\mathtt{S}^{\ast}(M,id) \in \mathbb{N}$ for every identifier
$id$ and set of messages $M$, and such that
$\mathtt{S}^{\ast}(M,id) =0$ for identifiers~$id$ outside of $\cup_{p
  \in \mathcal{P}} \mathtt{id}(p)$.
Let $\mathtt{S}^{\ast}(M)$ denote the function which on input $id$ gives
output $\mathtt{S}^{\ast}(M,id)$.

\vspace{0.1cm} 
Thus, an honest player that has received the messages~$M$ by
timeslot~$t$ regards $\mathtt{S}^{\ast}(M)$ as the
balances of the protocol-defined resource at that time. (In effect,
each protocol-defined resource specifies its own
message semantics and confirmation rule.)
By an \emph{on-chain} resource, we mean a resource that is either
stake or protocol-defined.

\subsection{Reactive resources: not all protocol-defined resources are created equal} \label{react} 
  
Because the definition of protocol-defined resources is so general,
generalizing the simplified quasi-permissionless setting to
arbitrary such resources can lead to the consideration of
``essentially permissioned'' protocols.
%
%
For example, while the two-stage protocol by Andrychowicz and
Dziembowski~\cite{andrychowicz2015pow} only asks the selected
committee to carry out a single consensus decision,
a variant of that protocol could select a single static committee that
is then tasked with carrying out state machine replication (e.g.,
using a PBFT-style permissioned protocol) for an unbounded
duration. This protocol can be viewed as operating in a version of the
quasi-permissionless setting with unrestricted protocol-defined
resources, with one such resource indicating membership in the static
committee and the quasi-permissionless assumption then requiring that
committee members be active for the remainder of the protocol's
execution.
Can one call such a protocol ``(quasi-)permissionless'' with a
straight face?  


\vspace{0.2cm} Why aren't
the Byzcoin, Hybrid, and Solida protocols ``essentially permissioned''
in a similar sense?
And why didn't this issue come up in Section~\ref{belch} for the
specific case of stake as an on-chain resource?
Because stake, and the protocol-defined resources used by the Byzcoin,
Hybrid, and Solida protocols, can be forced to change through the
confirmation of certain messages and transactions that might be issued
by the environment or by players.  (E.g., recall from
Section~\ref{bps} our assumption that stake can always be transferred
via an appropriate transaction.)  To rule out ``essentially
permissioned'' protocols, then, it must be possible for players to
eventually lose their on-chain resources.  This leads to the notion of
``reactive'' resources, which cannot be held indefinitely by any
players that own no external resources or stake.


%

\vspace{0.2cm} 
\noindent \textbf{Generalizing the sense in which stake can be forced
  to change}. 
%
Formally, consider a protocol that uses $k$ permitters with
corresponding resources $\mathcal{R}_1,\dots, \mathcal{R}_{k}$.
Consider an execution of this protocol and a time
interval~$I=[t_1,t_2]$.  We say that a player~$p$:
\begin{itemize}

\item {\em owns no external resources in~$I$} if 
$\mathcal{R}_i(p,t)=0$ for all $t\in I$ and $i\in      
\{1,\ldots,k\}$; 

\item {\em owns no stake in~$I$} 
if, for every $id \in
  \texttt{id}(p)$, 
$\mathtt{S}(S_0,\mathtt{T},id)=0$ whenever $\mathtt{T}$ is the set of
transactions confirmed for an honest player~$p'$ at a timeslot~$t \in I$;




\item {\em owns none of a protocol-defined resource
    $\mathtt{S}^{\ast}$ at~$t$} if, for every $id \in \texttt{id}(p)$,
  $\mathtt{S}^{\ast}(M,id)=0$ when\-ever~$M$ is the set of messages
  received by an honest player $p'$ at timeslots $\leq t$.

\end{itemize}
%
%
%
%
  
  
\vspace{0.2cm} 
\noindent \textbf{Reactive sets of resources}.  Let $\epsilon \ge 0$ be a
determined security parameter.
A set $\mathcal{S} = \{ \mathtt{S}^*_1,\ldots,\mathtt{S}^*_j \}$
of on-chain resources used by a protocol
is \emph{reactive} if
there exists a value $\ell$, which
may depend on the determined inputs but must be sublinear in
$d$, such that the following condition holds with probability
at least $1-\epsilon$ for each protocol instance: 
\begin{itemize}

\item [] If $t_1\geq$ GST, $t_2-t_1\geq \ell$, $t_2\leq d$, and each
  of the players of some set~$P$ owns no external resources or stake
  in $I:=[t_1,t_2]$, then for every $p \in P$ and $i \in \{1,\ldots,j\}$,
$p$ owns none of $\mathtt{S}^{\ast}_i$ at $t_2$. 


\end{itemize}


\vspace{0.2cm} 
\noindent \textbf{Restricting to reactive sets of on-chain resources}. The
quasi-permissionless setting, defined in the next section, requires a
protocol to explicitly list the 
on-chain resources to which the setting's player activity assumption
applies.
Unless one wants to also consider permissioned protocols, one
generally restricts 
attention to protocols that use a reactive set of on-chain resources
(or some variation thereof).



\section{The quasi-permissionless setting}\label{defpermyay} 
  
\subsection{Defining the setting} \label{ss:qp}
  
For the purposes of the general quasi-permissionless setting,
a protocol is a tuple
$(\Sigma,\mathcal{O},\mathcal{C},\mathcal{S})$, where
$\Sigma, \mathcal{O}$, and $\mathcal{C}$ are as in the fully
permissionless and dynamically available settings, and where
$\mathcal{S}=\{ \mathtt{S}^{\ast}_1,\dots, \mathtt{S}^{\ast}_k \}$ is
a set of on-chain resources (possibly including stake~$\mathtt{S}$).
A {\em proof-of-stake (PoS)} protocol is then, for the purposes of
this paper, one in which
$\mathcal{O}$ contains no permitters and $\mathcal{S}$ contains 
no on-chain resources other than stake.

 

\vspace{0.2cm} 
\noindent \textbf{Quasi-permissionless setting.} Consider the protocol
$(\Sigma,\mathcal{O},\mathcal{C},\mathcal{S})$, where
$\mathcal{S}=\{ \mathtt{S}^{\ast}_1,\dots, \mathtt{S}^{\ast}_k \}$.
By definition, an execution of the protocol is {\em consistent with
  the quasi-permissionless setting} if the following hold for all
$i\in \{1,\ldots,k\}$:
\begin{itemize}

\item [(i)] If $\mathtt{S}_i^{\ast}=\mathtt{S}$ then, whenever~$p$ is
  honest and active at~$t$, with~$M$ the set of messages received
  by~$p$ at timeslots $\le t$ in this execution, every honest player that is
  assigned a non-zero balance by~$\mathtt{S}(S_0,M)$ is active
  at~$t$. (Recall that $\mathtt{S}(S_0,M)$ is shorthand for $\mathtt{S}(S_0,\mathcal{C}(M))$.)
  
\item [(ii)] If $\mathtt{S}_i^{\ast}\neq \mathtt{S}$ then, whenever~$p$
  is honest and active at~$t$, with~$M$ the set of messages received
  by~$p$ at timeslots $\leq t$ in this execution, every honest player
  that is assigned a non-zero balance by~$\mathtt{S}^{\ast}_i(M)$ is
  active at~$t$.

\end{itemize}
Thus, the quasi-permissionless setting insists on activity from every
honest player that possesses any amount of any of the on-chain
resources listed in the protocol description.\footnote{An interesting
  direction for future work is to investigate analogs of the
  quasi-permissionless setting for protocols that use only external
  resources.  For example, what becomes possible when the sum of the
  resource balances of each external resource is fixed and known?}
Every execution consistent with the quasi-permissionless setting is
also consistent with the dynamically available setting.\footnote{A
  protocol $(\Sigma,\mathcal{O},\mathcal{C},\mathcal{S})$ can be
  treated in the dynamically available setting by ignoring
  $\mathcal{S}$.}
%
The definition of $\rho$-bounded adversaries remains the same as in
Section \ref{bps}, and refers only to external resources and stake.

\vspace{0.2cm} 
\noindent \textbf{Proving impossibility results for the
  quasi-permissionless setting}.  
As noted in Section~\ref{react}, coupling the activity assumptions of
the quasi-permissionless setting with unrestricted protocol-defined
resources allows for ``essentially permissioned'' protocols, such as
protocols that treat the initial stake distribution as a
(non-reactive) protocol-defined resource encoding the identifiers
allowed to participate in some permissioned consensus protocol. Thus,
proving impossibility results for the quasi-permissionless setting
(that do not already apply to the permissioned setting) requires
imposing constraints on the protocol-defined resources allowed, such
as the reactivity restriction from
Section~\ref{react}. Theorem~\ref{niceneg} in Section~\ref{addedsec}
is an example of such an impossibility result.

\subsection{A positive result for the quasi-permissionless setting} \label{posres} 

The next result provides a blockchain protocol that, in the
quasi-permissionless setting, overcomes the impossibility results in
Theorems~\ref{psm}--\ref{ortheorem} for the dynamically available
setting. Specifically, in the quasi-permissionless,
authenticated, and partially synchronous setting:
\begin{itemize}
\item For every~$\rho < 1/3$, the blockchain protocol is
  $\rho$-resilient (with respect to consistency and liveness) and
  induces a $\rho$-resilient
  protocol for solving BA. (Cf.,
    Theorem~\ref{psm}.)\footnote{As discussed at length in
Section~\ref{ss:rw_results}, several previous works strongly
hint at such a result while stopping short of providing a full proof.}
  \item The protocol is $(1/3,1)$-accountable. (Cf.,
    Theorem~\ref{NTT}.)\footnote{Sheng et al.~\cite{sheng2021bft}
      show how to convert several permissioned protocols
      into~$(1/3,2/3)$-accountable protocols. Our protocol
      achieves~$(1/3,1)$-accountability
%
through more stringent requirements for block confirmation. Roughly,
an honest player will regard a 
block of
transactions as confirmed
only after seeing a full ``plausible history'' (at least
within each ``epoch'') leading to confirmation of that block.}

\item The protocol is optimistically responsive with
  $\Delta^*=0$. (Cf., Theorem~\ref{ortheorem}.)
\end{itemize}
Moreover, the protocol can be made deterministic.
        


\begin{theorem} \label{posPoS} 
Consider the quasi-permissionless, authenticated, and partially
synchronous setting. For every
$\rho<1/3$, there exists a deterministic and $\rho$-resilient PoS
blockchain protocol.
Moreover, the protocol can be made $(1/3,1)$-accountable
and optimistically responsive. 
\end{theorem}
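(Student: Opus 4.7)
The plan is to exhibit a proof-of-stake adaptation of a PBFT-style chained protocol along the lines of HotStuff, organized into epochs, and to verify that this construction attains each of the four claimed properties. Time is partitioned into epochs whose stake distributions are fixed snapshots derived from transactions confirmed in earlier epochs; within an epoch, a deterministic leader-rotation schedule (weighted by the in-epoch stake) runs a pipelined three-phase voting protocol in which every quorum certificate collects signatures carrying strictly more than $2/3$ of the epoch's stake. The confirmation rule $\mathcal{C}$ extracts, from the messages received by a player, the longest chain of blocks certified by a valid sequence of QCs terminating in a HotStuff-style three-chain commit pattern; only transactions inside such committed blocks are declared confirmed.

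For consistency I would apply the standard HotStuff safety argument within each epoch: two conflicting committed blocks would force two disjoint sets of stake-weighted QCs, which by pigeonhole implies that identifiers controlling more than $1/3$ of the epoch's stake signed conflicting votes, contradicting $\rho<1/3$. Accountability with $\rho_1=1/3$ then comes for free: the blame function $F$ inspects any disseminated set $M$ of messages for pairs of commit-supporting QCs whose underlying blocks conflict, and outputs the set of identifiers appearing in both; this set is always a subset of Byzantine identifiers (honest players never double-sign), and when a consistency failure occurs, the in-epoch stake snapshot allocates these identifiers at least $1/3$ of the total stake, so in the sense of Section \ref{ss:da_imp2} the set has weight at least $1/3$. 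Because the adversary bound only enters safety via the double-signing argument, this accountability guarantee holds even for $\rho_2 = 1$.

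For liveness I would use a view-change routine in which, once a view times out, the quasi-permissionless assumption guarantees that every honest epoch-validator is active and will broadcast its highest-seen QC, so the next deterministically-chosen leader can assemble a New-View certificate and extend the chain. After GST, the view timers synchronize honest validators within $O(\Delta)$ timeslots onto a common view, and as soon as that view's leader is honest (guaranteed within a bounded number of rotations by the $\rho<1/3$ bound), commitment follows. Optimistic responsiveness with $\Delta^{\ast}=0$ follows because, in an all-honest execution, no timer ever expires: each phase advances the instant a stake-weighted QC is collected, so post-GST confirmation latency scales only with the realized delay $\delta$. The reduction from state machine replication to Byzantine agreement is standard: each honest player submits its input transaction, waits for a single transaction from each player to be committed, and outputs a function of the committed prefix.

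The main obstacle I anticipate is the interaction between the epoch-indexed stake snapshot and partial synchrony during epoch transitions: if the snapshot for epoch $i+1$ is drawn from transactions committed only a short time before the epoch starts, honest players may, during a long asynchronous period, disagree about who the epoch-$i+1$ validators are, producing a split-brain in which two disjoint quorums form in two different instances of epoch $i+1$. To rule this out I would insist that the epoch-$i+1$ stake snapshot be drawn from a prefix of transactions committed at least $\ell$ timeslots before the start of epoch $i+1$, where $\ell$ is the liveness parameter of the protocol; by the consistency of epoch $i$ all honest players agree on this prefix, and by liveness of epoch $i$ every honest player has learned it in time. Bootstrapping the first epoch from the determined input $S_0$, and arguing inductively that each epoch inherits consistency and liveness from its predecessor, is what closes the apparent circularity and lets the construction satisfy all three properties simultaneously.
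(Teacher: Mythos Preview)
Your high-level plan matches the paper's: it too builds a PoS variant of HotStuff, organized into epochs with stake-weighted supermajority QCs, deterministic leader rotation, the standard three-phase safety argument yielding both consistency and a double-vote blame function, and a view-synchronization liveness argument that specializes to optimistic responsiveness when all players are honest. So the skeleton is right.

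Where your plan diverges, and where it is underspecified enough to be a real gap, is the epoch-transition mechanism. You propose to draw the epoch-$(i{+}1)$ snapshot from a prefix committed ``at least $\ell$ timeslots before the start of epoch $i{+}1$'' and justify agreement on it ``by liveness of epoch~$i$.'' But liveness only holds after GST, whereas epoch transitions can occur arbitrarily many times before GST; your argument therefore does not rule out the split-brain you correctly flag. There is also a circularity in ``$\ell$ timeslots before the start of epoch $i{+}1$'': the start of the next epoch is exactly what is in dispute. The paper resolves this differently and without any appeal to time. Epochs are \emph{height}-based: epoch~$e$ is responsible for heights in $(eN,(e{+}1)N]$, and the epoch-$e$ validator set keeps proposing and voting \emph{past} height $(e{+}1)N$ until some epoch-$e$ block of height $\geq (e{+}1)N$ receives stage-1, -2, and -3 QCs. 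Its ancestor at height exactly $(e{+}1)N$ becomes the agreed genesis $B_{g,e+1}$ for the next epoch; any epoch-$e$ blocks above that height are retained as evidence but their transactions are discarded. Agreement on $B_{g,e+1}$ then follows purely from the within-epoch HotStuff safety argument (two incompatible such blocks would force $>N/3$ double-voting stake), with no reliance on timing or on liveness of the previous epoch. This ``overshoot and truncate'' device is the idea your plan is missing.

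A smaller point: your accountability sketch gives $(1/3,\rho_2)$ for the obvious reason, but you do not explain why $\rho_2=1$ rather than the $\rho_2=2/3$ that standard HotStuff forensics yields. The paper obtains $\rho_2=1$ by strengthening the confirmation rule: a block is confirmed only when the player has received a full ``plausible history'' within the epoch (in particular, stage-1 QCs for every ancestor back to $B_{g,e}$). This ensures that whenever a consistency failure occurs, the two conflicting stage-3-certified blocks are both $M$-valid for some honest player, so the intersecting double-voters are extractable from disseminated messages regardless of how much stake the adversary controls.
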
 

We present the proof of Theorem~\ref{posPoS} in Section~\ref{thirteen}.
The protocol used in the proof, which we call the PoS-HotStuff
protocol, can be viewed as an extension of the (permissioned) HotStuff
protocol~\cite{yin2019hotstuff} to a permissionless PoS protocol. (One
could similarly adapt a protocol such as
PBFT~\cite{castro1999practical}.)  Although the protocol is designed
to make the proof of Theorem~\ref{posPoS} as straightforward as
possible, the proof is complicated by a number of difficulties that
may not be immediately apparent.  
For example, one issue revolves
around the heights at which the protocol ensures that players receive
\emph{direct} confirmation of a block (e.g., for HotStuff, receiving
three quorum certificates (QCs) for a block as opposed to only for one
of its descendant blocks).
Player changes at an indirectly confirmed height 
can lead to consistency violations. One approach to 
addressing this
issue is to ensure direct confirmation of blocks at all heights, 
but honest leaders must then
re-propose blocks initially proposed by Byzantine leaders, which in
turn 
threatens liveness (which requires the confirmation of honestly
proposed blocks). Our proof instead
divides the execution into epochs, each of which contains potentially
infinitely many views. Each epoch is concerned with producing
confirmed blocks up to a certain height, but may also produce blocks
of greater height that receive three QCs in three rounds of voting but 
are not considered confirmed.  More generally, the proof of
consistency must now address the fact
that different players may see different versions of the
blockchain (especially during asynchrony) and therefore possess
different beliefs
as to who should be producing and voting on the block in each
round. In particular, the definition of a ``quorum certificate'' is
now both player- and time-relative.  
See Section~\ref{overv} for a more detailed overview of some of the
technical challenges involved in the proof of Theorem~\ref{posPoS}.

It would be interesting to extend versions of Theorem~\ref{posPoS} to
one or more of the major PoS blockchain protocols that have been
deployed in practice.

\section{Long-range attacks} \label{longr} 

    
\vspace{0.2cm} 
\noindent \textbf{A weakly dynamic adversary}.  Long-range attacks on
PoS protocols fundamentally involve the corruption of honest players
(i.e., a non-static adversary).  To make our impossibility result as
strong as possible, we consider a particularly restricted type of
dynamic adversary.  Given a protocol with security parameter~$\epsilon
\ge 0$ (in the sense of Section~\ref{ss:liveness}), we consider only
protocol instances in which:
\begin{itemize}

\item Byzantine players have no initial stake.

\item Except with probability at most~$\epsilon$, no stake is transferred from
  honest to Byzantine players. 

\item 
%
At exactly one timeslot
$t_c$ in a protocol execution, there may be a \emph{corruption event}
for some set~$P$ of players that have ``cashed out,'' in the sense
that:
for every honest player $p$, if $\mathtt{T}$ is the set of
transactions confirmed for $p$ at $t_c$, 
then
$\mathtt{S}(S_0,\mathtt{T})$ allocates zero stake to all identifiers 
belonging to players in $P$.

If a corruption event occurs at $t_c$, then: 
\begin{itemize} 

\item The state diagram of each player $p\in P$ may be replaced by an
  arbitrary state diagram, and~$p$ may begin in any state of that
  diagram at $t_c$.

\item If $P_B$ was the set of initially Byzantine players, then at
  $t_c$ the set of Byzantine players becomes $P_B^{\ast}:=P_B\cup P$.

\item 
As in Section \ref{ss:permitted}, message entries permitted for one Byzantine
player are permitted for all Byzantine players: if $p\in P_B^{\ast}$
and $t \ge t_c$, then all entries 
permitted for $p$ at $t$ are also permitted for all $p'\in P^{\ast}_B$.
  
\end{itemize} 

\end{itemize}
The first restriction ensures that an adversary cannot  use the
initial stakes of the Byzantine players to cause violations of
consistency or liveness. The
second ensures that the environment cannot conspire with Byzantine
players to cause such violations by issuing transactions that would
force the transfer of stake to them. The third restriction (to
``cashed-out'' honest players) prevents an adversary from causing such
violations by using a corruption event to acquire stake
from stake-holding honest players.

\vspace{0.2cm} 

\noindent \textbf{Desired property.} 
A blockchain protocol is \emph{invulnerable to simple long-range
  attacks} if it satisfies consistency and liveness
(Section~\ref{ss:liveness}) with a dynamic adversary of the
restricted type above.
  
\vspace{0.2cm} Theorem \ref{lrtheorem} below establishes that PoS
protocols cannot be invulnerable to simple long-range attacks unless
they make use of cryptographic primitives with certain time-dependent
features, such as VDFs or ephemeral
keys.\footnote{Section~\ref{modhash} shows how VDFs and ephemeral keys
  can be modeled
  within our framework via suitably defined oracles. 
Neither of these oracles is 
time malleable in the sense defined in this section (with the VDF
oracle violating the first condition and the ephemeral keys oracle the
second condition). One-shot signatures \cite{amos2020one} could also
potentially play a role similar to ephemeral keys in defending against
long-range attacks.}


\vspace{0.2cm} 
\noindent \textbf{Time malleable oracles}. We call an oracle $O$ 
\emph{time malleable} if it satisfies the following  conditions
for each response function $f$ in the support of $O$ and for
  every pair $(q,t)$:

\begin{itemize} 

\item  $f(q,t)=(r,t)$ for some $r$, and; 

\item If  $f(q,t)=(r,t)$ then  $f(q,t')=(r,t')$ for all $t'$. 

\end{itemize}

%
\noindent These conditions assert that every oracle response is
delivered instantaneously and is independent of the timeslot in which
the query is made.
 
\begin{theorem} \label{lrtheorem}
For every $\epsilon < \tfrac{1}{4}$, no PoS blockchain protocol that
uses only time malleable oracles is 
invulnerable to simple long-range attacks with security
parameter~$\epsilon$ in the quasi-permissionless, authenticated, and synchronous setting.
\end{theorem}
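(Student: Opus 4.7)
My plan is to derive a contradiction by constructing two mirror-image protocol instances $E_3$ and $E_4$ whose view for a distinguished honest player $p^*$ has the same distribution, while consistency and liveness force $p^*$ to confirm conflicting transactions in the two instances. The setup will involve a single initial stakeholder $a$, two further players $b_1, b_2$ with disjoint identifier sets, and $p^*$; with $\ell$ denoting the liveness parameter I will pick timeslots $t_p > \ell$, $t_c > t_p + \Delta$, and $t_L > t_c + \Delta + \ell$. In $E_3$ the environment issues at time $1$ a single stake-transfer transaction $\mathtt{tr}_2$ from $a$ to $b_2$; player $a$ is honest until $t_c$ and Byzantine thereafter, $b_2$ is honest, $b_1$ is Byzantine, and $p^*$ is honest and active only at $t_p$ and $t_L$. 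Execution $E_4$ will be defined by swapping the labels $b_1 \leftrightarrow b_2$ and the transactions $\mathtt{tr}_1 \leftrightarrow \mathtt{tr}_2$. In each execution the Byzantine $b_i$ will stay silent until $t_c$; at $t_c$ the corruption event corrupts $a$, after which the coalition $\{a, b_i\}$ will fabricate the entire history of the mirror instance, signing all $a$-messages with $a$'s newly-acquired private keys and querying every oracle in $\mathcal{O}$ with the queries the honest coalition would have made in the mirror world. The legitimacy of the corruption event at $t_c$ will follow from liveness applied at $t_p$ (whose hypothesis is trivially met since no Byzantine messages have been disseminated by then): with probability at least $1-\epsilon$, every honest player has confirmed the real stake transfer by $t_p$, so $a$ is cashed out from every honest player's viewpoint at $t_c$.

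\textbf{Indistinguishability.}
The central step will be to show that the multiset $M$ of messages $p^*$ has received by $t_L$ has the same distribution in $E_3$ as in $E_4$. Because every oracle in $\mathcal{O}$ is time malleable, its response is delivered instantaneously and depends only on the query, so under a coupling that fixes the sampled response functions across the two instances every post-$t_c$ forged message in $E_3$ can be made bit-for-bit identical to its honestly generated counterpart in $E_4$, and vice versa. Writing $M_3, M_4$ for the two multisets, one then obtains $M_3 = M_4$ under the coupling; and since $\mathcal{C}$ is a function of the message multiset alone, $\mathcal{C}(M_3) = \mathcal{C}(M_4)$.

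\textbf{Contradiction.}
In $E_3$, liveness applied at $t_p$ gives $\mathtt{tr}_2 \in \mathcal{C}(M_3)$ with probability at least $1-\epsilon$; consistency (no roll-backs, no conflicts) then forbids $\mathtt{tr}_1 \in \mathcal{C}(M_3)$, so $\Pr[\mathtt{tr}_1 \notin \mathcal{C}(M_3)] \ge 1 - 2\epsilon$. The symmetric analysis of $E_4$ yields $\Pr[\mathtt{tr}_1 \in \mathcal{C}(M_4)] \ge 1 - 2\epsilon$. Combined with $\mathcal{C}(M_3) = \mathcal{C}(M_4)$, these bounds force $1 - 2\epsilon \le 2\epsilon$, i.e.\ $\epsilon \ge 1/4$, contradicting the assumption $\epsilon < 1/4$.

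\textbf{The main obstacle.}
The delicate step will be verifying the equality $M_3 = M_4$ in the presence of $p^*$'s brief active timeslot at $t_p$: during that timeslot $p^*$ receives (and may itself disseminate) different real-history messages in the two executions, and the ensuing responses by honest players could in principle break the symmetry at $t_L$. I plan to address this by having the Byzantine coalition in each execution rebroadcast, using the time-malleability of $\mathcal{O}$ and $a$'s stolen signing keys, the mirror-world honest-player responses that differ from the corresponding real-world responses, so that the total bag of messages received by $p^*$ at $t_L$ becomes symmetric across the two instances. Making this precise for a completely arbitrary protocol, rather than for the (usual) case in which stakeless players play only a passive role in message dissemination, is the principal technical task remaining in the plan.
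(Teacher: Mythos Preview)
Your overall strategy---two conflicting stake-transfers out of a single initial stakeholder, a corruption event once that stakeholder is cashed out, and a Byzantine simulation of the ``other'' history made possible by time-malleability---is exactly the paper's strategy. The obstacle you flag, however, is genuine and your proposed patch does not close it. At timeslot $t_p$ the honest observer $p^*$ sees different message sets in $E_3$ and $E_4$ and will, for a general protocol, disseminate different messages signed with $p^*$'s own identifiers; since $p^*$ is never corrupted, neither Byzantine coalition can reproduce those signatures, so the two bags $M_3$ and $M_4$ at $t_L$ cannot be made to coincide (or even to have the same distribution). Your ``rebroadcast mirror-world responses'' idea handles reactions signed by $a$ or by the $b_i$, but not messages signed by $p^*$ itself, and those are precisely the ones that break the symmetry.

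The paper sidesteps this by dispensing with the external observer entirely. It sets up two all-honest reference instances $I_1,I_2$ (player sets $P_0\cup P_1$ and $P_0\cup P_2$, with conflicting transaction sets $\mathtt{T}_1,\mathtt{T}_2$) and a \emph{single} attack instance $I_3$ in which $P_1$ is honest and $P_2$ Byzantine. After the corruption of $P_0$ the coalition simulates all of $I_2$ in one shot, so that each honest $P_1$-player's message set at $t_c{+}1$ is distributed as $M_1\cup M_2$. Since $\mathcal{C}(M_1\cup M_2)$ is a valid set and hence cannot contain both of the conflicting $\mathtt{T}_1,\mathtt{T}_2$, a pigeonhole step (rather than your symmetry between two attack instances) shows that with probability at least $\tfrac12-\epsilon$ it omits some transaction of $\mathtt{T}_1$; that is a roll-back for a $P_1$-player between $t_c$ and $t_c{+}1$, and no third-party indistinguishability is ever invoked. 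If you want to salvage your construction, the cleanest fix is to delete $p^*$ and check for a roll-back at $b_2$ immediately after the forgery---at which point you have essentially reproduced the paper's argument.
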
 
    
\begin{proof} 
Towards a
contradiction, suppose there is a PoS blockchain protocol
$(\Sigma,\mathcal{O},\mathcal{C},\mathcal{S})$ (with $\mathcal{S}=\{ \mathtt{S} \}$) that is invulnerable to 
simple long-range attacks in the setting described,
with security parameter $\epsilon < \tfrac{1}{4}$ and 
liveness parameter~$\ell$ (see Section~\ref{ss:liveness}).  
Fix all parameter inputs 
(see Section~\ref{ss:inputs}), which will be the same for all protocol
instances considered.  Assume that the duration~$d$ is larger
than~$\ell$.  Choose non-empty and pairwise disjoint sets of
players~$P_0$, $P_1$, and $P_2$. 
Let~$S_0$ denote the protocol's initial stake distribution, which we
assume is not the all-zero allocation.  Assign identifiers to players
such that every identifier assigned non-zero stake in~$S_0$ belongs to
$\mathtt{Id}(p)$ for some $p \in P_0$.  Set
$\mathtt{Id}_i:= \cup_{p\in P_i} \mathtt{Id}(p)$ for
$i \in \{0,1,2\}$.
Define~$\mathtt{T}_1$ and~$\mathtt{T}_2$ to be disjoint sets
of transactions, each valid with respect to~$S_0$, such that:
\begin{enumerate} 

\item[(i)] For $i\in \{ 1,2 \}$, each $\mathtt{tr}\in \mathtt{T}_{i}$
  takes some $id\in \mathtt{Id}_0$ such that
  $\mathtt{S}(S_0,\emptyset,id)>0$ and transfers all the
  corresponding stake to an identifier in $\mathtt{Id}_{i}$.
    
\item[(ii)] For $i\in \{ 1,2 \}$, $\mathtt{S}(S_0,\mathtt{T}_{i})$
  allocates all stake to identifiers in $\mathtt{Id}_{i}$.
\end{enumerate} 
%
The sets $\mathtt{T}_1$ and $\mathtt{T}_2$ exist because we assume
that stake is transferable (see Section~\ref{bps}).  Note that the set
$\mathtt{T}_1 \cup \mathtt{T}_2$ is not valid with respect to~$S_0$.
This completes the setup.

We first consider two instances of the protocol (in the sense of Section~\ref{instance}), $\mathtt{I}_1$
and $\mathtt{I}_2$.  For $i\in \{ 1,2 \}$, instance $\mathtt{I}_i$ is
specified as follows:
\begin{itemize} 

\item $\mathcal{P}:=P_0 \cup P_{i}$, with all players honest.

\item All players are active at all timeslots $\leq d$. 

\item All disseminations are received by all other players at the next
  timeslot.

\item The environment sends all transactions in $\mathtt{T}_i$ to all
  players at timeslot 1.

\end{itemize} 
%
Choose $t^{\ast}>\ell$.  
Generate ``coupled executions'' of the instances~$I_1$ and~$I_2$ by
first sampling a single response function
(shared across the two executions)
for each oracle $O \in
\mathcal{O}$ according to the distribution specified by~$O$
and then sampling state transitions
(timeslot-by-timeslot) independently in each of the instances.
For~$i=1,2$, let~$M_i$, a random variable, denote the set of all
messages disseminated by players at timeslots $\le t^*$ in 
such an execution of the protocol instance~$I_i$.

Liveness implies that, for each $i=1,2$,
$\mathbb{P}(\mathcal{C}(M_i))=\mathtt{T}_i) \ge 1 -
\epsilon$.
Because the output of a confirmation rule is a valid set of
transactions (see Section~\ref{bps}) and
$\mathtt{T}_1 \cup \mathtt{T}_2$ is not valid with respect to~$S_0$
(by construction),
there must be a value of $i^* \in \{1,2\}$
such that, with probability at least
$\tfrac{1}{2}(1-2\epsilon) = \tfrac{1}{2}-\epsilon$ over the
messages~$(M_1,M_2)$ disseminated in the coupled executions:
\begin{itemize}

\item [(1)] $\mathcal{C}(M_i)=\mathtt{T}_i$ for $i=1,2$;

\item [(2)] $\mathcal{C}(M_1 \cup M_2)$ does not contain all of
  $\mathtt{T}_{i^*}$.

\end{itemize}
Without loss of generality, suppose $i^{\ast}=1$. 

We then consider a third protocol instance $\mathtt{I}_3$, specified as follows: 
\begin{itemize} 

\item Players in $P_0$ and $P_1$ are initially honest, while players
  in $P_2$ are initially Byzantine.

\item All players are active at all timeslots $\leq d$. 

\item All disseminations are received by all other players at the next
  timeslot.

\item The environment sends all transactions in $\mathtt{T}_1$ to all
  players at timeslot 1.

\item Players in $P_2$ perform no actions until $t^{\ast}$, and
  perform no actions at any timeslot unless a corruption event occurs.

\item If all transactions in $\mathtt{T}_1$ are confirmed for all
  players in $P_0\cup P_1$ by $t^{\ast}$,
  then a corruption event for
  players in $P_0$ occurs at $t_c=t^{\ast}$.
(Thus, a corruption event occurs with probability at least~$ 1 - \epsilon$.)
The (Byzantine) players of~$P_2$ are in a position to detect this
event because they receive the same messages at the same timeslots as
the players of~$P_0 \cup P_1$.

\item 
The environment sends the players in $P_0 \cup P_2$ all transactions
in $\mathtt{T}_2$ at $t_c$.  (Provided a corruption event occurs,
these transactions do not transfer stake from honest to Byzantine
players.)\footnote{If we allowed the environment to issue transactions
  as a function of players' and oracles' coin flips (see
  footnote~\ref{foot:adaptive} in Section~\ref{instance}), then these transactions could be
  issued only when there is a corruption event, guaranteeing that no
  stake is ever transferred from honest to Byzantine players.}

\item If a corruption event occurs at timeslot $t_c$, then at this
  timeslot the players in $P_0\cup P_2$ simulate the first $t_c$ many
  timeslots of the protocol instance $\mathtt{I}_2$, disseminating all
  messages as dictated by the protocol instructions. 

\end{itemize} 
%
Crucially, when a corruption event occurs at $t_c$, the players in
$P_0\cup P_2$ are capable of simulating the first $t_c$ timeslots of
the protocol instance $\mathtt{I}_2$.  This is true because:
\begin{enumerate} 

\item[(i)] All oracles in $\mathcal{O}$ are time
  malleable. (Otherwise, the players of~$P_0 \cup P_2$ either wouldn't
  be able to reconstruct oracle queries from earlier timeslots
  $t' < t_c$ at the timeslot~$t_c$, or wouldn't receive the oracle
  responses quickly enough to simulate many timeslots in a single
  one.)

\item[(ii)] The instructions carried out by each Byzantine player can
  depend on the messages and oracle responses received by other Byzantine
  players, and if $p\in P_{B}^{\ast}$, then any entry/message
  permitted for $p$ is permitted for all $p'\in P_B^{\ast}$.  (If
  Byzantine players couldn't instantaneously simulate each other, they
  would need to communicate explicitly to carry out the simulation,
  and message delays would then preclude the simulation of many
  timeslots in a single one.)

\end{enumerate} 
%
%
Let $M$ denote the set of all messages disseminated by the end of
timeslot $t_c$ in instance $\mathtt{I}_3$. By construction,
conditioned on a corruption event, $M$ has the same distribution as
the union $M_1 \cup M_2$ of messages disseminated in the coupled
executions of~$I_1$ and~$I_2$ (conditioned on the event
that~$\mathcal{C}(M_1)=\mathtt{T}_1$).  Intuitively, in this case, an
execution of~$I_3$ generates a ``superposition'' of executions
of~$I_1$ and~$I_2$---one from the ``honest execution'' of~$I_1$ up to
timeslot~$t_c$, the other from the simulation of~$I_2$ that the
Byzantine players carry out at~$t_c$---which share a common set of
oracle response functions.  Thus, whenever conditions~(1) and~(2)
above hold for~$M_1$ and~$M_2$ (which in particular implies
$\mathcal{C}(M_1)=\mathtt{T}_1$ and a corruption event), instance
$\mathtt{I}_3$ sees a consistency violation, with at least one
transaction of $\mathtt{T}_1$ flipping from confirmed (at~$t_c$) to
unconfirmed (at~$t_c+1$).  Because~(1) and~(2) hold with probability
at least $1/2-\epsilon$ and $\epsilon < \tfrac{1}{4}$, the given
protocol does not satisfy consistency (with security
parameter~$\epsilon$), a contradiction.
\end{proof}

\noindent \textbf{Discussion}.  The proof of Theorem~\ref{lrtheorem}
uses the fact that, in the quasi-permissionless setting, the
set~$\mathcal{P}$ of players may be unknown ($P_0 \cup P_1$,
$P_0 \cup P_2$, or $P_0 \cup P_1 \cup P_2$).  The proof holds when
there is a finite and determined upper bound on the number of players
(namely, $|P_0 \cup P_1 \cup P_2|$).  Neither sybils nor the
possibility of player inactivity are used in the proof (players are
always active and can use unique identifiers).  Alternatively, the
proof can be rephrased as working with a known player set
($\mathcal{P} = P_0 \cup P_1 \cup P_2$) but relying instead on the
possibility that the players of~$P_1$ or~$P_2$ might be permanently
inactive.

Given that all players are always active and all messages are received
promptly after dissemination, the proof of Theorem~\ref{lrtheorem}
also relies on our assumption that the confirmation rule~$\mathcal{C}$
is a function only of the set of messages received by an honest
player, and not on the timeslots in which the player received those
messages (see footnote~\ref{foot:conf_rule} in Section~\ref{bps}).
Intuitively, a timing-dependent confirmation rule could, in instance
$\mathtt{I}_3$, resolve the conflicts between $\mathtt{T}_1$ and
$\mathtt{T}_2$ according to the timeslots at which a player received
the messages in~$M_1$ and~$M_2$ (as opposed to using the
timing-independent output $\mathcal{C}(M_1 \cup M_2)$ assumed in the
proof).  The proof can be modified to apply also to timing-dependent
confirmation rules by using players that become active only after the
dissemination of all the messages in~$M_1$ and~$M_2$; such players
would receive all those messages at the same time and would not know
which way to resolve the conflicts between the transactions
in~$\mathtt{T}_1$ and $\mathtt{T}_2$.\footnote{One of the simpler
  practical approaches to repelling long-range attacks is to require
  that newly active players communicate out-of-protocol with other
  (hopefully honest) players that {\em have} been active in order to
  learn which older messages should be regarded as confirmed; this
  idea relates to the ``weak subjectivity'' notion proposed by
  Buterin~\cite{wsbut}, and is also explored by Daian, Pass, and Shi~\cite{daian2019snow}.}  For similar reasons, the proof
can also be extended to timing-dependent confirmation rules with
always-active players in the partially synchronous setting (with
unbounded message delays substituting for unbounded player
inactivity).

  

\section{Restricting environments instead of executions} \label{addedsec}

Thus far, we have focused on possibility and impossibility results in
various settings with respect to a $\rho$-bounded adversary (for
some~$\rho \in [0,1]$), meaning for executions of a protocol in which
Byzantine players never control more than a $\rho$ fraction of an
external resource or of the stake controlled by active players (see
Section~\ref{bps}). The restriction on external resources, which
evolve independently of a protocol's execution, is straightforward to
interpret and mirrors the restrictions on the fraction of Byzantine
players that are familiar from the study of permissioned consensus
protocols.

Stake, on the other hand, evolves in tandem with a protocol's
execution. (The closest analog in a traditional permissioned state machine
replication model would be if the set of Byzantine players was somehow
a function of the set of transactions confirmed by the protocol.)  The
amount of stake controlled by Byzantine players---and, thus, whether
an execution qualifies as one for which a $\rho$-resilient protocol
must guarantee consistency and liveness---is determined by the set of
confirmed transactions.
Even after fixing the initial stake distribution and the transactions
issued by the environment, this set can depend 
on the realized network delays, the strategies of the
Byzantine players, and (somewhat circularly) on the decisions of the
protocol itself.

Section~\ref{ss:circle} sharpens this point with a concrete example
that shows how network delays can result in all stake being
transferred to a single Byzantine player, despite a seemingly
reasonable initial stake distribution and set of issued
transactions. Section~\ref{ss:circle2} builds on this idea and, in
Theorem~\ref{niceneg}, shows that no blockchain protocol can guarantee
consistency and liveness (for all executions)
with respect to any non-trivial adversary in the quasi-permissionless
and partially synchronous setting.  Section~\ref{envrho} introduces a
natural restriction on the transactions issued by the
environment---intuitively, that a transaction is issued only once its
``prerequisite transactions'' have been confirmed by some honest
player---under which the guarantees of Theorem~\ref{posPoS} can be
extended (with the same protocol) to apply no matter what the realized
network delays and Byzantine player strategies.


\subsection{The Payment Circle}\label{ss:circle}

A simple example illustrates how, even with an initial distribution
that allocates Byzantine players at most a $\rho$-fraction of the stake, and
even with a seemingly benign set of transactions, specific executions
may lead to Byzantine players controlling more than a~$\rho$ fraction
of the stake.
Consider a set $\{p_0, p_1,\ldots,p_{n-1}\}$ of~$n$ players, each with
a single unit of initial stake.  Suppose~$p_0$ is Byzantine and the
other $n-1$ players are honest.  Imagine that, in each of~$n$
consecutive rounds, for each player~$p_i$,
the environment issues a new transaction that
transfers a single unit of stake from~$p_i$
to~$p_{i+1 \bmod n}$. Let $\mathtt{T}$ denote the set of these~$n^2$
transactions.
%
%
Each round of~$n$ transactions leaves everyone's stake
unaffected---each~$p_i$ still controls one unit of stake---so what
could possibly go wrong?


Suppose we are working in the partially synchronous setting, and that
GST is well after the~$n$ rounds in which the transactions
$\mathtt{T}$ are issued by the environment.
Perhaps, for some subset $\mathtt{T}' \subseteq \mathtt{T}$,
only one honest player~$p$ receives the transactions in $\mathtt{T}
\setminus \mathtt{T}'$ from the environment.
Further, prior to GST, network delays may prevent~$p$ from
disseminating these transactions to other honest players.
%
If $\mathtt{T}'$ is a valid set of transactions and GST is
sufficiently large, liveness then forces the honest players other
than~$p$ to confirm all the
transactions in $\mathtt{T}'$.  (For these players,
the execution is indistinguishable from a synchronous
execution in which the transactions in $\mathtt{T}'$ are the only
ones ever issued by the environment.)

Now choose $\mathtt{T}'$ to contain the first~$i$ of the~$n$ transfers
from~$p_i$ to~$p_{i+1 \bmod n}$ (for each~$i=0,1,2,\ldots$).  This is
a valid set of transactions, with the (Byzantine) player~$p_0$ ending up
with all $n$ units of stake!



\subsection{A General Impossibility Result}\label{ss:circle2}

The payment circle example demonstrates that the following assumption
does not suffice to prevent Byzantine players from acquiring all the
stake (even when the environment never issues conflicting pairs of transactions): ``the execution of all transactions issued  prior to a given timeslot always results in Byzantine
players controlling at most a $\rho$ fraction of the stake.''
Any pure proof-of-stake protocol would fail to satisfy
liveness or consistency in the payment circle example.
Next, in Theorem~\ref{niceneg},
we prove a general impossibility result, which shows that analogous
examples plague every protocol that uses an arbitrary collection of
external resources and on-chain resources that, like
stake, are reactive in the sense defined in Section~\ref{react}.

For simplicity, and to make our impossibility result as strong as
possible, we assume that the transactions issued by the (adversarial)
environment are restricted to ``simple payments'' from one
identifier to another (like those in the payment circle example).  For
example, we can work with the UTXO model of Bitcoin.  We can interpret
the initial stake distribution~$S_0$ as an initial set of UTXOs (each
associated with some identifier).  A
transaction then spends one or more existing UTXOs (all associated
with the same identifier) and creates one or more new ones (each
associated with an arbitrary identifier), and two transactions
conflict if and only if they spend a common UTXO.  
The sum of the values of the UTXOs created by a transaction must be no
larger than the sum of the values of the UTXOs that it spends.
A set $\mathtt{T}$
of transactions is valid relative to~$S_0$ if and only if it contains
no conflicting transactions and every transaction spends only UTXOs
that are either in~$S_0$ or created by some other transaction
of~$\mathtt{T}$.

\vspace{0.2cm} 
\noindent \textbf{Maximal valid sets of transactions}. For the
statement of Theorem \ref{niceneg} below, we
cannot directly impose requirements on the stake distribution
resulting from ``the set of transactions
issued by the environment,'' because that set may not be valid
relative to the initial stake distribution~$S_0$. Instead, we impose a
requirement on every maximal valid subset of the transactions issued
by the environment up until some timeslot.

Formally, we say that
$\mathtt{T}$ is a \emph{maximal valid set of transactions at~$t$} if:
\begin{itemize}
\item [(i)] $\mathtt{T}$ is valid relative to $S_0$;
  \item [(ii)] Every transaction in
$\mathtt{T}$ is received by
at least one honest player from the environment 
at some timeslot $\leq t$, and;
\item [(iii)] No proper superset of $\mathtt{T}$
satisfying (i) and (ii) is valid relative to $S_0$.
\end{itemize}
Call an
environment {\em maximally $\rho$-bounded} with respect to the initial
stake distribution~$S_0$ if:
\begin{itemize}
\item whenever the environment sends 
a transaction  $\mathtt{tr}$ to some player at timeslot~$t$, 
it sends $\mathtt{tr}$ to some honest player at a timeslot $\le t$; and
\item for every~$t$ and every $\mathtt{T}$ which
is a maximal valid set of transactions at $t$,
$\mathtt{S}(S_0,\mathtt{T})$ allocates Byzantine players at most a
$\rho$-fraction of the stake. 
\end{itemize}
In particular, taking~$t=0$ in the second condition,
its conclusion must hold for the initial distribution~$S_0$.

\begin{theorem} \label{niceneg} 
Consider the quasi-permissionless, authenticated, and partially
synchronous setting.
For every~$\rho > 0$ and $\epsilon < \tfrac{1}{5}$, there is no
blockchain protocol $(\Sigma,\mathcal{O}, \mathcal{C}, \mathcal{S})$
that uses a reactive set $\mathcal{S}$ of on-chain resources and
satisfies consistency and liveness with security parameter~$\epsilon$
with respect to an externally $\rho$-bounded adversary and a maximally
$\rho$-bounded environment.
%
%
%
\end{theorem}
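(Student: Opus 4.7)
The plan is to adapt the CAP-style argument used to establish Theorem~\ref{psm} to the quasi-permissionless setting, using the reactivity of~$\mathcal{S}$ to effectively recover the ``inactive stakeholder'' phenomenon that drives the CAP impossibility in the dynamically available setting. The construction exploits the payment-circle-type environments described in Section~\ref{ss:circle}, whose non-maximal valid subsets can concentrate stake adversarially even while every maximal valid subset remains $\rho$-balanced.

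Concretely, I would fix disjoint player sets $P_0, P_1, P_2$ with~$P_0$ containing an initially-Byzantine player of negligible initial stake and $P_1, P_2$ collectively holding essentially all of the initial stake. I would then design the environment to issue a family of pair-wise conflicting transactions, in the style of the payment circle, such that every maximal valid subset leaves Byzantine players with at most a $\rho$-fraction of stake---so the environment is maximally $\rho$-bounded---but there are specific valid subsets $\mathtt{T}_1$ and $\mathtt{T}_2$ whose confirmation strips the players of~$P_1$ or~$P_2$ (respectively) of all of their stake. Using pre-GST message delays permitted by partial synchrony, I would set up two symmetric executions $I_1$ and $I_2$ that, prior to GST, look identical to $P_1 \cup P_0$ and $P_2 \cup P_0$ respectively, with the environment delivering only~$\mathtt{T}_1$ to one group and only~$\mathtt{T}_2$ to the other. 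By liveness applied at $\max(\text{GST},t)+\ell$ for the pre-GST-delivered transactions, the protocol must commit to~$\mathtt{T}_1$ (resp.~$\mathtt{T}_2$) by time $\text{GST}+\ell$; these commitments are irrevocable by consistency. A hybrid execution $I_3$ that is indistinguishable from $I_1$ for the players in~$P_1$ and from~$I_2$ for the players in~$P_2$ up to $\text{GST}+\ell$ then forces both commitments simultaneously, contradicting consistency since $\mathtt{T}_1 \cup \mathtt{T}_2$ is not valid relative to~$S_0$.

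The main obstacle is that the quasi-permissionless setting requires every honest player with non-zero on-chain resources to remain active, which in principle could prevent the indistinguishability construction above from being a legal execution. This is where the reactivity of~$\mathcal{S}$ is essential: once $\mathtt{T}_1$ (resp.\ $\mathtt{T}_2$) is confirmed, the relevant honest players in~$P_1$ (resp.~$P_2$) own no stake and no external resources, so within $\ell$ further timeslots they also own none of the protocol-defined resources in~$\mathcal{S}$, at which point the QP assumption vacuously permits their subsequent inactivity during the critical post-GST window of the hybrid. A final union bound over the $\epsilon$-failure probabilities of the two liveness applications (one each in $I_1$ and $I_2$) together with the consistency application (in $I_3$) yields the quantitative threshold $\epsilon < \tfrac{1}{5}$.
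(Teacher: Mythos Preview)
Your high-level instincts are right—the argument is CAP-style, the payment-circle environment is the right source of ``bad'' valid subsets, and reactivity of~$\mathcal{S}$ is indeed the lever that lets you recover inactivity-like behavior in the quasi-permissionless setting. But the construction you sketch has a structural gap that the paper's proof avoids with an ingredient you are missing.

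The core problem is the interaction between liveness and indistinguishability. You invoke liveness ``at $\max(\text{GST},t)+\ell$,'' i.e., confirmation of~$\mathtt{T}_1$ in~$I_1$ (resp.~$\mathtt{T}_2$ in~$I_2$) is only guaranteed at time $\text{GST}+\ell$. But after GST the partition heals: by time $\text{GST}+\Delta$, your honest group~$P_1$ is receiving~$P_2$'s messages, and in the hybrid~$I_3$ those messages reflect~$P_2$'s receipt of~$\mathtt{T}_2$, whereas in~$I_1$ they reflect~$P_2$'s receipt of~$\mathtt{T}_1$ (or nothing). So indistinguishability between~$I_1$ and~$I_3$ for~$P_1$ breaks well before confirmation is forced. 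Your use of reactivity does not rescue this: you propose to deactivate~$P_1$ \emph{after}~$\mathtt{T}_1$ is confirmed, but the confirmation itself happens in a window where indistinguishability has already failed. Moreover, both~$P_1$ and~$P_2$ hold stake initially, so QP forces them both active throughout any auxiliary instance in which you might try to invoke liveness earlier.

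The paper's proof resolves this with a \emph{two-phase} structure that your sketch lacks. In Phase~1, \emph{all} players (including the eventually-Byzantine set~$P_B$) behave honestly, and an auxiliary instance with $\text{GST}=0$ and an all-honest player set is used to invoke liveness and confirm a subset~$\mathtt{T}'$ that concentrates \emph{all} stake in~$P_B$; reactivity then (after~$\ell^*$ more steps) strips the remaining honest players~$P_S$ of every on-chain resource in~$\mathcal{S}$. Only \emph{after} this concentration does Phase~2 begin: now the honest~$P_S$ can legally be deactivated in further $\text{GST}=0$ auxiliary instances~$I_4,I_5$ (since they own nothing in~$\mathcal{S}$), and~$P_B$—controlling all on-chain resources—runs the split-brain simulation against the two halves~$Y_S,Z_S$ of~$P_S$ with a fresh pair of conflicting transactions~$\mathtt{tr}_1,\mathtt{tr}_2$. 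The role of reactivity is thus to strip \emph{the honest victims} of resources so that auxiliary instances can deactivate them, not (as in your sketch) to deactivate the group whose view you are tracking.

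Two further points. First, your proposal omits the auxiliary honest players~$P_T$ (who receive the full superset~$\mathtt{T}\supset\mathtt{T}'$ so that the environment in the main instance is maximally $\rho$-bounded) and~$P_E$ (who hold external resources so that the adversary is externally $\rho$-bounded once~$P_B$ is Byzantine); without them the main instance does not satisfy the theorem's hypotheses. Second, your error count does not match the threshold: two liveness events plus one consistency event would yield $\epsilon<1/4$. The paper's $1/5$ comes from \emph{four} $\epsilon$-events—one liveness call (for~$\mathtt{T}'$), one reactivity call, and two further liveness calls (for~$\mathtt{tr}_1$ and~$\mathtt{tr}_2$)—against the consistency guarantee.
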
 

\begin{proof} 
See Section \ref{twelve}. 
\end{proof} 

As foreshadowed in Section~\ref{ss:qp}, the proof of
Theorem~\ref{niceneg} must take advantage of the 
restriction to reactive
sets of on-chain resources.
Theorem~\ref{niceneg} does not hold without the reactivity assumption: 
Otherwise, provided Byzantine players control less than one-third of
the initial stake, a protocol could include the (fixed) initial stake
distribution as a (non-reactive) on-chain resource in $\mathcal{S}$
and, taking advantage of the player activity assumptions of the
general quasi-permissionless model (see Section~\ref{ss:qp}), use the
players allocated non-zero initial stake to carry out a 
permissioned state machine replication protocol such as
PBFT~\cite{castro1999practical}. 

As Theorem~\ref{niceneg} does not hold in the permissioned setting,
its proof also takes advantage of the limited non-permissioned
features of the quasi-permissionless setting: that a protocol does not
have advance knowledge of the player set, and that players without
resources may be inactive.


\subsection{A Sufficient Condition on the Environment} \label{envrho}

\vspace{0.2cm} 
\noindent \textbf{Defining $\rho$-bounded environments (for the UTXO
  model)}.  To define $\rho$-bounded environments, we again consider
the UTXO model of Bitcoin.  There is then, for any
transaction $\mathtt{tr}$, a smallest set~$\mathtt{T}$ of
transactions such that $\mathtt{T} \cup \{ \mathtt{tr} \}$ is valid
relative to $S_0$; we say that $\mathtt{T}$ is the set of transactions
{\em required for $\mathtt{tr}$ to be valid relative to $S_0$.}

The trouble in the payment circle example stemmed from the multiple
rounds of transactions, with each round produced without waiting for
confirmation of the transactions in previous rounds.  In response, we
now assume that an honestly produced transaction $\mathtt{tr}$ is
issued only once all the transactions required for $\mathtt{tr}$ to be
valid are already confirmed.  Because we do not wish to limit the
issuing of dishonestly produced transactions in any way, we
distinguish between {\em Byzantine} and {\em honest} transactions,
according to the type of player that controls the identifier
associated with the transaction's inputs.  Intuitively, we will assume
that at any moment in time, the immediate confirmation of all the
pending honest transactions and no new Byzantine transactions would
result in the Byzantine players controlling at most a $\rho$-fraction
of the total stake. (The remaining concern, then, is that the delayed
confirmation of pending honest transactions or the immediate
confirmation of some Byzantine transactions might nevertheless cause
Byzantine players to be allocated more than a $\rho$-fraction of the
total stake.)

Precisely, we say that {\em the environment is $\rho$-bounded}
if $\mathtt{S}(S_0,\emptyset)$ allocates the adversary at most a
$\rho$-fraction of the total stake and if the following holds for all
$p$ and $t$. Suppose that:
\begin{enumerate} 
\item[(i)]   Player $p$ is honest and  $\mathtt{T}_1$, which is the set of transactions confirmed for $p$ at $t$, is valid relative to $S_0$;
\item[(ii)]   $\mathtt{S}(S_0,\mathtt{T}_1)$ allocates the adversary at most a $\rho$-fraction of the total stake;
\item[(iii)] $\mathtt{T}_2$ is the set of all honest transactions sent by the environment at timeslots $\leq t$. 
\end{enumerate} 
Then: $\mathtt{T}_1$ contains all transactions required for any
transaction in $\mathtt{T}_2$ to be valid relative to $S_0$;
$\mathtt{T}_1\cup \mathtt{T}_2$ is valid relative to $S_0$; and
$\mathtt{S}(S_0,\mathtt{T}_1\cup \mathtt{T}_2)$ allocates the
adversary at most a $\rho$-fraction of the total stake.\footnote{We
  sometimes consider protocols that require players to hold stake in
  escrow to participate. In this case, it is convenient to consider a
  function $\hat{\mathtt{S}}$ (in addition to $\mathtt{S}$) that
  determines how much stake players have held in escrow.  A weakly
  $\rho$-bounded environment then, by definition, satisfies the
  following additional assumption: If $p$ is honest, if 
  the set~$\mathtt{T}$ of transactions confirmed for $p$ at $t$ is valid
  relative to $S_0$, and if $\mathtt{S}(S_0,\mathtt{T})$ allocates the
  adversary at most a $\rho$-fraction of the stake, then
  $\hat{\mathtt{S}}(S_0,\mathtt{T})$ also allocates the adversary at
  most a $\rho$-fraction of the stake in escrow.\label{foot:escrow}}

 \vspace{0.2cm} 
 \noindent \textbf{Some terminology}.  We say that the adversary is
 {\em weakly $\rho$-bounded} if all resource allocations are
 $\rho$-bounded and the environment is $\rho$-bounded. A blockchain
 protocol is {\em strongly $\rho$-resilient} if it is live and
 consistent for weakly $\rho$-bounded adversaries.

\vspace{0.2cm} 
The following result is a strengthening of Theorem~\ref{posPoS}, in
that our proof of Theorem~\ref{posPoS2} will also prove
Theorem~\ref{posPoS}.  The protocol used in the proof, PoS-HotStuff,
is the same in both cases.

\begin{theorem} \label{posPoS2} 
Consider the quasi-permissionless, authenticated, and partially
synchronous setting. For every
$\rho<1/3$, there exists a deterministic and strongly $\rho$-resilient PoS
blockchain protocol.
Moreover, the protocol can be made $(1/3,1)$-accountable
and optimistically responsive. 
\end{theorem}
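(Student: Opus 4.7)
The plan is to reuse the same PoS-HotStuff protocol from the proof of Theorem~\ref{posPoS} and to rework the analysis so that consistency, liveness, $(1/3,1)$-accountability, and optimistic responsiveness continue to hold under the weakened hypothesis that the adversary is only \emph{weakly} $\rho$-bounded, i.e., every external resource is $\rho$-bounded and the environment is $\rho$-bounded, rather than assuming the stronger per-execution $\rho$-boundedness of Theorem~\ref{posPoS}. Recall that PoS-HotStuff divides an execution into epochs of potentially unboundedly many views; within each epoch, leaders are rotated and $2/3$-stake quorum certificates are formed against a fixed ``snapshot'' distribution $\hat{\mathtt{S}}(S_0,\mathtt{T}_k)$ obtained from transactions confirmed in earlier epochs; and full confirmation at heights above the epoch's cap requires an honest player to have witnessed a complete ``plausible history'' of three-QC sequences.

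The first observation is that, under the proof architecture of Theorem~\ref{posPoS}, each of the four guarantees reduces to a single invariant: for every epoch $k$ reached by any honest player, the snapshot $\hat{\mathtt{S}}(S_0,\mathtt{T}_k)$ allocates Byzantine identifiers strictly less than a $\rho<1/3$ fraction of the escrowed stake. Given this invariant, the HotStuff-style $2/3$-quorum argument gives consistency within an epoch, the plausible-history rule bridges epochs, optimistic responsiveness follows because honest leaders can advance at network speed as soon as a quorum of votes arrives, and $(1/3,1)$-accountability is witnessed by the fact that any consistency violation at an epoch boundary exposes identifiers controlling at least a $1/3$-fraction of that snapshot stake signing conflicting QCs. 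The main task, therefore, is to derive this snapshot invariant from the weakly $\rho$-bounded hypothesis, by induction on $k$ interleaved with the consistency proof itself. For $k=0$, $\mathtt{T}_0=\emptyset$, and the environmental $\rho$-bound together with footnote~\ref{foot:escrow} supplies the base case. For the inductive step, assume consistency has held through epoch $k$ and the epoch-$k$ snapshot is $\rho$-bounded; then $\mathtt{T}_{k+1}$ is a subset of the transactions confirmed for some honest player at the snapshot timeslot, and I would partition it into honestly-issued and Byzantine-issued transactions. Since a well-formed honest transaction is issued only after its prerequisites are confirmed for some honest player, iterated application of clause~(iii) of the $\rho$-bounded environment definition along the confirmation order shows that incorporating the honest portion preserves the $\rho$ bound; because authentication means Byzantine-issued transactions can only spend UTXOs already controlled by Byzantine identifiers, the Byzantine portion cannot increase the Byzantine fraction either.

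The main obstacle will be the circularity between consistency and the snapshot invariant: consistency within epoch $k+1$ uses the invariant for $\mathtt{T}_{k+1}$, yet that invariant is stated in terms of transactions that some honest player has confirmed, which is precisely what the consistency argument is supposed to be tracking. The remedy is a simultaneous induction whose inductive statement bundles both properties at each epoch boundary, and whose inductive step uses the consistency-through-epoch-$k$ portion to give meaning to ``the'' snapshot $\mathtt{T}_{k+1}$ up to the bounded ambiguity permitted by the plausible-history rule. A secondary obstacle is that, during asynchrony, different honest players may temporarily hold different snapshot candidates; verifying that the $\rho$ bound applies uniformly to all of them requires showing that any two such candidates differ only by transactions either already issued by the environment or whose inputs are already under Byzantine control, so that the environmental $\rho$-bound or the authenticated-signatures restriction (respectively) rules out the discrepancy affecting the Byzantine fraction.
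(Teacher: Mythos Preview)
Your proposal is correct and essentially matches the paper's proof: the same PoS-HotStuff protocol is used, and the crux is a simultaneous induction on epochs establishing both (P1) agreement among honest players on the snapshot $\mathtt{T}_e$ and (P2) that $\mathtt{S}(S_0,\mathtt{T}_e)$ allocates Byzantine players at most a $\rho$-fraction, with the inductive step for (P2) applying the $\rho$-bounded environment definition at a timeslot where an honest player voted for $B_{g,e+1}$ and then partitioning $\mathtt{T}_{e+1}\setminus\mathtt{T}_e$ into honest and Byzantine transactions. Your ``secondary obstacle'' is simpler than you fear: once (P1) is part of the induction hypothesis, there is a \emph{single} snapshot $\mathtt{T}_{e+1}$ common to all honest players, so no reasoning about multiple candidates is needed; and the environment definition is applied once (with $\mathtt{T}_1=\mathtt{T}_e$ and $\mathtt{T}_2$ all honest transactions issued by the relevant timeslot) rather than iterated.
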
 

\begin{proof} 
See Section \ref{thirteen}. 
\end{proof} 

To get a feel for the role that the $\rho$-bounded environment plays
in the proof of Theorem~\ref{posPoS2}, suppose that~(i)--(iii) above
hold for sets~$\mathtt{T}_1$ and~$\mathtt{T}_2$ of transactions. We
claim that, for any $\mathtt{T}_3\subseteq \mathtt{T}_2$ and any set
of Byzantine transactions $\mathtt{T}_4$ such that
$\mathtt{T}_5:= \mathtt{T}_1\cup \mathtt{T}_3 \cup \mathtt{T}_4$ is
valid relative to $S_0$, $\mathtt{S}(S_0,\mathtt{T}_5)$ allocates the
adversary at most a $\rho$-fraction of the stake.  This claim holds
because $\mathtt{T}_1$ already contains all transactions required for
transactions in $\mathtt{T}_2$ to be valid---ruling out examples such
as the payment circle and ensuring that
$\mathtt{S}(S_0,\mathtt{T}_1\cup \mathtt{T}_3)$ allocates the
adversary at most a $\rho$-fraction of the total stake---and because
transactions in $\mathtt{T}_4$ can only transfer stake from
identifiers controlled by Byzantine players.  The significance of this
observation in the proof of Theorem~\ref{posPoS2} is that, when a new
block of transactions is confirmed by the protocol, no choice of
included transactions in that block can result in Byzantine players
being allocated more than a $\rho$-fraction of the total stake.


\section{Proof of Theorem \ref{fmt} } \label{proofoffmt} 

\subsection{The approach}

The rough idea is to adapt the bivalency proof by Aguilera and
Toueg~\cite{aguilera1999simple} of the result by Dolev and
Strong~\cite{dolev1983authenticated} that $f$-resilient consensus
requires $f+1$ timeslots in the permissioned setting. We use this
approach to produce a proof by contradiction: We assume a
deterministic protocol solving BA in the fully permissionless setting
and then show that there must be executions in which honest players
never output.  Taking this route, rather than adapting the original
proof of Dolev and Strong~\cite{dolev1983authenticated}, leads to a
proof that is somewhat simpler than the one given in the conference
version of this paper~\cite{lewis2021byzantine}.  Still, the argument
must accommodate several new features of our framework, including
message dissemination (as opposed to point-to-point communication),
failures other than crash failures, and the presence of general
oracles.


\subsection{Some conventions} 

For convenience, we assume that an active honest player disseminates exactly
one message per timeslot. Because we impose no bounds on message
length, an honest player never needs to send more than one
message in a timeslot.  Forcing honest players to send at least one
message per timeslot (as opposed to sometimes remaining silent) is
also easily seen to be without loss of generality; see Aguilera and
Toueg~\cite{aguilera1999simple} for details.
%
%

We can assume that a protocol uses one or more permitter oracles;
otherwise, there are no constraints in the fully permissionless
setting that prevent the adversary from deeming all honest players
inactive at all timeslots.  (Recall from Section~\ref{ss:external}
that the assumptions there on external resources guarantee at least
one honest active player at every timeslot.)
Because the fully permissionless setting imposes no constraints on the
activity periods of players with a non-zero amount of stake, however,
we are forced to confront the case in which none of the players included in
the initial stake distribution are ever active.


Finally, we can assume an initial stake distribution that allocates 0
stake to the Byzantine players (making the impossibility result as strong as
possible). We need to consider only the ``null'' environment that never
sends any transactions. 

%

\subsection{The setup} 

Fix $\rho>0$. Towards a contradiction, suppose that the 
$\rho$-resilient and deterministic protocol  $(\Sigma,\mathcal{O},\mathcal{C})$   solves
BA in the  fully permissionless and authenticated setting with
synchronous  communication. 

We consider an infinite set of players
$\mathcal{P}$, but, as per the assumptions described in Section
\ref{pswrr},  we suppose a finite number of players are active at each
timeslot.  The infinite set of players is used only to ensure that
there are always new players that can join at any timeslot, i.e., that
the set of players is not ultimately fixed. \\

\noindent \textbf{The rough idea}. The idea is to
restrict attention to executions in which a fresh set of players
become active at each timeslot $t$, and contribute briefly to the
protocol before becoming inactive again. So long as the number of
fresh players becoming active at $t$ is large, and so long as all
resources are allocated evenly between them and at most one of those
players is Byzantine, all resource allocations will be
$\rho$-bounded. The only deviations by Byzantine players
that are relevant for our argument are to crash or delay the 
dissemination of a message. \\

\noindent \textbf{The details}.  Choose
$s \geq \text{max} \{ 1/\rho, 3 \}$ and partition $\mathcal{P}$ into $s$
infinite and disjoint sets $P_1,\dots, P_s$.  We will be able to
derive a contradiction even after restricting attention to executions
that satisfy the conditions $(\Pi_0)-(\Pi_{10})$ below.
  
\vspace{.1in}

 \noindent First, some conditions restricting the inputs:  
 \begin{itemize} 
 \item[($\Pi_0$)]  All players in a given $P_k$ receive the same protocol input, either 0 or 1. 
 \item[($\Pi_1$)]  All players receive parameter inputs $d=\infty$ and $\Delta=2$, as well as their personal value $\mathtt{id}(p)$. 
 \end{itemize} 
 
  \noindent  Next, we specify that $s$ new players become active at each timeslot: 
 
  \begin{itemize} 
  \item[($\Pi_2$)] Each player $p$ is active at timeslots in an interval $[en_p,ex_p]$.  It may be that $en_p=ex_p=\infty$. 
 \item[($\Pi_3$)]  For each $t$, there exists precisely one player $p$
   in each $P_k$ with $en_p=t$. Let $P^{\ast}_t$ denote the set of all
   players $p$ such that $en_p=t$; thus, $|P^{\ast}_t|=s$ for each~$t$.   
  \item[($\Pi_4$)]    We consider a single fixed sequence
    $(P^{\ast}_t)_{t\geq 1}$, which defines the specific choice of
    players (from the $P_i$'s) that belong to each set~$P^*_t$.
This sequence is the same in all executions considered. 
   \end{itemize}

 \noindent  The next conditions specify the various values $ex_p$,  when disseminations may be received, and the possible behaviors of Byzantine players: 
 \begin{itemize} 
  
 \item[($\Pi_5$)] If $p$ is honest,  then $ex_p=en_p$.
 \item[($\Pi_6$)] If $p \in P^*_t$ is honest and disseminates $m$ at $t$, then all players active at $t+1$ receive that dissemination at $t+1$. 
 
  \item[($\Pi_7$)] Byzantine players can display only two forms of
     non-honest behavior: 
     
     \begin{enumerate} 
     \item[(1)] If $p\in P^{\ast}_t$ is Byzantine, it
     may crash
immediately  before disseminating a  message at $ t$ (but after
     receiving messages delivered at~$t$, sending oracle queries, and
     receiving the oracle responses delivered at~$t$).\footnote{Formally, a
  player that has ``crashed'' never again disseminates any messages or
  issues any oracle queries.}  
In this case, $ex_p=t$.  
     \item[(2)]  If $p\in P^{\ast}_t$ is Byzantine, then it may not
       crash but may fail to disseminate the instructed message $m$
       at $t$. We let $ex_p$ be the first timeslot at which $p$
       disseminates $m$ (meaning $ex_p=\infty$ is possible and that
       $ex_p=t$ if $p$ disseminates $m$ at $t$), and stipulate that
       $p$ performs no actions at stages $>t$ other than (possibly)
       disseminating $m$ at some first timeslot.  If $ex_p>t$, then we
       say that $p$ suffers dissemination failure at each $t'\in [t,ex_p]$. 
\end{enumerate} 
 
\item[($\Pi_8$)]  We say $p$ \emph{suffers delays at} $t$ if either: 
\begin{itemize} 
\item  $p$ crashes at $t$; 
\item  $p$ suffers dissemination failure at $t$, or;
\item  $p$ disseminates a message $m$ at $t$ which is not received by some active player at $t+1$.
\end{itemize} 

We stipulate that at most one (Byzantine) player can suffer delays at each timeslot.

     \end{itemize} 
     
   \noindent    Finally, we stipulate conditions that ensure resource allocations are $\rho$-bounded: 
     
   \begin{itemize}    
 \item[($\Pi_9$)]  At most one player in each $P^{\ast}_t$ is Byzantine. 
 \item[($\Pi_{10}$)] For each permitter oracle~$O \in \mathcal{O}$,
   the corresponding resource allocation $\mathcal{R}^O$ satisfies the
   following for
   every $p$ and $t$: $\mathcal{R}^O(p,t)=1$ if $p\in P^{\ast}_t$ and
   $\mathcal{R}^O(p,t)=0$ otherwise. 
(Thus, no matter how many timeslots a player might be active, it
possesses a non-zero amount of resource in at most one
timeslot.)
Together with the conditions
   above, this condition ensures that all resource allocations are
   $\rho$-bounded.
 \end{itemize} 
 
 \noindent Let $\Pi$ denote the set of all executions of the protocol
  satisfying these
 conditions.

\subsection{Bivalent $t$-runs and indistinguishable executions} 

\noindent \textbf{Bivalent $t$-runs}. Roughly, by a ``$t$-run'' we
mean a specification of the first $t$ timeslots of an execution (as
defined in Section~\ref{ss:st}). More precisely: 

By a {\em $0$-run}, we mean a specification of the inputs for each player. 

For $t\geq 1$,  by a {\em $t$-run}, we mean a specification of:
 \begin{enumerate} 
 \item[(i)]  The inputs given to each player. 
 \item[(ii)]  The messages disseminated by each player at each timeslot $\leq t$. 
 \item[(iii)] The sequence of queries sent and responses received by each
   player at each timeslot $\leq t$. 
   \item[(iv)] The players that crash at each timeslot $\leq t$. 
 \item[(v)]  The set of disseminations received by each player at each timeslot $\leq t+1$. 
 
\end{enumerate} 
 
For $t'>t$, we say that the $t'$-run $x'$ \emph{extends} the $t$-run
$x$ if: (a) Players receive the same inputs in both runs;  (b)
Players disseminate the same messages at all timeslots $\leq t$ in
both runs; (c) Players send and receive the same sequence of oracle queries and
responses at timeslots $\leq t$ in both runs; (d) The same players crash at all timeslots $\leq t$ in
both runs, and; (e) Players receive the same
disseminations at all timeslots $\leq t+1$ in both runs.  
Extending this definition in the obvious way, we can also speak about 
an execution extending a $t$-run.

For $i\in \{ 0,1 \}$, we say a $t$-run $x$ is $i$-\textit{valent} if,
for every execution extending $x$, there exists $t^{\ast}$ such that
all honest players output $i$ if they are active at any timeslot
$\geq t^{\ast}$.  We say $x$ is \textit{univalent} if it is $i$-valent
for some $i\in \{ 0, 1 \}$.  We say $x$ is \textit{bivalent} if it is
not univalent.

For each $t$, we let $X_t$ denote the set of all $t$-runs extended by
some execution in $\Pi$. \\

\noindent \textbf{Indistinguishable executions}. In the deterministic
model, we say that two executions $\mathtt{E}_0$ and $\mathtt{E}_1$ are \emph{indistinguishable} for honest $p$ if: 
\begin{itemize} 
\item Player $p$ has the same inputs in $\mathtt{E}_0$ and $\mathtt{E}_1$, and;  
\item At each timeslot $t$, player $p$ receives the same multiset of messages
  and the same sequence of oracle responses in~$\mathtt{E}_0$ as it does in
  $\mathtt{E}_1$.
\end{itemize}  
By induction on timeslots, the behavior (and eventual output) of a
player must be identical in any two executions that are
indistinguishable for it.

\subsection{Defining an execution in which honest players never output}

\noindent \textbf{There exists a bivalent $0$-run in $X_0$}. The proof
of this fact is essentially the same as the proof of the corresponding
lemma in the proof by Aguilera and Toueg~\cite{aguilera1999simple} (which, in turn, is
the same as in the proof of the FLP
Theorem~\cite{fischer1985impossibility}). Towards a contradiction,
suppose all $0$-runs in $X_0$ are univalent. For each $k$ with
$0\leq k \leq s$, let $x_k$ be the $0$-run in which players in each
$P_i$ for $1\leq i\leq k$ all get input 1, while all other players get
input 0. Because $x_0$ is 0-valent and $x_s$ is 1-valent (by Validity and Termination),
there must exist a smallest $k$ for which $x_k$ is 0-valent, while
$x_{k+1}$ is 1-valent.

For each $i\in \{ 0,1 \}$, let $\mathtt{E}_i$ be the (unique)
execution in $\Pi$ that extends $x_{k+i}$ and such that all players
are honest except the players in $P_{k+1}$, who all crash before
sending messages if they ever become active.
There exist such executions $\mathtt{E}_0$ and
$\mathtt{E}_1$ in $\Pi$ because, if only the players in $P_{k+1}$ are
Byzantine, then at most one player in each $P^{\ast}_t$ is Byzantine.
 
Because $x_k$ is 0-valent, all honest players that output must output
0 in $\mathtt{E}_0$.  Because~$x_{k+1}$ is 1-valent, all honest
players that output must output 1 in $\mathtt{E}_1$. This contradicts
the fact that $\mathtt{E}_0$ and $\mathtt{E}_1$ are indistinguishable
for players outside $P_{k+1}$.  This last statement holds because the
protocol inputs for players in $P_{k+1}$ 
are undetermined for the oracles and for players outside $P_{k+1}$,
and because oracles are non-adaptive (see
Section~\ref{ss:st}).\footnote{The players of~$P_{k+1}$ may send
  different oracle queries in $\mathtt{E}_0$ and $\mathtt{E}_1$
  immediately before crashing, but because these players send no
  messages after these oracle queries, the executions remain
  indistinguishable to all other players.}\\

\noindent \textbf{Extending the bivalent run}. Suppose $x\in X_t$  is bivalent. We show that there exists $x'\in X_{t+1}$ such that $x'$ extends $x$ and is bivalent.  Combined with the existence of a bivalent 0-run in $X_0$, this suffices to establish the existence of an execution in which honest players never output. 

Towards a contradiction, suppose all extensions of $x$ in $X_{t+1}$
are univalent. Let $x^{\ast}$ be the (unique) extension of $x$ in
$X_{t+1}$ in which all players in $P^*_{t+1}$ are honest, and in which
any Byzantine player still active at $t+1$ disseminates a message
which is received by all active players at the next timeslot. (Such a
player must be suffering delays at~$t+1$, so there is at most one such
player.)

We're done if $x^*$ is bivalent so assume, without loss of generality,
that $x^{\ast}$ is 1-valent. Because $x$ is bivalent, there must
exist $x^{\dagger}\in X_{t+1}$ which is an extension of $x$ and which
is not 1-valent. We're done if $x^{\dagger}$ is bivalent, so assume
that it is 0-valent.  From the definition of $\Pi$, it follows that there
exists a single player $p$ that is Byzantine, and
which suffers delays at $t+1$ in $x^{\dagger}$. There
are now two possibilities to consider, depending on whether these
delays are caused by the communication network or by the player itself.\\

\noindent \textbf{Case 1}: In $x^{\dagger}$, player $p$ does
disseminate $m$ at $t+1$, but there exists some non-empty subset
$\{ p_1,\dots,p_r \} $ of the players in $P^{\ast}_{t+2}$ who do not
receive the dissemination at $t+2$. In this case, define
$x_0:=x^{\dagger}$, $x_r:=x^{\ast}$, and, for each $k\in [1,r)$,
define $x_{k}$ to be the $t+1$-run that is identical to $x^{\dagger}$
except that players $p_1,\dots,p_k$ receive~$m$ at $t+2$.  Assume that every such~$x_k$ is univalent (otherwise, we
are done).

Because $x_0$ is 0-valent and $x_r$ is 1-valent, there must exist a
least $k$ such that $x_k$ is 0-valent and~$x_{k+1}$ is 1-valent. For
$i\in \{ 0,1 \}$, let $\mathtt{E}_i$ be the execution extending
$x_{k+i}$ in which players in  $P^{\ast}_{t'}$ are honest for
all $t'\geq t+2$ save $p_{k+1}$, who crashes 
immediately before sending messages
at timeslot~$t+2$. 
  Because~$x_k $ is 0-valent, all honest players that
output must output 0 in $\mathtt{E}_0$.  Because~$x_{k+1}$ is 1-valent,
all honest players that output must output 1 in $\mathtt{E}_1$. This
contradicts the fact that $\mathtt{E}_0$ and~$\mathtt{E}_1$ are
indistinguishable for honest players. \\

%

\noindent \textbf{Case 2}: In $x^{\dagger}$, player $p$ does not
disseminate $m$ at $t+1$, either because it crashes at this timeslot
or because it delays message dissemination.  We can assume the latter
scenario.  For if~$p$ crashes at~$t+1$, consider instead the
$t+1$-run~$x^{\diamond}$ which is identical to $x^{\dagger}$ except
that $p$ does not crash and does not disseminate a message at $t+1$.
If $x^{\diamond}$ is bivalent, we are done.  It cannot be 1-valent, as
there is an execution extending it (with $p$ never disseminating and
with all other players active at timeslots $>t+1$ being honest) that
is indistinguishable for all honest players from one
extending~$x^{\dagger}$ (which is 0-valent).  Thus $x^{\diamond}$,
like $x^{\dagger}$, is a 0-valent extension of~$x$ in~$X_{t+1}$, and
it can substitute for~$x^{\dagger}$ in the rest of the argument.


Assuming now that~$p$ delays the dissemination of~$m$
in~$x^{\dagger}$, let $y_0$ be the $t+2$-run in $X_{t+2}$ that extends
$x^{\dagger}$ and such that all players in $P^{\ast}_{t+2}$ are
honest, player $p$ disseminates $m$ at $t+2$, and $m$ is received by
all active players at the next timeslot.\footnote{The
  $t+2$-run~$y_0 \in X_{t+2}$ is well defined, as in the $t+1$-run
  $x^{\dagger}$, $p$ receives all the messages and oracle responses
  needed to disseminate the message~$m$ at~$t+1$ (which, in~$y_0$, $p$
  does at~$t+2$).}  Because $y_0$ extends $x^{\dagger}$, it is
0-valent. Let~$x_0$ be the $t+2$-run in $X_{t+2}$ that is identical to
$y_0$, except that $p$ disseminates $m$ at $t+1$
, while it remains the case that no players in $P^{\ast}_{t+2}$
receive that dissemination at $t+2$. Then $x_0$ must also be 0-valent,
because for each execution extending either of $y_0$ or $x_0$, there
is an execution extending the other that is indistinguishable from the
first execution for all honest players.  (Intuitively, honest players
cannot know whether $m$ was disseminated at $t+1$ and took two
timeslots to be delivered, or was disseminated at $t+2$ and took a single
timeslot to be delivered.)

Let $p_1,\dots,p_s$ be an enumeration of the players in
$P^{\ast}_{t+2}$. For each $k\in [1,s]$, let $x_k$ be the $t+2$-run
that is identical to $x_0$, except that all players $p_i$ for
$1\leq i\leq k$ receive $m$ at $t+2$. Because $x_s$ extends~$x^{\ast}$,
it must be 1-valent. The rest of the argument is identical to that in
Case~1, with~$x_s$ playing the previous role of~$x_r$ and the player
set~$\{p_1,\ldots,p_s\}$ playing the previous role
of~$\{p_1,\ldots,p_r\}$.


\subsection{Discussion} 

The proof of Theorem~\ref{fmt} is driven by two properties of the
fully permissionless setting: (i) there is no bound on the number of
players; (ii) the set of players that are active can change arbitrarily at each timeslot (subject to the constraint that the adversary be $\rho$-bounded). 
Theorem~\ref{fmt} is trivially false if all players are always active, 
as otherwise the players allocated non-zero stake by the initial stake
distribution could carry out a standard permissioned Byzantine
Agreement protocol.  We leave as an open question whether
Theorem~\ref{fmt} continues to hold when there is a determined and
finite upper bound on the number of players (but an arbitrary and
undetermined player allocation).  We also leave open whether
Theorem~\ref{fmt} holds for protocols that use only external resources
(i.e, that have no initial stake distribution) in the setting in which
all players are always active.


The proof of Theorem~\ref{fmt} does not require sybils, in the sense
that it holds even when $|\mathtt{id}(p)|=1$ for all $p$.  The
positive result of Khanchandani and
Wattenhofer~\cite{khanchandani2021byzantine} shows that
Theorem~\ref{fmt} no longer holds when the player set is finite,
identifiers are (undetermined but) unique (i.e., $|\mathtt{id}(p)|=1$
for all $p$), and all players are always active. This positive result
does not require an initial stake distribution.

Theorem~\ref{fmt} trivially fails to hold if $\rho=0$. To see this,
consider the following protocol that makes use of a single permitter
$O$.  Active players are instructed to send a single query $(b,i)$ to
the permitter at timeslot 1, where $b$ is their external resource
balance at that time and $i$ is their protocol input. 
The permitter ignores queries with~$b=0$---which, because~$\rho=0$,
includes any queries issued by faulty players---and ``certifies'' a
query~$(b,i)$ with~$b > 0$ by responding with a type-$O$ message with
a single entry~$i$. Players are instructed to immediately disseminate
the permitter's responses.
Players can then, at the first timeslot $\geq 1+\Delta$ at which they
are active, safely output~$i$ if all received type-$O$ messages
consist of the entry~$i$, and a default value otherwise.

\section{Proof of Theorem \ref{psm}}  \label{proofofpsm} 

\subsection{The intuition} 

Towards a contradiction, 
fix a 0-resilient protocol that solves probabilistic BA in the
dynamically available and partially synchronous setting with security
parameter~$\epsilon < \tfrac{1}{3}$.
Recall the definition of a protocol instance from Section
\ref{instance}. We consider protocol instances with $d=\infty$ and a
finite set of players, all of whom are honest.  Partition a
set~$\mathcal{P}$ of players into two non-empty sets $P_0$
and $P_1$; suppose players of $P_i$  receive
the protocol input $i$ (for $i \in \{0,1\}$).  We suppose further that
$\mathcal{R}^O(p,t)>0$ for every $p,t$ and for each permitter $O$.

We then consider a network partition during which messages
disseminated by players in each $P_i$ are received by other
players in $P_i$ but not by players in
$P_{1-i}$. If the network partition is sufficiently long then,
with significant probability, all players will eventually terminate
before the partition ends. Because the protocol is assumed to solve
probabilistic BA in the dynamically available setting, the Validity
condition forces players in $P_0$ to,
with significant probability, terminate and output~0.
The reason is
that, from their viewpoint, the protocol execution is
indistinguishable
from a synchronous protocol execution in which the active player set is
$P_0$.
Invoking the Validity condition again, players in~$P_1$
must, with significant probability, terminate and output~1; from their
viewpoint, the protocol execution is indistinguishable from a
synchronous protocol execution in which the active player set is~$P_1$.
This leads to a violation of the Agreement condition.

\subsection{The details}
Let $\mathcal{P}=P_0 \cup P_1$, where each
$P_i$ is non-empty and where $P_0\cap P_1=\emptyset$. We consider three different instances of
the protocol, $\mathtt{I}_0$, $\mathtt{I}_1$, and $\mathtt{I}_2$. In all three instances, the player set is $\mathcal{P}$. 

\vspace{0.1cm} 
\noindent \textbf{The inputs and the player allocations}. 
In all three instances, each player will either be active at every
timeslot or inactive at every time slot.  In  $\mathtt{I}_0$,
$\mathtt{I}_1$, and $\mathtt{I}_2$, the active players are
$\mathcal{P}$, $P_0$, and~$P_1$, respectively.
In all three instances, every player is honest 
and has a unique identifier,
and all players are given the determined parameter inputs $d=\infty$
and $\Delta$ (say, $\Delta=2$).  In $\mathtt{I}_0$, players in $P_i$
are given input $i$. In $\mathtt{I}_1$, all players are given input 0.
In $\mathtt{I}_2$, all players are given input 1.
In all three instances: 
\begin{itemize} 

\item Every resource allocation $\mathcal{R}^O$ satisfies the
  condition that $\mathcal{R}^O(p,t)=1$ for all $p$ active at $t$.

\item   We consider an initial stake distribution $S_0$ which
allocates all players in $\mathcal{P}$ resource balance 1 (whether or
not the player is active) and an environment that does not issue any transactions.

\end{itemize} 
%

\vspace{0.1cm} 
\noindent \textbf{The timing rules}. The timing rules 
in the three instances are as follows: 
\begin{itemize} 

\item In $\mathtt{I}_0$, a dissemination at $t$  by any player in
  $P_i$ is received by other players in $P_i$ at
  $t+1$ and is received by players in $P_{1-i}$ at timeslot
  $t'=\text{max} \{ t+1, \text{GST} \}$.

\item In $\mathtt{I}_{1+i}$, a dissemination at $t$ by any player in
  $P_i$ is received by other players in $P_i$ at
  $t+1$.

\end{itemize} 

%

\vspace{0.2cm} 
\noindent \textbf{Producing a probable violation of Agreement}.  It
follows by induction on timeslots $<$GST that:
\begin{enumerate} 

\item[$(\dagger)$] For each $i\in \{ 0,1 \}$ and all $t <$GST,
  $\mathtt{I}_0$ and $\mathtt{I}_{1+i}$ are indistinguishable for
  players in $P_i$ until $t$.

\end{enumerate} 
Because the given protocol is assumed to solve BA (and, in particular,
satisfy Validity) with probability
at least $1-\epsilon$ in the dynamically available 
setting with a 0-bounded adversary, there is a timeslot~$t^{\ast}$ by
which: (i) in $\texttt{I}_1$, with probability at least $1-\epsilon$,
all players of~$P_0$ terminate and output~0; (ii) in $\texttt{I}_2$,
with probability at least $1-\epsilon$, all players of~$P_1$
terminate and output~1.

Using $(\dagger)$, in instance~$\mathtt{I}_0$, if GST is chosen to be
sufficiently large (greater than~$t^{\ast}$), then with probability
at least $1-2\epsilon$, by timeslot~$t^*$ all players
of~$P_i$ have terminated with output~$i$ (for
$i \in \{0,1\}$).  Because $\epsilon < \tfrac{1}{3}$, this likelihood
of a disagreement contradicts the initial assumption that the
protocol satisfies Agreement with security parameter $\epsilon$ in the
partially synchronous setting.

\vspace{0.2cm} 
\noindent \textbf{Discussion}. The proof of Theorem~\ref{psm} 
is driven solely by the possibility of inactive players; this
assumption is necessary, as
otherwise the set of players allocated non-zero stake by the initial
stake distribution could carry out a standard permissioned Byzantine
Agreement protocol.  
The player set~$\mathcal{P}$ used in the proof can be known and
finite, with each player in possession of only a single identifier.

\section{Proof of Theorem \ref{ortheorem}} \label{proofofortheorem}

The proof resembles that of Theorem~\ref{psm}.
Towards a contradiction, fix a 0-resilient blockchain protocol that is
optimistically responsive in the dynamically available,
authenticated, and synchronous setting, with security
parameter~$\epsilon < \tfrac{1}{3}$. 
Suppose the protocol satisfies the definition of optimistic
responsiveness with liveness parameter $\ell=O(\delta)$ and grace
period parameter $\Delta^*=O(\Delta)$, where~$\delta$ denotes the
(undetermined) realized maximum message delay.  Let $K$ be a constant
such that $\ell< K\delta$ for all~$\delta \ge 1$, and choose $\Delta>K$
(thereby fixing~$\Delta^*$).



\vspace{0.2cm} 
\noindent \textbf{The player set}.  Let $\mathcal{P}=P_0 \cup P_1$,
where the $P_i$'s are non-empty and disjoint sets.
We consider three instances of
the protocol, $\mathtt{I}_0$, $\mathtt{I}_1$,  and $\mathtt{I}_2$. In all three instances, the player set is $\mathcal{P}$. 

\vspace{0.1cm} 
\noindent \textbf{The inputs and the player allocations}.  In all
three instances, each player will either be active in all timeslots or
active for all and only the timeslots $\le \Delta^*$.
The  players active at timeslots $>\Delta^*$ are 
$P_0$ in $\mathtt{I}_0$,
$P_1$ in $\mathtt{I}_1$, and
$\mathcal{P}$ in $\mathtt{I}_2$.
In all three instances, every player is honest 
and has a unique identifier,
and all players are given the determined parameter inputs
$\Delta$
and $d=\Delta^* +2\Delta$.  
In all three instances: 
\begin{itemize} 

\item Every resource allocation $\mathcal{R}^O$ satisfies the
  condition that $\mathcal{R}^O(p,t)=1$ for all $p$ active at $t$.

\item   We consider an initial stake distribution $S_0$ which
allocates all players in $\mathcal{P}$ resource balance~1 (whether or
not the player is active). 

\end{itemize} 
Let $\mathtt{tr}_0$ and $\mathtt{tr}_1$ denote two transactions that
are conflicting relative to $S_0$.  In $\mathtt{I}_0$, the environment
sends $\mathtt{tr}_0$ to all players in $P_0$ at timeslot
$\Delta^*+1$.  In $\mathtt{I}_1$, the environment sends
$\mathtt{tr}_1$ to all players in~$P_1$ at timeslot $\Delta^*+1$.  In
$\mathtt{I}_2$, the environment sends $\mathtt{tr}_0$ to all players
in $P_0$ at timeslot $\Delta^*+1$ and sends $\mathtt{tr}_1$ to all
players in $P_1$ at timeslot $\Delta^*+1$.

%

\vspace{0.1cm} 
\noindent \textbf{The timing rules}. The timing rules 
in the three instances are as follows: 
\begin{itemize} 

\item In all three instances, a dissemination by any player at a
  timeslot $\leq \Delta^*$ is received by all other players at the
  next timeslot.

\item In $\mathtt{I}_0$, a dissemination at timeslot $t> \Delta^*$ by
  a player in~$P_0$ is received by the other players in~$P_0$
  at $t+1$. (By this point, only the players of~$P_0$ are active in this
  instance.) 
  
\item In $\mathtt{I}_1$, a dissemination at timeslot $t> \Delta^*$ by
  a player in~$P_1$ is received by the other players in~$P_1$
  at $t+1$. (By this point, only the players of~$P_1$ are active in this
  instance.) 

\item In $\mathtt{I}_{2}$, a dissemination at timeslot $t> \Delta^*$
  by any player in $P_0$ is received by other players in $P_0$ at
  $t+1$ and is received by players in $P_1$ at timeslot $t+\Delta$.  A
  dissemination at timeslot $t> \Delta^*$ by any player in $P_1$ is
  received by other players in $P_1$ at $t+1$ and is received by
  players in $P_0$ at timeslot $t+\Delta$.
  
\end{itemize} 

%

\vspace{0.2cm} 
\noindent \textbf{Producing a probable consistency violation}.  It
follows by induction on timeslots $\leq \Delta^*+ \Delta$ that:
\begin{enumerate} 

\item[$(\dagger)$] For each $i\in \{ 0,1 \}$ and all
  $t \leq \Delta^* + \Delta$, $\mathtt{I}_i$ and $\mathtt{I}_{2}$ are
  indistinguishable for players in $P_i$ until the end of timeslot
  $t$.

\end{enumerate} 
Because~$\delta=1$ in the instances $\texttt{I}_0$ and $\texttt{I}_1$ 
and $\Delta > K$, the definition of optimistic responsiveness implies
that: (i) in $\texttt{I}_0$, with
probability at least $1-\epsilon$, $\mathtt{tr}_0$ will be confirmed
for all players in~$P_0$ by timeslot $\Delta^*+ \Delta$; (ii) in
$\texttt{I}_1$, with probability at least $1-\epsilon$,
$\mathtt{tr}_1$ will be confirmed for all players in~$P_1$ by timeslot
$\Delta^*+ \Delta$.

From $(\dagger)$, it follows that, with probability at least
$1-2\epsilon$ in $\mathtt{I}_2$, $\mathtt{tr}_0$ will be confirmed for
all players in $P_0$ and $\mathtt{tr}_1$ will be confirmed for all
players in $P_1$ by timeslot $\Delta^*+ \Delta$.  
Because $\mathtt{tr}_0$ and~$\mathtt{tr}_1$ are conflicting
transactions and $\epsilon < \tfrac{1}{3}$, this provides the required
contradiction.

\vspace{0.2cm} 
\noindent \textbf{Discussion}. 
The proof of Theorem \ref{ortheorem}, like that of Theorem~\ref{psm},
is driven solely by the possibility of inactive players; this
assumption is necessary, as
otherwise the set of players allocated non-zero stake by the initial
distribution could carry out a standard permissioned  protocol.  
The player set~$\mathcal{P}$ used in the proof can be known and
finite, with each player using a unique and known identifier.

\section{Proofs of Theorems \ref{posPoS} and \ref{posPoS2}} \label{thirteen} 
  
We present a proof of Theorem \ref{posPoS2}, as this is the stronger
result. Readers interested only in Theorem~\ref{posPoS} can omit all
aspects of the proof that are concerned with ensuring that, if $p$ is
honest and $\mathtt{T}$ is the set of transactions confirmed for $p$
at $t$, then $\mathtt{S}(S_0,\mathtt{T})$ allocates the adversary at
most a $\rho$ fraction of the stake (as this condition is
automatically guaranteed by the hypotheses of Theorem~\ref{posPoS}).

\subsection{Preliminaries}
  
We prove Theorem \ref{posPoS2} by modifying the (permissioned)
HotStuff protocol~\cite{yin2019hotstuff} to work as a PoS protocol.
While HotStuff is designed to achieve optimistic responsiveness and
simultaneously minimize message complexity, our concerns here are
different: For the sake of simplicity, we ignore issues of efficiency
and strive for the simplest protocol with the required properties.
Section~\ref{effic} describes some ways to improve the efficiency
of our protocol.  

\vspace{0.2cm} 
\noindent \textbf{Removing the `baked in' knowledge of time}.  The
definition of a state diagram in Section~\ref{ss:st} allows a player's
state transition at a given timeslot to depend on the number of that
timeslot. This modeling choice serves to make our impossibility
results as strong as possible.  For the present proof, however, such
knowledge arguably goes against the spirit of the setting.  We
therefore weaken the model so that a player's state transition at a
given timeslot can depend only on the player's current state and the
oracle responses and messages received by that time.

\vspace{0.2cm} 
\noindent \textbf{Allowing clock drift}.  Section~\ref{ss:inputs}
defined a version of the partial synchrony model in which message
delivery may take an arbitrary amount of time prior to GST, but in
which players have ``internal clocks'' of identical speed. That is,
active players proceed in lockstep, carrying out one round of
instructions according to their state diagram at each timeslot. This
definition is more generous to a protocol than other versions of the
partial synchrony model (e.g., \cite{DLS88}). This modeling choice
only strengthens our impossibility results in the partially
synchronous setting, Theorems~\ref{psm} and \ref{niceneg}. To make our
positive result here as strong as possible, however, we extend the
model to allow for bounded clock drift. 

Formally, the player allocation (see Section~\ref{ss:inputs}) will now
specify, for each player and timeslot at which that player is active,
whether the player is \emph{waiting} or \emph{not
  waiting} at that timeslot.
(Intuitively, the slower a player's
clock, the more frequently they will be waiting.)  If honest $p$ is
active and not waiting at timeslot $t$, it carries out the protocol
instructions as specified in Section~\ref{ss:st}. If $p$ is active but
waiting at $t$, it acts as if inactive at $t$ (and does not receive
any messages or oracle responses at that timeslot). We assume that
there is some known bound $\kappa \in (0,1]$ such that, for each
$t,\ell \in \mathbb{N}_{\geq 1}$ and each honest player $p$ active in
the interval $(t, t+\ell]$, there are at least
$\lfloor \kappa \cdot \ell \rfloor$ many timeslots in that interval at
which $p$ is active and not waiting (with $p$ active but waiting at
the other timeslots in $(t, t+\ell]$).  Intuitively, the
closer~$\kappa$ is to~1, the smaller the difference in the rates at
which the internal clocks of honest players advance.

In the version of the partial synchrony model considered in this
section, we assume that if $p$ disseminates $m$ at $t$, and if
$p'\neq p$ is active and not waiting at
$t'\geq \text{max} \{ \text{GST}, t+\Delta \}$, then $p'$ receives
that dissemination at a timeslot $\leq t'$.

\subsection{Recalling (permissioned) HotStuff}  \label{ph} 

To segue into the description of our PoS-HotStuff protocol in
Sections~\ref{overv}--\ref{tfs}, we first provide a rough description
of a simplified specification of the (permissioned) HotStuff protocol
that makes no effort to minimize communication complexity.
Accordingly, in this subsection, we consider a set of $n$ players, of
which at most $f$ may be Byzantine, with $n\geq 3f+1$.

\vspace{0.2cm} 
\noindent \textbf{Views in HotStuff}.   
The protocol instructions are divided into \emph{views}. Within each
view, there are three stages of voting, each of which is an
opportunity for players to vote on a \emph{block} 
of transactions proposed by the \emph{leader} of that view.  (All
these notions will subsequently be formalized.)  Each block of
transactions $B$ is associated with a specific view, denoted
$v^{\ast}(B)$.  If the first stage of voting in a view establishes a
\emph{stage 1 quorum certificate} (QC) for a block $B$ proposed in
that view (in the form of~$2f+1$ votes for~$B$, signed by distinct
players), then the second stage of the view may establish a
\emph{stage 2} QC for $B$, whereupon the third stage of the view may
establish a \emph{stage 3} QC for $B$.  If $Q$ is a QC for $B$, then
we define $v^*(Q):=v^*(B)$ and $B^*(Q):=B$.  Deviating somewhat from
the exposition in~\cite{yin2019hotstuff}, each honest player maintains
two local variables $Q^1$ and $Q^2$ (persistent across views) that
implement a ``locking'' mechanism; for~$i=1,2$, $Q^i$ records the
stage $i$ QC that triggered the player's most recent stage $i+1$ vote.

The instructions within a view are deterministic.  If the
leaders of views are also chosen determinis\-tically---for example, with
player~$p_i$ the leader of view~$v$ whenever
$i= v \text{ mod }n$---the entire protocol is deterministic.

\vspace{0.2cm} 
\noindent \textbf{View changes}. Given that we ignore communication
complexity, moving from one view to the next is straightforward in the
permissioned setting.  
When an honest player $p$ enters a view $v$, it
sets a timer.
When the timer expires, or if $p$ sees a stage 3 QC for
a block proposed by the leader of view~$v$, $p$ disseminates a
(signed) $\mathtt{view}\ v+1$ message $(v+1,Q^1)$.  A \emph{view
  certificate} (VC) for view $v+1$ is a set of $2f+1$
$\mathtt{view}\ v+1$ messages, signed by distinct players. Player $p$
enters view $v+1$ upon receiving a VC for view $v+1$, if presently in
a lower view.

\vspace{0.2cm} 
\noindent \textbf{Block proposals in HotStuff}.  When an honest player
enters a view $v$ for which it is the leader, it chooses, among all
QCs received to-date as part of a $\mathtt{view}$~$v$ message, the
most recent such QC~$Q$ (i.e., with~$v^*(Q)$ as large as possible).
The leader then disseminates a block proposal $(B,Q)$ in which~$B$
lists~$B^*(Q)$ as its parent block.  An honest player~$p$ that
receives such a block proposal while in view~$v$ will regard it as
\emph{valid} if and only if: (i) $Q$ is a QC for a block $B^*(Q)$ that
is the parent of $B$; and (ii) $v^*(Q^2)\leq v^*(Q)$, where~$Q^2$
denotes the current value of $p$'s corresponding local variable. Upon
receiving such a valid proposal, $p$ will disseminate a stage 1 vote
for $B$. Upon seeing a stage 1 QC for $B$ (a set of $2f+1$ stage 1
votes signed by distinct players) while in view $v$, $p$ will
disseminate a stage 2 vote for $B$ (and reset its local $Q^1$
variable, after which~$B^*(Q^1)=B$ and $v^*(Q^1)=v$). Upon seeing a
stage 2 QC for $B$ (a set of $2f+1$ stage 2 votes signed by distinct
players) while in view~$v$, $p$ will disseminate a stage 3 vote for
$B$ (and reset its local $Q^2$ variable, after which $B^*(Q^2)=B$ and
$v^*(Q^2)=v$). Upon seeing a stage 3 QC for any block, an honest
player considers that block and all its ancestors \emph{confirmed}.

\vspace{0.2cm} 
\noindent \textbf{Consistency and liveness for HotStuff}.  We now
argue (briefly and somewhat informally) that this protocol satisfies
consistency and liveness. For the former, call two distinct blocks
{\em incompatible} if neither is an ancestor of the other.  It
suffices to argue that, once a stage 3 QC~$Q$ is created for a
block~$B$ in a view~$v$---a prerequisite for confirmation---no QCs
will be created for any block incompatible with~$B$ at any view
$v' \ge v$.  Let~$H$ denote the honest players that contributed (stage
3) votes to~$Q$ (updating at that time their local~$Q^2$-values to a
view-$v$ stage 2 QC); because $n \ge 3f+1$ and a QC requires votes
from at least~$2f+1$ distinct players, $|H| \ge f+1$. Any QC thus
requires the participation of at least one player from~$H$.  An honest
player votes at most once in each stage of each view, so at
view~$v'=v$, players of~$H$ will not vote for blocks other than~$B$
and no QCs for such blocks will be created.  At view $v'=v+1$, a valid
block proposal must either list~$B$ as the parent block (in which case
the proposed block is compatible with~$B$) or a block~$B''$ that is
supported by a QC from some view~$v'' < v$ (in which case no player
of~$H$ will vote for it). We conclude that, at view~$v'=v+1$, no QC
will be created for any block incompatible with~$B$. The same argument
applies inductively to views~$v'=v+2,v+3,\ldots$.


For liveness, consider a view~$v$ with an honest leader~$p$, with $v$
large enough that, at GST, every honest player is in some view earlier
than~$v$.  Let~$t$ denote the first timeslot at which some honest
player enters view~$v$. (The timeslot~$t$ must exist because no honest
player can enter a view~$v' > v$ prior to some honest player entering
view~$v$.)  Because $t \ge \text{GST}$, all honest players enter
view~$v$ by timeslot $t + \Dk$ and the
leader~$p$ will disseminate a block proposal~$(B,Q)$ by this timeslot.
But will honest players vote in support of this proposal?  Consider
such a player~$p'$ with locally defined variable~$Q^2$.  Because $p'$
set this value upon seeing a stage 2 QC for $B^*(Q^2)$ while in view
$v^*(Q^2)$, at least $f+1$ honest players must have seen a stage 1 QC
for $B^*(Q^2)$ while in view $v^*(Q^2)$, and must have set their local
value $Q^1$ so that $v^*(Q^1)=v^*(Q^2)$ while in this view. At least
one of the $2f+1$ $\mathtt{view}\ v$ messages in the VC for view $v$
received by the leader $p$ must be from one of these $f+1$ honest
players, meaning that $v^*(Q)\geq v^*(Q^2)$; therefore, $p'$ will
regard the block proposal~$(B,Q)$ made by~$p$ as valid.  Thus, so long
as timer expiries are set appropriately, stage 1, 2, and 3 QCs
will be produced for $B$ in view~$v$, resulting in the
block's confirmation.

  
  
\subsection{An informal overview of the required modifications} \label{overv} 
 
We define a protocol $(\Sigma,\mathcal{O},\mathcal{C},\mathcal{S})$
such that $\mathcal{O}$ is empty and $\mathcal{S}$ contains only the
stake allocation function.  Modifying the permissioned HotStuff
protocol to give a proof of Theorem \ref{posPoS2} involves three key
considerations: dealing with changing sets of players, 
determining a safe way to end epochs, defining a
deterministic protocol, and ensuring that the adversary remains
$\rho$-bounded.

\vspace{0.2cm} 
\noindent \textbf{Accommodating changing players}.  To deal with a
changing player set, we divide the instructions into \emph{epochs} and
have the player set change with each epoch. (Each epoch loosely
corresponds to a single execution of the permissioned HotStuff
protocol; due to asynchrony, however, different players may be in
different epochs at the same time.)  Each epoch consists of an
(unbounded) number of views 
and, for some $N$, is responsible for producing confirmed blocks with
heights in $(eN,eN+N)]$.
(While the view number will reset to~0 with every new epoch, block
heights will continue to increment throughout the execution of the
protocol; all of this will be made precise in Section~\ref{tfs}.)
To ensure that the player set changes only
once with each epoch, blocks are confirmed in batches, with all blocks
with heights in $(eN,eN+N)]$ confirmed at the same time (once blocks
of all such heights have been produced).
After such a batch confirmation, executions consistent with the
quasi-permissionless setting will see all honest players that control
a non-zero amount of stake (with respect to the newly confirmed
transactions) being active and carrying out the instructions for
epoch~$e+1$.

\vspace{0.2cm}
\noindent \textbf{How to complete an epoch}.  In the permissioned
HotStuff protocol, a block may be confirmed indirectly (without any
stage 3 QC ever being produced for it), by inheriting confirmation
from a descendant block for which stage 1, 2 and 3 QCs have been
received.  This aspect of the protocol's design poses a challenge to
safely updating the player set: if a potentially epoch-ending
block~$B$ at height~$eN+N$ receives only stage 1 and 2 QCs (say),
which identifiers should be allowed to propose and vote on descendants
of~$B$? If the protocol could be sure that~$B$ would eventually be
indirectly confirmed, it could use the stake allocation resulting from
the execution of the transactions in~$B$ and its ancestors. But
another block~$B'$ at height~$eN+N$ might well also receive stage 1
and 2 QCs, leading to conflicting opinions about
the next epoch's player set.

Our solution is to allow the player set for epoch $e$ to continue
proposing and voting on blocks of heights greater than $eN+N$ until
they produce a (directly or indirectly) confirmed block~$B^*$ of
height $eN+N$. Once they do so, any blocks produced in epoch~$e$ with
height $>eN+N$, together with the votes for those blocks, are kept as
part of the protocol's permanent history (as proof that the protocol
operated as intended); crucially, any transactions in these blocks are
treated by the protocol as if they had never been included in any
block.\footnote{Indeed, treating these transactions as confirmed could
  lead to consistency violations. The issue is that the player set for
  epoch~$e+1$ may not have been present to implement HotStuff's
  locking mechanism during epoch~$e$.  Concretely, suppose~$B'$ and
  $B''$ are both blocks of height $>eN+N$ produced during epoch $e$,
  with $B'$ the parent of $B''$.  Suppose both blocks receive stage 1,
  2, and 3 QCs; this can occur during asynchrony, as the QCs for $B''$
  may be produced before any player has seen all of the QCs for
  $B'$. If the protocol were to regard the transactions in $B'$ and
  $B''$ as confirmed, then some of the epoch-$(e+1)$ player set may
  initially see $B''$ as confirmed, even though most of that player
  set has only seen stage 1, 2, and~3 QCs for~$B'$.  The new player
  set might then confirm new blocks that are descendants of $B'$
  incompatible with $B''$, resulting in a consistency violation.}
The player set for epoch~$e+1$ treats~$B^*$ as a genesis block for the
epoch; transactions from the ``orphaned'' epoch-$e$ blocks (i.e.,
those with height $>eN+N$) can then be re-included in epoch-$(e+1)$
blocks.


\vspace{0.2cm} 
\noindent \textbf{Defining a deterministic protocol}.  What is~$N$? If
a randomized protocol were acceptable, each leader could be chosen
with probability proportional to its current stake, and the set of
active players could change with every confirmed block (i.e.,
$N=1$). To obtain a deterministic protocol, we take advantage of the
assumptions introduced in Section~\ref{envrho}: every transaction
preserves the total amount of stake in existence; and stake is
denominated in some indivisible unit (and so stake amounts can be
regarded as integers).  Then, to obtain the simplest-possible solution
that is deterministic and guarantees at least one honest leader
produces a confirmed block in each epoch following GST, we set~$N$
equal to total amount of stake:
$N:= \sum_{p\in \mathcal{P}}\sum_{id\in \mathtt{id}(p)}
\mathtt{S}(S_0,\emptyset,id)$.
This choice of~$N$ ensures that, in each epoch, every stake-holding
honest player has at least one opportunity as the leader of a view.

A more reasonable (and still deterministic) alternative would be a PoS
protocol in which, for some $N'$, participating players must lock up
in escrow some positive integer multiple of $1/N'$ of the total stake.
Then, $N'$ plays (roughly) the previous role of $N$ but can be
chosen to be much smaller. See Section~\ref{effic} for further details
on the changes to the basic PoS-HotStuff protocol that are required to
implement such an ``in escrow'' version.

\vspace{0.2cm} 
\noindent \textbf{Ensuring the adversary remains $\rho$-bounded}. As
highlighted by Theorem \ref{niceneg}, we must also be careful to
ensure that confirmed sets of transactions do not allocate more than a
$\rho$-fraction of the stake to the adversary. As we will see in
Section \ref{secprof}, the assumption that the environment is
$\rho$-bounded
is the key ingredient in an inductive
proof that this property always holds (and that, in fact, consistency
is never violated).

\subsection{PoS-HotStuff: The formal specification}  \label{tfs} 
  
As noted in Section \ref{ss:inputs}, it is convenient to assume that
whenever honest $p$ disseminates a message at some timeslot $t$, $p$
regards that dissemination as received (by $p$) at the next timeslot.
  
\vspace{0.2cm} 
\noindent \textbf{Signed messages}. Recall from
Section~\ref{ss:permitted} that, to model the use of a signature
scheme, we suppose that an entry of signed type must be of the form
$(id,m)$ such that $id$ is an identifier (and where $m$ is
arbitrary). In the \emph{authenticated} setting that we work in here,
an entry $e=(id,m)$ of signed type is permitted for $p$ at $t$ if
$id\in \mathtt{id}(p)$, or if $p$ has previously received a message
which contained the entry $e$ (in which case~$p$ is free to repeat it
to others).
  
\vspace{0.2cm} 
\noindent \textbf{Transaction gossiping and the variable
  $\mathtt{T}^{\ast}$}. We assume that, whenever any honest player
first receives a transaction $\mathtt{tr}$, from the environment or
another player, they disseminate $\mathtt{tr}$ at the same
timeslot. Each honest player $p$ maintains a variable
$\mathtt{T}^{\ast}$ which is the set of all transactions received by
$p$ thus far.
  
  
\vspace{0.2cm} 
\noindent \textbf{Blocks}. A block (other than the genesis block) is a
tuple $B=(id, (h,e,v,\mathtt{T},B',Q))$, where $h\in \mathbb{N}$
denotes the height of the block, $e$ denotes an epoch, $v$ denotes a
view in epoch $e$, $\mathtt{T}$ is a set of transactions, and $B'$ is
a block that is the \emph{parent} of $B$.  
The entry $Q$ is a
QC that will be used to verify the validity of $B$; the intention is
that~$Q$ will record a stage 1 QC for~$B'$.  When an honest player
receives $B$, it also regards the transactions in $\mathtt{T}$ as
having been received, and so disseminates those transactions if it has
not already done so. We set $id^*(B):=id$, $h^*(B):=h$, $e^*(B):=e$,
$v^*(B):=v$, $\mathtt{T}^*(B):=\mathtt{T}$, and $Q^*(B):=Q$. The
\emph{genesis block} $B_{g}:=(\emptyset)$ is regarded by honest
players as having been received at the beginning of the protocol
execution. The genesis block has no parent and has only itself as an
\emph{ancestor}, and we define $h^*(B_{g}):=0$, $e^*(B_{g}):=-1$,
$v^*(B_{g}):=0$, $\mathtt{T}^*(B_{g}):=\emptyset$, and
$Q^*(B_{g}):=\emptyset$; $id^*(B_g)$ is undefined. For all other
blocks $B$, $h^*(B)\geq 1$, $e^*(B) \ge 0$, and $v^*(B) \ge 0$;
otherwise honest players will regard $B$ as invalid.  The
\emph{ancestors} of such a block $B$ are $B$ and all ancestors of its
parent.  Two blocks are \emph{incompatible} if neither is an ancestor
of the other.
  
\vspace{0.2cm} 
\noindent \textbf{Votes}.  A vote is a tuple $V=(id,(c,B,s))$, where
$id$ is an identifier, $c\in \mathbb{N}$  represents an amount of
stake, $B$ is a block, and $s\in \{ 1,2,3 \}$ specifies the voting
stage within the view. We define $id^*(V):=id$, $c^*(V):=c$,
$B^*(V):=B$, and $s^*(V):=s$. A vote inherits the epoch, view, and
height of its block: 
$e^*(V):=e^*(B)$, $v^*(V):=v^*(B)$, and $h^*(V):=h^*(B)$.

\vspace{0.2cm} 
\noindent $\mathtt{T}$-\textbf{Quorum Certificates (QCs)}.  Because
quorum certificates (QCs) must be stake-weighted (to accommodate
sybils) and transactions can transfer stake from one identifier to
another, QCs must be defined relative to a set of transactions (for
us, the set of transactions confirmed in previous epochs).  For the
genesis block, by definition,
the empty set~$\emptyset$ constitutes a stage $s$
$\emptyset$-QC for all $s\in \{ 1, 2,3 \}$.  For a block $B$ with
$h(B)>0$ and $s\in \{ 1, 2,3 \}$, a stage $s$ $\mathtt{T}$-QC for $B$
is, roughly, a set of correctly specified stage $s$ votes for $B$ from
players who own at more than two-thirds of the stake according to
$\mathtt{S}(S_0,\mathtt{T})$. Precisely, we say that a set of votes
$Q$ is a \emph{stage $s$ $\mathtt{T}$-QC for $B$} if the following
conditions are all satisfied:
\begin{enumerate}

\item[(QC1)] {\em Agreement:} For every $V\in Q$, $B^*(V)=B$ and
  $s^*(V)=s$. 
In this case, we define $B^*(Q):=B$, $e^*(Q):=e^*(B)$, $v^*(Q):=v^*(B)$, $h^*(Q):=h^*(B)$,  and
  $s^*(Q):=s$.


\item[(QC2)] {\em Correct stake amounts (w.r.t.~$\mathtt{T}$).} For each
  $V\in Q$, 
$c^*(V) = \mathtt{S}(S_0,\mathtt{T},id^*(V))$.
  
\item[(QC3)] {\em No double-voting.} There do not exist distinct
  $(id,(c,B,s)), (id',(c',B',s')) \in Q$ with $id=id'$.

\item[(QC4)] {\em Supermajority of stake.} $\sum_{V\in Q} c^*(V)> \frac{2}{3}N$. 

\end{enumerate}

\vspace{0.2cm} 
\noindent \textbf{Selecting leaders}. We denote by $\mathtt{leader}$ a
function that allocates leaders to views within an epoch in proportion
to their stake. Because this function is stake-dependent, it must be
defined relative to a set of transactions~$\mathtt{T}$ (again, for us,
the transactions confirmed in previous epochs). Formally, we let
$\mathtt{leader}$ be any function that satisfies: If
$\mathtt{S}(S_0,\mathtt{T},id)=c$, then, for each $k\in \mathbb{N}$
there exist $c$ many~$v$ in the interval $[kN,(k+1)N)$ such that
$\mathtt{leader}(\mathtt{T},v)=id$. For example, given $\mathtt{T}$,
$\mathtt{leader}(\mathtt{T},\cdot)$ could repeatedly iterate through
all~$N$ coins (with $\mathtt{leader}(\mathtt{T},v)$ the identifier
that owns the coin $v \bmod N$) in some canonical order.
 
\vspace{0.2cm} 
\noindent \textbf{The variables $\mathtt{T}_e$ and} $B_{g,e}$.  Each
player $p$ maintains local variables $\mathtt{T}_e$ and $B_{g,e}$ (for
each $e\in \mathbb{N}$), which are initially undefined and become
defined upon the player beginning epoch $e$:
intuitively,~$\mathtt{T}_e$ is the set of transactions that $p$
regards as confirmed by epochs $e'<e$, while $B_{g,e}$ is the block of
height $eN$ that $p$ first sees as confirmed (regarded as a ``genesis
block'' for epoch $e$).  The transactions of~$\mathtt{T}_e$ will be
the ones that determine who should be leaders and who should be voting
(and with weight) during epoch $e$. The instructions of epoch $e$
attempt to build the blockchain above $B_{g,e}$.


\vspace{0.2cm} 
\noindent \textbf{The variable} $M$. Each honest player~$p$ maintains
a local variable $M$, which is the set of all messages received by $p$
thus far. Initially, $M:= \{ B_{g} \} $.

\vspace{0.2cm}
\noindent \textbf{Defining} $B^*_{g,e}(M), \mathtt{T}^*_e(M)$
\textbf{and block validity}. We next specify the functions that an
honest player uses to define its local variables~$B_{g,e}$ and~$\mathtt{T}_e$
based on its local variable~$M$.  The function $\mathtt{T}^*_e$ is
defined on input~$M$ if and only if~$\mathtt{B}^*_{g,e}$ is; in this
case,
$\mathtt{T}^*_{e}(M)=\cup_{B''\in \mathcal{B}} \mathtt{T}^*(B'')$,
where $\mathcal{B}$ denotes the ancestors of $B^*_{g,e}(M)$.  That is,
whenever~$\mathtt{T}^*_e(M)$ is defined, it is all the transactions
included in the blocks between the original genesis block~$B_g$ and
$\mathtt{B}^*_{g,e}(M)$ (inclusive).

We can now define (or leave undefined), for a fixed set of
messages~$M$, the value of~$B^*_{g,e}(M)$ for all~$e \ge 0$. We
proceed by induction on~$e$. For~$e=0$, $B^*_{g,0}(M):=B_g$ (and
$\mathtt{T}^*_0(M) := \mathtt{T}^*_{-1}(M) := \emptyset$).  Now
consider arbitrary~$e \ge 0$ and suppose that $B^*_{g,e}(M)$ 
is
defined (as otherwise we leave~$B^*_{g,e+1}(M)$ undefined as well).
We next define
what it means for a block~$B$ in epoch~$e$ to be ``valid'';
intuitively,~$B$ is valid if it is formed by the correct leader, if
its parent is valid and $B$ correctly reports a stage 1 QC for the
parent, if it correctly reports its own height, and if it includes a
set of transactions that is valid relative to those already included
in ancestor blocks other than $B$.\footnote{Thus, formally, we define
  validity for a block $B$ (with respect to~$M$) by induction
  on~$e^*(B)$ and then on height.}  Precisely, the block
$B=(id, (h,e,v,\mathtt{T},B',Q))$ is {\em $M$-valid} if and only if
$B=B_g$ or:
\begin{itemize} 
\item {\em Correct leader:} $id=\mathtt{leader}(T^*_e(M),v)$. 
\item {\em Valid transactions:} $\cup_{B''\in \mathcal{B}}
  \mathtt{T}^*(B'')$ is a valid set of transactions relative to $S_0$,
  where $\mathcal{B}$ denotes the ancestors of $B$. 
\item {\em Valid predecessor:} $B'$ is $M$-valid, with $h(B')=h-1$.
\item {\em Parent supported by QC (case 1):} 
If $e^*(B')=e-1$ then $Q$ is a stage 1 $\mathtt{T}_{e-1}^*(M)$-QC for $B'$. 
\item {\em Parent supported by QC (case 2):} 
If $e^*(B')=e$ then $Q$ is a stage 1 $\mathtt{T}_{e}^*(M)$-QC for $B'$. 
\end{itemize}
In particular, $p$ regards a block~$B$ as $M$-valid only if it regards
all of~$B$'s ancestors as $M$-valid, and in particular has seen stage
1 QCs for all of $B$'s ancestors. (The player may not have seen stage
3 QCs for all of these blocks, however.)

Finally, 
$B^*_{g,e+1}(M)$ is defined if and only if:
\begin{itemize} 
\item there is at least one~$M$-valid block~$B$ with~$e^*(B)=e$
and $h^*(B) \ge eN+N$ for which~$M$ contains stage 1, 2, and
3~$\mathtt{T}_e$-QCs for~$B$; 
\item no two such blocks are incompatible; and
\item $B$ has $B^*_{g,e}(M)$ as an ancestor;
\end{itemize}
in this case, $B^*_{g,e+1}(M)$ is defined as the unique ancestor~$B'$
of~$B$ at height $eN+N$ (and~$\mathtt{T}^*_{e+1}(M)$ is then defined, as usual,
as all transactions included in~$B'$ and its ancestors).

\vspace{0.2cm} 
\noindent \textbf{The confirmation rule}. Given a set~$M$ of
messages, the confirmed transactions are defined as
$\mathcal{C}(M):= \mathtt{T}_e^*(M)$, where $e$ the largest value for which
$\mathtt{T}_e^*(M)\downarrow$.  

\vspace{0,2cm}
\noindent \textbf{Comment}.  It is a subtle but important point in the
definitions above that, while executing the instructions for epoch $e$,
honest players may produce blocks of height $>eN+N$ with stage 3
$\mathtt{T}_{e}$-QCs (where $\mathtt{T}_{e}$ is as defined for some
honest player), but the transactions in such blocks will not be
considered confirmed by the confirmation rule. The reason for this is explained in Section \ref{overv}. 
  

  
\vspace{0.2cm}
\noindent \textbf{View certificates (VCs)}. Recall that, when a player $p$ wishes
to enter view $v$ in our permissioned version of HotStuff, $p$
disseminates a $\mathtt{view}\ v$ message. The role of this message is
two-fold. First, it alerts other players that $p$ wishes to enter view
$v$. Second, the message relays $p$'s present value for its local
variable $Q^1$ to the leader of view $v$: as explained in the liveness
argument of Section \ref{ph}, this is the information the leader needs
to propose a block that other honest players will vote for. In our PoS
version of the protocol, we use a similar setup.

A $\mathtt{view}\ (e,v)$ message is a tuple $r=(id,(c,e,v,Q))$. Such a
message is {\em $M$-valid} if all of the following are true:
\begin{itemize}
\item {\em Epoch~$e$ has begun according to~$M$:}  $\mathtt{T}^*_e(M)\downarrow$;
\item {\em Correct stake amount (w.r.t.~$\mathtt{T}^*_e(M)$):} 
$c = \mathtt{S}(S_0,T^*_e(M),id)$;
\item {\em Supported by QC:} $Q=\emptyset$ or $Q$ is a stage 1
  $\mathtt{T}^*_e(M)$-QC  for some block $B$ that is $M$-valid with $e^*(B)=e$.
\end{itemize} 
For such a tuple~$r$, define $c^*(r):=c$. A set $R\subseteq M$ of
$\mathtt{view}\ (e,v)$ messages is then an {\em $M$-VC for $(e,v)$} if
and only if the following three requirements are met:
\begin{itemize} 
\item {\em Validity:} every $r\in R$ is $M$-valid;
\item {\em No double-voting:} there do not exist distinct
  $(id,(c,e,v,Q)), 
  (id',(c',e',v',Q')) \in R$ with $id=id'$; and 
\item {\em Supermajority of stake:} $\sum_{r\in R} c^*(r) > \frac{2}{3}N$.
\end{itemize}    
  
\vspace{0.2cm} 
\noindent \textbf{Choosing blocks to propose}.  When an honest
player~$p$ controls the identifier~$id$ that is the leader of a
view~$v$ (within some epoch~$e$), they are instructed to propose a
block as follows. Suppose~$p$ entered~$v$ on account of receiving an
$M$-VC for~$(e,v)$, which we denote by~$R$.  First, suppose that no
one has an opinion about which existing block to build on, in the
sense that,
for every $r=(id,(c,e,v,Q)) \in R$,
$Q = \emptyset$.
Then,
$p$ is instructed to extend the block it sees as the
genesis block of this epoch with a new block that contains all of the
not-yet-confirmed transactions that~$p$ is aware of. Formally, in this
case, define the block proposal function $\mathtt{block}$ by
\begin{equation}\label{eq:block1}
\mathtt{block}(\underbrace{B_{g,e},\mathtt{T}^*,M}_{\text{$p$'s local vars}},id):= (id,
  (h^*(B_{g,e})+1,e,v,\mathtt{T},B_{g,e},Q)),
\end{equation}
where~$\mathtt{T}^*$ denotes the set of all transactions that $p$ has
received; $\mathtt{T}=\mathtt{T}^*-\cup_{B''\in \mathcal{B}}
\mathtt{T}^*(B'')$, where $\mathcal{B}$ denotes the ancestors of
$B_{g,e}$; and~$Q$ is a stage 1 $\mathtt{T}^*_{e-1}(M)$-QC for
$B_{g,e}$. 

Otherwise, the leader proceeds similarly to the permissioned HotStuff
protocol, choosing the QC~$Q$ with largest view and proposing a block
with $B^*(Q)$ as the parent. Formally, from amongst all $Q'$ such that
there exists $r=(id,(c,e,v,Q')) \in R$, choose $Q$ such that $v^*(Q)$
is maximal and define $\mathtt{block}$ by
\begin{equation}\label{eq:block2}
\mathtt{block}(B_{g,e},\mathtt{T}^*,M,id):= (id,
  (h^*(Q)+1,e,v,\mathtt{T},B^*(Q),Q));
\end{equation}
here
$\mathtt{T}=\mathtt{T}^*-\cup_{B''\in \mathcal{B}} \mathtt{T}^*(B'')$,
where $\mathcal{B}$ denotes the ancestors of $B^*(Q)$.

\vspace{0.2cm} 
\noindent \textbf{The variables $Q^1_e$ and $Q^2_e$}. Each player
maintains variables $Q^1_e$ and $Q^2_e$ which, analogous to the
variables $Q^1$ and $Q^2$ in our description of permissioned HotStuff
in Section~\ref{ph}, implement a locking mechanism within epoch $e$.

\vspace{0.2cm} The pseudocode for the PoS-HotStuff protocol is
described in Algorithm 1. There are six main blocks of pseudocode,
which describe when and how a player enters a new epoch (and what to
do at that time); when and how to enter a new view (and, if the view's
leader, disseminate a block); when and how to disseminate a stage 1,
stage 2, and stage 3 vote; and when and how to disseminate a
$\mathtt{view}$ message.
  
\begin{algorithm}
\caption{PoS-HotStuff (instructions for a player $p$)}
\begin{algorithmic}[1]

    \State \textbf{Local variables} 
    
   \State $Q^1_e$, initially set to $\emptyset$, $Q^2_e$, initially undefined        \Comment Used to implement locking
   
      \State $B_{g,e}, \mathtt{T}_e$, initially undefined          \Comment Defined upon entering epoch $e$
        
    \State $\mathtt{T}^{\ast}$, initially $\mathtt{T}^{\ast}:=\emptyset$          \Comment The set of all transactions received by $p$
             
    \State $M$, initially $M:= \{ B_{g} \} $          \Comment The set of all messages received by $p$
    
        \State $e$, initially $e:= -1 $          \Comment $p$'s present epoch 
        
            \State $v$, initially $v:= -1 $          \Comment $p$'s present view
            
                  \State $B$, initially undefined        \Comment Records a block value 

    \State 
%
%
%
%
%
    
            \State \textbf{At every timeslot $t$}
        
        \State \hspace{0.3cm} Update $M$ and $\mathtt{T}^{\ast}$; 
        
         \State \hspace{0.3cm} Disseminate all transactions in  $\mathtt{T}^{\ast}$ received for the first time at $t$; 
        
            \State

            \State \textbf{Upon} $\mathtt{T}_{e'}^*(M)\downarrow$ for $e'>e$: 
            
            \State  \hspace{0.3cm} Set $\mathtt{T}_{e'}:= \mathtt{T}_{e'}^*(M)$, $B_{g,e'}:=B_{e'}^*(M)$, $e:=e'$,  $v:=-1$;       \Comment Enter new epoch
            
            \State  \hspace{0.3cm}    For each $id\in \mathtt{id}(p)$ such that $\mathtt{S}(S_0,\mathtt{T}_e,id)>0$: 
        
       \State  \hspace{0.6cm}         Let $c:= \mathtt{S}(S_0,\mathtt{T}_e,id)$;  
        
              \State  \hspace{0.6cm}   Disseminate the view $(e,0)$  message  $(id,(c,e,0,\emptyset))$;

            \State

                       \State \textbf{Upon} receiving an $M$-VC for $(e,v')$ with $v'>v$:
                       
                                 \State  \hspace{0.3cm} Set  $v:=v'$; Set $B\uparrow$;                \Comment Enter new view
                       
                           \State  \hspace{0.3cm} Set  the $(e,v)$-timer to expire after $5\lceil \Delta/\kappa \rceil$ active non-waiting timeslots;    \Comment Set timer
                       
                          \State  \hspace{0.3cm} \textbf{If} $\mathtt{leader}(\mathtt{T}_e,v)=id\in \mathtt{id}(p)$ \textbf{then}:

           \State  \hspace{0.6cm}  Disseminate $
           \mathtt{block}(B_{g,e},\mathtt{T}^*,M,id)$; \Comment Leader
           disseminates block as per~\eqref{eq:block1} or~\eqref{eq:block2}
            
                        \State

               \State \textbf{Upon} first receiving an $M$-valid block  $(id, (h,e,v,\mathtt{T},B',Q))$:

          \State  \hspace{0.3cm} \textbf{If} $Q_e^2\uparrow$ or
          $Q_e^2 \leq v^*(Q)$ \textbf{then}:     \Comment Checks lock

         \State  \hspace{0.6cm} Set $B:=(id, (h,e,v,\mathtt{T},B',Q))$;
          
                    \State  \hspace{0.6cm}  For each $id\in \mathtt{id}(p)$ such that $\mathtt{S}(S_0,\mathtt{T}_e,id)>0$: 
        
       \State  \hspace{0.9cm}         Let $c:= \mathtt{S}(S_0,\mathtt{T}_e,id)$;  
        
              \State  \hspace{0.9cm}   Disseminate the vote  $(id,(c,B,1))$;      \Comment Stage 1 vote

                                    \State

          \State \textbf{Upon} $B\downarrow $ and first receiving $Q'$ which is a stage 1 $\mathtt{T}_e$-QC for $B$:

                \State \hspace{0.3cm} Set $Q^1_e:= Q'$;                    \Comment Sets $Q^1_e$

                       \State  \hspace{0.3cm}  For each $id\in \mathtt{id}(p)$ such that $\mathtt{S}(S_0,\mathtt{T}_e,id)>0$: 
        
       \State  \hspace{0.6cm}         Let $c:= \mathtt{S}(S_0,\mathtt{T}_e,id)$;  
        
              \State  \hspace{0.6cm}   Disseminate the vote  $(id,(c,B,2))$;      \Comment Stage 2 vote

                                                \State

              \State \textbf{Upon} $B\downarrow $ and first receiving $Q''$ which is a stage 2 $\mathtt{T}_e$-QC for $B$:

                \State \hspace{0.3cm} Set $Q^2_e:= Q''$;                    \Comment Sets $Q^2_e$

                       \State  \hspace{0.3cm}  For each $id\in \mathtt{id}(p)$ such that $\mathtt{S}(S_0,\mathtt{T}_e,id)>0$: 
        
       \State  \hspace{0.6cm}         Let $c:= \mathtt{S}(S_0,\mathtt{T}_e,id)$;  
        
              \State  \hspace{0.6cm}   Disseminate the vote  $(id,(c,B,3))$;      \Comment Stage 3 vote

              \State            
              
                 \State \textbf{Upon} $(e,v)$-timer expiry \textbf{or} first receiving a stage 3 $\mathtt{T}_e$-QC for $B$:                                 
         
            \State  \hspace{0.3cm}  For each $id\in \mathtt{id}(p)$ such that $\mathtt{S}(S_0,\mathtt{T}_e,id)>0$: 
        
       \State  \hspace{0.6cm}         Let $c:= \mathtt{S}(S_0,\mathtt{T}_e,id)$;  
        
              \State  \hspace{0.6cm} Disseminate the $\mathtt{view} (e,v+1)$ message $(id,(c,e,v+1,Q^1_e))$;

\end{algorithmic}
\end{algorithm}

\subsection{Consistency, accountability, and weakly $\rho$-bounded adversaries} \label{secprof} 

We prove in this section that, whenever an execution of the
PoS-HotStuff protocol leads to a consistency violation, it is possible
to extract from the disseminated messages a set of identifiers that
are guaranteed to belong to Byzantine players and that collectively
control at least one-third of the overall stake.  That is, we prove
that the protocol is~$(1/3,1)$-accountable in the sense of
Section~\ref{ss:da_imp2}. In particular, the protocol satisfies
consistency with respect to a $\rho$-bounded adversary (in the sense
of Section~\ref{bps}) for all $\rho < 1/3$. We conclude this section
with the additional arguments needed to accommodate a weakly
$\rho$-bounded adversary, as required by Theorem~\ref{posPoS2}.

\vspace{0.2cm} 
\noindent \textbf{Defining a blame function}.  The definition of
accountability involves a {\em blame function}~$F$ that maps a set of
messages to a set of identifiers controlled by Byzantine
players. Intuitively, our blame function will return identifiers that
double-voted in the protocol's execution. Precisely, given a message
set~$M$, define~$F(M)$ as the set of all identifiers~$id$ that cast
two votes~$V,V' \in M$ such that:
\begin{itemize} 
\item [(F1)] $V$ is a stage 3 vote for some block $B$, signed by identifier $id$; 
\item [(F2)] $V'$ is a stage 1 vote for some block $B'\neq B$, signed by identifier $id$; 
\item [(F3)] $e^*(B)=e^*(B')$ and $v^*(V)\leq v^*(V')$; and
\item [(F4)] $v^*(Q^*(B')) <v^*(V)$.
\end{itemize} 
To see why~$F(M)$ can never include an identifier controlled by an
honest player, suppose that such a player~$p$ casts a stage 3 vote for
a block~$B$ in the view~$v:=v^*(V)$.  Within a view, an honest player
votes for at most one block (the first block proposal it receives from
the view's leader), so~$p$ will not vote for any block other than~$B$
in the view~$v$. Because~$p$ casts a stage 3 vote for~$B$ in view~$v$,
it resets its local variable~$Q^2_e$ at that time so that
$v^*(Q^2_e)=v$; further resets at later views within the epoch
will only increase its value.  Thus, at all views $v' > v$, $p$ will
never cast a vote for a block~$B'$ for which $v^*(Q^*(B')) < v$.


\vspace{0.2cm} 
\noindent \textbf{Consistency failures}.  Consider an execution of the
PoS-HotStuff protocol that sees a consistency failure, in the form of
two incompatible blocks that are confirmed by different honest
players.  By the definition of block validity, players that agree on
their local values of~$B_{g,e}$ and~$B_{g,e+1}$ for some $e \ge 0$
also agree on all blocks in between. By the definition of the
confirmation rule, a player only confirms descendants of~$B_{g,e}$ for
values of~$e \ge 0$ for which both~$B_{g,e}$ and~$B_{g,e+1}$ are
defined. Thus, a consistency failure occurs only if, for
some~$e \ge 0$, two honest players set distinct values
for their local variable~$B_{g,e+1}$.

We say that a block $B$ is \emph{seen as valid} if there exists
some honest $p$ and some timeslot $t$ at which~$B$ is $M$-valid, where
$M$ is as locally defined for $p$ at $t$.
Let~$e$ be the first epoch at which honest players set distinct values
for~$B_{g,e+1}$.
Recalling the definition of the function~$B^*_{g,e+1}(\cdot)$, 
there must exist incompatible blocks $B$ and $B''$---descendants of
the competing values for~$B_{g,e+1}$, each with height
at least $eN+N$---such that:
\begin{itemize} 
\item  $e^*(B)=e^*(B'')=e$;
\item $B$ and $B''$ are both seen as valid; 
\item  $B$ and $B''$ both receive stage 1,2, and 3 $\mathtt{T}_e$-QCs. 
\end{itemize} 
Without loss of generality, suppose $v^*(B)\leq v^*(B'')$, set
$v:=v^*(B)$, and let $Q$ be a stage 3 $\mathtt{T}_e$-QC for~$B$. 
Among all blocks~$B'$ incompatible with~$B$ that received a stage 1
QC~$Q'$ at a view $v' \ge v$, choose one with~$v'$ as small as
possible. (Such a block exists because~$B''$ is one candidate.)
This choice of~$v'$ ensures that the parent of~$B'$ is from a
view prior to~$v$ (i.e., $v^*(Q^*(B'))<v$).

A pair of votes for~$Q$ and~$Q'$ satisfies~(F1)--(F4), so every
identifier that casts such a pair will be included in~$F(Q \cup Q')$.
Because  $Q$ is a stage 3 $\mathtt{T}_e$-QC for $B$, no identifier
appears twice and $\sum_{V \in Q}
\mathtt{S}(S_0,\mathtt{T}_e,id^*(V)))> \frac{2}{3}N$.
Because  $Q'$ is a stage 1 $\mathtt{T}_e$-QC for $B'$, no identifier
appears twice and $\sum_{V' \in Q'}
\mathtt{S}(S_0,\mathtt{T}_e,id^*(V')))> \frac{2}{3}N$.
It follows that:  
\begin{enumerate} 
\item[$(\dagger_1)$]
  $\sum_{id\in F(Q \cup Q')} \mathtt{S}(S_0,\mathtt{T}_e,id) > \frac{1}{3}N$,
  implying that the protocol is
  $(1/3,1)$-accountable.\footnote{Accountability, as defined in
    Section~\ref{ss:da_imp2}, requires only that the blamed
    identifiers be assigned a large amount of stake with respect to
    some subset of confirmed transactions. Here, this subset is
    precisely specified and natural: the set~$\mathtt{T}_e$ of transactions
    confirmed by all honest players in the epochs that precede the
    first epoch~$e$ with a consistency failure.}

\item[$(\dagger_2)$] $\mathtt{T}_e$ allocates $\geq \frac{1}{3}N$
  stake to Byzantine players, implying that, for every $\rho < 1/3$,
  the protocol satisfies consistency with respect to a $\rho$-bounded
  adversary.
\end{enumerate}

 
\vspace{0.2cm} 
\noindent \textbf{Proving that the adversary remains $\rho$-bounded}.
To satisfy the requirements of Theorem~\ref{posPoS2} (as opposed to
only Theorem~\ref{posPoS}), we consider next a weakly $\rho$-bounded
adversary, meaning that we make no blanket assumptions on the fraction
of stake controlled by Byzantine players throughout the protocol
execution and instead restrict attention to an environment that is
$\rho$-bounded in the sense of Section~\ref{envrho}.

Assume that the environment is $\rho$-bounded for some $\rho<1/3$.
We argue by induction that, for every epoch~$e \ge 0$:
\begin{enumerate} 
\item [(P1)] For all $e' \le e$, no two honest players define distinct
values for $B_{g,e'}$ or $\mathtt{T}_{e'}$.
\item [(P2)] If~$\mathtt{T}_e \downarrow$,
$\mathtt{S}(S_0,\mathtt{T}_e)$ allocates at most a
  $\rho$-fraction of the total stake to Byzantine players.
\end{enumerate}  
The base case of~$e=0$ holds by the initialization of the protocol and
the assumption of a $\rho$-bounded environment.  Now suppose~(P1)
and~(P2) hold for some epoch~$e \ge 0$.  Using $(\dagger_2)$, above,
property~(P1) will hold also for epoch~$e+1$. To prove~(P2), suppose
there is some timeslot at which $B_{g,e+1}$ and~$\mathtt{T}_{e+1}$ become
defined for some honest player.  (Otherwise, there is nothing to prove.)
Because $\mathtt{T}_e$ allocates the adversary at most a
$\rho$-fraction of the stake with~$\rho < 1/3$, at least one vote in
any stage 1 $\mathtt{T}_e$-QC for~$B_{g,e+1}$ must be by an honest
player. Let $t^*$ be a timeslot at which some honest player~$p$
disseminates a stage 1 vote for $B_{g,e+1}$. At $t^*$, $\mathtt{T}_e$
is the set of confirmed transactions for $p$, and all transactions
in~$\mathtt{T}_{e+1}$ have already been issued by the environment. Let
$\mathtt{T}_{\text{hon}}$ denote the set of all honest transactions issued
by the environment at timeslots $\leq t^*$.  From the definition of a
$\rho$-bounded environment in
Section~\ref{envrho}---with~$\mathtt{T}_e$
and~$\mathtt{T}_{\text{hon}}$ playing the roles of~$\mathtt{T}_1$
and~$\mathtt{T}_2$, respectively---it follows that $\mathtt{T}_e$
contains all the transactions required for any transaction in
$\mathtt{T}_{\text{hon}}$ to be valid relative to~$S_0$,
$\mathtt{T}_e \cup \mathtt{T}_{\text{hon}}$ is a valid set of
transactions relative to $S_0$, and
$\mathtt{S}(S_0, \mathtt{T}_e \cup \mathtt{T}_{\text{hon}})$ allocates
the adversary at most a $\rho$-fraction of the stake.  Because honest
transactions can only increase the adversary's share of stake,
$\mathtt{S}(S_0, \mathtt{T}_e \cup (\mathtt{T}_{e+1} \cap
\mathtt{T}_{\text{hon}}))$ also allocates the adversary at most a
$\rho$-fraction of the stake.  Because Byzantine transactions can only
decrease the adversary's share of stake,
$\mathtt{S}(S_0, \mathtt{T}_{e+1})$ must allocate the adversary at
most a $\rho$-fraction of the stake, completing the argument.

\subsection{Liveness and optimistic responsiveness}\label{ss:or}

We first show that the PoS-HotStuff protocol satisfies liveness (in
the sense of Section~\ref{ss:liveness}) with respect to a weakly
$\rho$-bounded adversary with $\rho < 1/3$, and then extend the
argument to show that the protocol satisfies optimistic responsiveness
(in the sense of Section~\ref{ss:da_imp3}). We make no attempt to
optimize the constants in our liveness parameters.

\vspace{0.2cm} \noindent \textbf{Proving liveness}.  Order epoch-view
pairs $(e,v)$ lexicographically; in an abuse of terminology, we often
refer to such pairs as ``views.''  Because the environment is
$\rho$-bounded, the consistency argument in Section~\ref{secprof}
implies that honest players always agree on the values of
$B_{g,e}$ and $\mathtt{T}_e$ (whenever they are
defined). When~$B_{g,e}$ (and hence $\mathtt{T}_e$) is defined, the
``leader of view $(e,v)$'' refers to the identifier
$\mathtt{leader}(\mathtt{T}_e,v)$.  In this case, call a view~$(e,v)$
{\em post-GST} if every honest player is in some view earlier
than~$(e,v)$ at GST, and call it {\em good} if it is post-GST and
$\mathtt{leader}(\mathtt{T}_e,v)$ is controlled by an honest player.
Call a player a {\em stakeholder} in view~$(e,v)$ if at least one if
its identifiers is allocated a non-zero amount of stake in
$\mathtt{S}(S_0,\mathtt{T}_e$).  By the definition of the
$\mathtt{leader}$ function, the leader of a view must be controlled by
a stakeholder.  By the definition of the quasi-permissionless setting,
honest stakeholders are always active.

The first claim asserts that good views result in blocks with
stage 3 QCs, unless some honest player transitions to a larger epoch
first.

\vspace{0.2cm} 
{\sc Claim 1:}
If~$(e,v)$ is a good view that is first reached by an
honest player ($p$, say) at timeslot~$t$ (with~$t$ necessarily larger
than GST), then by timeslot~$t + 5\Dk$, either:
\begin{itemize}

\item [(i)] a block proposed in view~$(e,v)$ receives stage 1, 2,
  and 3 QCs in that view, which are seen by all honest stakeholders;
  or

\item [(ii)] some honest player has entered an epoch $e' > e$.

\end{itemize}

\begin{proof}
Because
$t \ge \text{GST}$, by timeslot $t + \lceil \Delta/\kappa \rceil$,
every honest stakeholder will have received all the messages that
had been received by~$p$ by timeslot~$t$ and must therefore be in some
view at least as large as~$(e,v)$.  
Suppose first that no honest stakeholder enters a view higher than~$(e,v)$
at or before timeslot~$t + 4\Dk$.
In this case, the (honest and active) leader~$p'$ of view~$(e,v)$ 
will disseminate a block proposal~$B$ for the view by
timeslot $t + \Dk$.  All honest stakeholders will receive this
proposal by timeslot
$t+ 2 \lceil \Delta/\kappa \rceil$, and will regard the proposed block
as valid by that timeslot (having then received all messages received
by $p'$ by
timeslot $t+ \lceil \Delta/\kappa \rceil$).  
Let~$p''$ denote an honest stakeholder with the maximum value
of~$v^*(Q^2_e)$ upon entering view~$(e,v)$. Because~$p''$ set
this value upon seeing a stage 2 $\mathtt{T}_e$-QC for $B^*(Q^2_e)$, there
must exist a set of honest stakeholders~$P_1$, controlling at least~$N/3$
units of stake (according to~$\mathtt{T}_e$), that saw a stage 1
$\mathtt{T}_e$-QC for $B^*(Q^2_e)$ while in view $v^*(Q_e^2)$ and
therefore set their local values $Q^1_e$ so that
$v^*(Q^1_e)=v^*(Q_e^2)$ while in this view. If any player in $P_1$
subsequently updated their local value $Q^1_e$, then $v^*(Q^1_e)$ must
have increased with this update.  Now let $R$ denote the VC received by
the leader $p'$ that caused it to enter view~$(e,v)$, and let $P_2$ denote
the stakeholders that disseminated the messages in $R$.  Because the
players in $P_2$ own total stake more than $2N/3$ (according
to~$\mathtt{T}_e$), there must be some (honest) stakeholder
in the intersection $P_1 \cap P_2$ that contributed a
$\mathtt{view}~(e,v)$ message to~$R$ with a QC~$Q$
satisfying~$v^*(Q) \ge v^*(Q_e^2)$. Because $p'$ is honest, and by the
definition of the function $\mathtt{block}$, 
the block proposal by~$p'$ satisfies $v^*(Q^*(B)) \geq v^*(Q_e^2)$.
Thus, due to our choice of~$p''$, all honest stakeholders will disseminate
stage 1 votes for $B$ by timeslot $t+ 2 \lceil \Delta/\kappa \rceil$,
and will then disseminate stage 2 and 3 votes for $B$ by timeslots
$t + 3\Dk$ and $t+ 4 \lceil \Delta/\kappa \rceil$, respectively.  
All honest stakeholders will have therefore seen stage 1, 2, and 3 QCs
for~$B$ by timeslot~$t + 5\Dk$, and so condition~(i) holds.

Finally, suppose that some honest stakeholder enters a
view~$(e',v')$ that is larger than~$(e,v)$ at some timeslot
$\le t + 4\Dk$. If~$e' > e$, condition~(ii) holds.  Suppose
that~$e'=e$ and $v' > v$. Because no honest player entered
view~$(e,v)$ prior to time~$t$, no honest player's timer for that view
has expired by timeslot~$t + 4\Dk$. Thus, some honest player must have
received by this timeslot a VC for view~$(e,v+1)$, implying that some
honest player must have received by this timeslot a stage 3 QC for a
block~$B$ proposed in view~$(e,v)$. All honest stakeholders will
therefore receive stage 1, 2, and 3 QCs for~$B$ by
timeslot~$t + 5\Dk$, and so condition~(i) holds.
\end{proof}

Call an epoch~$e$ {\em post-GST} if view~$(e,0)$ is post-GST.  The
next claim bounds the length of such an epoch.

\vspace{0.2cm} 
{\sc Claim 2:} If~$e$ is a post-GST epoch first reached by an
honest player at timeslot~$t$, 
then for every timeslot~$\ge t +
12N \Dksq$, $B_{g,e+1}$ is defined for all honest players active at that
time. Moreover, more than half of the confirmed epoch-$e$
blocks are proposed by honest players.

\begin{proof}
Every view~$(e,v)$ is post-GST. Every such view with an honest leader
is a good view. By Claim~1, every good view results in either the
reception by all honest stakeholders of a stage 3 QC for a block proposed
in that view or some honest player proceeding to a subsequent epoch.
If any honest player proceeds to an epoch~$e' > e$ by
timeslot~$t+(12N-1)\Dksq$, then all active honest players proceed to
such an epoch by timeslot~$t+12N\Dksq$ and have their local
variables $B_{g,e+1}$ defined by that time.
So, assume that no honest player proceeds to an epoch larger than~$e$
at or before timeslot $t+(12N-1)\Dksq$.

Because honest stakeholders within an epoch are always active, and
because each view lasts for at most $6\Dksq$ timeslots---at most $5\Dksq$
timeslots for timer
expiries, plus~$\le \Dk \le \Dksq$ timeslots to form VCs to enter the next
view---the first~$2N-1$ views of epoch~$e$ complete by the timeslot
$t+(12N-1)\Dksq$.\footnote{We assume that a player does not know
  whether its clock is fast or slow, in the sense that it cannot
  discern the passage of timeslots at which it is active but waiting.
  Accordingly, in the PoS-HotStuff protocol, a player's timer is set
  to expire after~$5\Dk$ active and non-waiting timeslots, which may
  translate to as many as $5\Dksq$ actual timeslots. (Technically, the
  bound should be $\lceil 5 \Dk /\kappa \rceil$, but we'll work with
  the approximation $\Dksq$ for simplicity.)}
Because honest stakeholders control more than
two-thirds of the overall stake (according to $\mathtt{T}_e$) and the
function~$\mathtt{leader}$ assigns leaders to views proportional to
stake, at least~$N$ of the first~$2N-1$ views of epoch~$e$ must be
good. (For simplicity, assume that~$N$ is at least, say, 10.)  Under
the assumption that no honest player proceeds to an epoch larger
than~$e$ at any timeslot $\le t+(12N-1)\Dksq$, Claim~1 implies that
every such good view results in the reception by all honest stakeholders of
a stage 3 QC for a block proposed in that view.  Thus, by the time the
first~$2N-1$ views of epoch~$e$ have completed, all honest
stakeholders will have received stage 3 QCs for at least~$N$ different
blocks.  The proof of consistency in Section~\ref{secprof} shows that
no two such blocks can be incompatible.  All such blocks
have~$B_{g,e}$ as an ancestor (as honest players will only vote for
such blocks).  At least one of these blocks~$B$ must have height at
least~$eN+N$.  Thus, by timeslot $t+(12N-1)\Dksq$, all honest
epoch-$e$ stakeholders will have defined their local
variable~$B_{g,e+1}$ as the (unique) block at height~$eN+N$ that is an
ancestor of~$B$.  It follows that, for every
timeslot~$\ge t + 12N\Dksq$, $B_{g,e+1}$ is likewise defined for all
honest players active at that time.

For the second statement of the claim, because honest stakeholders
control more than two-thirds of the overall stake (according to
$\mathtt{T}_e$) and the function~$\mathtt{leader}$ assigns leaders to
views proportional to stake, more than $N/2$ of the
views~$(e,0),\ldots,(e,N-2)$ have honest leaders and hence are good
views. Epoch~$e$ cannot end until stage 1 QCs have been produced for
at least~$N$ different blocks. Because each view results in a stage 1
QC for at most one block (as no honest player will vote for two
different blocks in the same view), no honest player can enter an
epoch $e' > e$ prior to some honest player entering the
view~$(e,N-1)$.  Thus, each of the $> N/2$ good views
among~$(e,0),\ldots,(e,N-2)$ results in a stage~3 QC for a block with
height $< eN+N$ that was proposed by an honest player.  All such
blocks will be ancestors of~$B_{g,e+1}$ (as no two epoch-$e$ blocks
with stage 3 QCs can be incompatible); as such, they will be confirmed
at the conclusion of epoch~$e$.
\end{proof}

Claim~2 implies that, if a transaction $\mathtt{tr}$ has been received
by all honest stakeholders by the start of a post-GST epoch, then, as
long as $\mathtt{tr}$ remains valid, it will be confirmed by the end
of the epoch. (The (honest) leader of the first good view in the epoch
will include $\mathtt{tr}$ in its block proposal~$B$, unless
$\mathtt{tr}$ is invalid or has already been included in some ancestor
of~$B$.)

The final claim bounds the amount of time between GST and the start of
the first post-GST epoch.

\vspace{0.2cm} 
{\sc Claim 3:} If~$t \ge \text{GST}$, there is a post-GST epoch that is
first reached by an honest player at some
timeslot in~$[t,t+(12N+7)\Dksq]$.

\begin{proof}
Let~$(e,v)$ denote the largest view that any honest player 
is in at timeslot~$t$, and define~$t^*:=t+7\Dksq$.
Let~$(e^*,v^*)$ denote the largest view that any honest player
is in at timeslot~$t^*$. 
The proof is complete if~$e^* > e$,
so suppose~$e^*=e$. Then, $v^* > v$ must hold: if
nothing else, because~$t \ge \text{GST}$,
all honest (active) epoch-$e$ stakeholders will enter
view~$(e,v)$ by timeslot $t+\Dk$, their view-$(e,v)$ timers will expire
by timeslot~$t+\Dk+5\Dksq$, and they will enter
view~$(e,v+1)$ by timeslot~$t+2\Dk+5\Dksq \le t+7\Dksq$ (upon reception
of a VC for view~$(e,v+1)$).
In particular,~$(e,v^*)$ and all subsequent views are post-GST.
Arguing as in the proof of the first statement of Claim~2 shows that,
by timeslot~$t^*+12N \Dksq$, $B_{g,e+1}$ will be defined for all honest
players active at that time and thus a post-GST epoch must have begun.
\end{proof}

Claims~1--3 imply that the PoS-HotStuff protocol satisfies liveness
with liveness parameter $\ell := (24N+8) \Dksq$. For suppose a
transaction~$\mathtt{tr}$ is first received by some honest player at
some timeslot~$t$. Let~$t^*$ denote $\max\{\text{GST},t\}$ and assume
that $\mathtt{tr}$ remains valid for all honest players through
timeslot~$t^*+\ell$.  Because honest players immediately disseminate
all received transactions,
by timeslot~$t^* + \Dk \le t^*+\Dksq$, $\mathtt{tr}$ is received by all
active honest players.  By Claim~3, there is a post-GST epoch~$e$ that
begins at some timeslot in~$[t^* + \Dksq,t^* + (12N+8)\Dksq]$.  By
Claim~2, by timeslot~$t^* + (24N+8)\Dksq$, epoch~$e$ must have
completed, with~$B_{g,e+1}$ defined for all honest players active at
that time; if the transaction $\mathtt{tr}$ is not included already in
some block confirmed in an earlier epoch, it is included in one of the
confirmed epoch-$e$ blocks (if nothing else, the block proposed by the
leader of the first good view in epoch~$e$).

\vspace{0.2cm} 
\noindent \textbf{Proving optimistic responsiveness}.  We now adapt
the proof of liveness above to show that the PoS-HotStuff protocol is
optimistically responsive, even with a grace period~$\Delta^*$ of~0.
Assume that all players are honest and let $\delta$ be the realized
maximum message delay after GST.
\begin{itemize}

\item Claim~1 and its proof hold verbatim with all occurrences of the
maximum message delay~$\Delta$ replaced by the maximum realized
message delay~$\delta$.  

\item Claim~2 and its proof simplify somewhat because we now assume that all
players are honest. Specifically, every view of the post-GST epoch~$e$
is good, and every such view results, within~$5\dk$ timeslots, in
either the reception by all (honest) stakeholders of a stage~3 QC for
a block proposed in that view or some player proceeding to a
subsequent epoch.  Arguing as in the original proof of Claim~2, if~$t$
is the first timeslot in which some player reaches epoch~$e$, then for
every timeslot~$\ge t + (5N+1)\dk$, $B_{g,e+1}$ is defined for all
players active at that time. Trivially, all of the confirmed epoch-$e$
blocks are proposed by honest players.

\item The modified statement of Claim~3 is that if~$t \ge \text{GST}$, then
there is a post-GST epoch that is first reached by a (honest) player
at some timeslot in~$[t,t+(5N+7)\dk]$. As before, let~$(e,v)$ denote
the largest view that any player is in at timeslot~$t$.  Every
epoch-$e$ stakeholder is then in a view at least as large as~$(e,v)$
by timeslot~$t+\dk$.  Let~$(e^*,v^*)$ denote the largest view that any
player is in at timeslot~$t^*:=t+6\dk$. As before, we can assume
that~$e^*=e$.  As before, if~$v^* > v$, arguing as in (the modified
version of) Claim~2 shows that, by timeslot $t^*+(5N+1)\dk$,
$B_{g,e+1}$ will be defined for all active players at that time and
thus a post-GST epoch must have begun.  We can therefore complete the
proof by arguing that all stakeholders must have entered a view larger
than~$(e,v)$ by timeslot~$t+6\dk$. Indeed, if all such players remain
in view~$(e,v)$ for all timeslots in $[t+\dk,t+5\dk])$, then, arguing
as in Claim~1, all epoch-$e$ stakeholders must see a stage 3 QC for a
block proposed in~$(e,v)$ by the beginning of timeslot~$t+5\dk$ and
disseminate a $\mathtt{view}~(e,v+1)$ message at that time. Thus, all
such players receive a VC for view~$(e,v+1)$ by the beginning of
timeslot~$t+6\dk$, and must have therefore entered a view larger
than~$(e,v)$ by that timeslot.

\end{itemize}
Analogous to the liveness argument above, the modified versions of
Claims~1--3 imply that a transaction~$\mathtt{tr}$ that is first
received by a player at some timeslot~$t$ will be confirmed for all
active players by timeslot~$\max\{\text{GST},t\} + (10N+9)\dk$.  That is, the
PoS-HotStuff protocol is optimistically responsive with liveness
parameter~$\ell=(10N+9)\dk = O(\delta)$ and grace period
parameter~$\Delta^*=0$, as claimed.

\subsection{Efficiency considerations} \label{effic} 

\textbf{Stake in escrow}. We next take up the suggestion from
Section~\ref{overv} to consider a variant of the PoS-HotStuff protocol
in which, for some positive integer~$N'$ that divides the total stake amount~$N$, participating players must
hold some positive integer multiple of $1/N'$ of the total stake
``in escrow.'' The motivation for this change is to 
allow the set of participating players to change every~$N'$ blocks
rather than every~$N$ blocks, thereby decreasing the time required to
confirm a transaction (while retaining determinism).

In more detail, we adopt the extended definition of a $\rho$-bounded
environment from footnote~\ref{foot:escrow} of Section~\ref{envrho},
according to which the adversary will always own at most a
$\rho$-fraction of the stake in escrow provided it always owns at most
a $\rho$-fraction of the stake.
The only significant modification to the protocol that is required
concerns the block heights at which the set of participating players
changes. Consider first the total stake held in escrow according to
the initial stake distribution $S_0$, and suppose that this is a
$(k_0/N')$-fraction of the total stake. Then, rather than the first
change in the player set occurring at height $N$ (as in the basic
PoS-HotStuff protocol), this change occurs instead at height $k_0$.
Consider next the total stake held in escrow according to the
confirmed transactions in blocks of height $\leq k_0$, and suppose
that this is a $(k_1/N')$-fraction of the total stake. Then, the next
change in the set of participating players occurs at height $k_0+k_1$,
and so on. In this way, the protocol recursively defines a sequence of
changes to the set of participating players that occur at heights that
are defined dynamically as the execution progresses (to reflect the
changing amount of stake in escrow).
The definitions of the function $\mathtt{leader}$, the confirmation
rule~$\mathcal{C}$, and a stage $s$ QC must be adapted to accommodate
stake-in-escrow and variable epoch lengths, but the changes
required are straightforward.  Fundamentally, the proof of
Theorem~\ref{posPoS2} remains essentially unchanged.
 
\vspace{0.2cm} 
\noindent \textbf{Message length and message complexity}. In
Section~\ref{tfs}, taking advantage of our lack of concern with issues
of efficiency, we allowed each block to specify its parent by
including its parent within the block data. The standard approach to
avoiding this inefficiency and reducing message length is
to include instead a hash value for the parent block. Similarly, a
vote $V$ can specify the voted-for block~$B^*(V)$ by including a hash
of it, rather than the block itself.

The dissemination model of communication featured in this paper is
inspired by the gossip protocols commonly used for communication in
permissionless blockchain protocols. If the players assigned non-zero
stake by $\mathtt{T}_e$ can somehow communicate directly
(point-to-point), the message complexity of the PoS-HotStuff protocol
can be reduced using the same techniques as in the original HotStuff
protocol \cite{yin2019hotstuff}, with all-to-leader and leader-to-all
communication replacing all-to-all communication within a view.

\section{Proof of Theorem \ref{niceneg}} \label{twelve}


\subsection{The intuition}  \label{intu}

Towards a contradiction, suppose that such a protocol
$(\Sigma,\mathcal{O},\mathcal{C}, \mathcal{S})$ exists for some
$\rho>0$ and $\epsilon < 1/5$.  
Following the payment circle example
(Section~\ref{ss:circle}), there is a player set (with one identifier
each), an initial stake distribution~$S_0$, and sets of
transactions~$\mathtt{T},\mathtt{T}'$ with $\mathtt{T}'\subset
\mathtt{T}$ such that:
\begin{enumerate} 

\item[$(\dagger_1)$] $S_0$ allocates at most a $\rho$-fraction of
  the stake to the set of Byzantine players $P_B$ and the remainder of
  the stake to honest players in a set $P_S$.

\item[$(\dagger_2)$] $\mathtt{T}$ and $\mathtt{T}'$ are both valid
  relative to $S_0$.

\item[$(\dagger_3)$] $\mathtt{S}(S_0,\mathtt{T})$ allocates at most
  a $\rho$-fraction of the stake to players in $P_B$.

\item[$(\dagger_4)$] $\mathtt{S}(S_0,\mathtt{T}')$ allocates all stake
  to players in $P_B$.

\end{enumerate} 
%
Next, consider a protocol instance (in the sense of
Section~\ref{instance}) in which:
\begin{itemize} 

\item The transactions in $\mathtt{T}$ are sent by the environment to
  a set of honest players $P_T$ at timeslot 1, with
  $P_T\cap P_S=\emptyset$.

\item Transactions in $\mathtt{T}'$ are sent by the environment to all
  players in $P_B$ at timeslot 1.

\item No dissemination by any player in $P_T$ is received by any other
  player until GST.

\end{itemize} 
%
The environment may also subsequently send, to a set of honest
players, up to two additional transactions, $\mathtt{tr}_1$ and $\mathtt{tr}_2$.
Each of these transactions transfers stake between
two players in $P_B$, and they are chosen so that they conflict
relative to $\mathtt{T}'$: $\mathtt{T}' \cup \{ \mathtt{tr}_i \}$ is valid
relative to~$S_0$ (for~$i=1,2$), but $\mathtt{T}' \cup \{ \mathtt{tr}_1
, \mathtt{tr}_2 \}$ is not valid relative to~$S_0$.
Such an environment is, by definition, maximally $\rho$-bounded.

\vspace{0.2cm} 
\noindent \textbf{Two phases}. We then consider protocol instances for
which the timeslots prior to GST are divided into two phases. The
result of the first phase is that
all on-chain resources are transferred to players in $P_B$, at least
from the perspective of some honest players.
In the second phase, the adversary carries out a version of the
``split-brain attack'' described in the classic proof by Dwork, Lynch,
and Stockmeyer~\cite{DLS88}.
(The proof in~\cite{DLS88} concerns the permissioned model without
resources, so 
we 
rework it for the general quasi-permissionless setting that is
studied here.)

We next elaborate on the intuition behind the first phase.
%
%
%
%
We consider a protocol instance $\mathtt{I}$ with player set
$\mathcal{P}=P_B \cup P_S \cup P_T \cup P_E$.  The players of~$P_E$
are honest and are given sufficient external resources to ensure that the
adversary is externally $\rho$-bounded, but are otherwise cut off from
the rest (until GST).  Messages disseminated by players in $P_B$ and
$P_S$ during the first phase are received by all players at the next
timeslot.  During the first phase,
because~$\mathcal{S}$ may contain any finite number of on-chain
resources
and we are working in the general quasi-permissionless
setting (Section~\ref{ss:qp}), the instance must accommodate
requirements on the set of active players.
On the other hand:
\begin{enumerate} 

\item[(a)] By assumption, the set $\mathcal{S}$ of on-chain resources is
  reactive, and;

\item[(b)] During the first phase (with no messages received from
  players in~$P_T$ or~$P_E$), instance $\mathtt{I}$ is, for players
  in $P_B$ and $P_S$, indistinguishable from an instance
  $\mathtt{I}'$ in which the player set is $P_B\cup P_S$, all players
  are honest, all external resources are allocated to players in
  $P_B$, and $\mathtt{T}'$ is the set of all transactions sent by the
  environment.

\end{enumerate} 
  
\noindent 
Assuming that the first phase is longer than the liveness
parameter~$\ell$ of the protocol $(\Sigma,\mathcal{O},\mathcal{C},
\mathcal{S})$, properties~(a) and~(b) imply that, at the end of
this phase, from the perspective of the players in $P_B \cup P_S$, 
the players in~$P_B$ own the totality of all on-chain
resources. Players of~$P_S$ are then eligible to go
inactive---which, prior to GST, is indistinguishable from long message
delays---and the table has been set for the second phase.
%

\begin{table}
\begin{tabular}{|c|l|}\hline
$P_B$ & potentially Byzantine players (with up to a
    $\rho$-fraction of stake and external resources)\\ \hline
$P_S$ & honest players with non-zero initial stake (but no external
    resources)\\ \hline
$P_T$ & honest
    players with initial knowledge of
    $\mathtt{T}$ (but no stake or external resources) \\ \hline
$P_E$ & honest players with a
    non-zero amount of some external resource (but no stake)\\ \hline
\end{tabular}
\caption{Subsets of players used in the proof of Theorem~\ref{niceneg}.}\label{table:niceneg}
\end{table}

\subsection{The formal details} 

Suppose that the protocol
$(\Sigma,\mathcal{O},\mathcal{C}, \mathcal{S})$ is live with respect
to $\ell$ and the set $\mathcal{S}$ of designated on-chain resources
is reactive with respect to $\ell^*$ (with a security parameter
$\epsilon < 1/5$).  Let $O_1,\ldots,O_k \in \mathcal{O}$ denote the
permitters used by the protocol. (The set $\mathcal{O}$ may also
contain oracles that are not permitters, although these will not play
any role in the proof.)  Let
$\{ \mathtt{S}^{\ast}_1,\dots, \mathtt{S}^{\ast}_j \}$ denote the
on-chain resources in $\mathcal{S}$.  (One of the
$\mathtt{S}^{\ast}_i$'s may be the stake allocation function
$\mathtt{S}$.)
Fix the delay bound~$\Delta=2$, and some
duration~$d$ that is larger than $2\ell + \ell^*+1$.

We consider four pairwise-disjoint and non-empty finite sets of
players $P_B$, $P_S$, $P_T$, and $P_E$ (Table~\ref{table:niceneg}).
Let $S_0$, $\mathtt{T}$, and $\mathtt{T}'$ be defined as in Section
\ref{intu}, so that the conditions~$(\dagger_1)$--$(\dagger_4)$
specified in that section are satisfied with respect to $P_B$ and $P_S$.
Let $\mathtt{tr}_1$ and $\mathtt{tr}_2$ be transactions that transfer
stake between players in $P_B$ such that
$\mathtt{T}' \cup \{ \mathtt{tr}_i \}$ and
$\mathtt{T} \cup \{ \mathtt{tr}_i \}$ are both valid relative to $S_0$
(for each $i \in \{1,2\}$) and such that
$\mathtt{T}' \cup \{ \mathtt{tr}_1,\mathtt{tr}_2 \}$ is not valid
relative to $S_0$.

\vspace{0.2cm} 
\noindent \textbf{Four protocol instances}.
The proof uses several protocol instances to deduce a contradiction.
The role of the first instance is to force
honest players to confirm all the transactions in $\mathtt{T}'$.

\vspace{0.2cm} 
\noindent \textbf{Instance} $\mathtt{I}_0$: 
\begin{itemize} 

\item $\mathcal{P}=P_B\cup P_S$.

\item All players are honest and active at all timeslots.

\item All external resources are assigned to~$P_B$. Specifically, for
  every $p \in P_B$ and for some (arbitrarily chosen) positive and
  even integer~$x_p$,
  $\mathcal{R}^{O_i}(p,t) = x_p$ for all $i \in \{1,\ldots,k\}$ and $t
  \le d$.
($\mathcal{R}^{O_i}(p,t)=0$ for all $p \in P_S$, $i \in
\{1,\ldots,k\}$, and~$t \le d$.)
  
\item The environment sends the transactions in $\mathtt{T}'$ to
  players in $P_B$ at timeslot 1.

\item GST$=0$. 

\item All disseminations by all players are received by all other
  players at the next timeslot.

\end{itemize} 
Because all players are honest, this instance has an externally
0-bounded adversary and a maximally 0-bounded environment.

\vspace{0.2cm} 

The role of the second and third instances is to force subsets of
honest players to confirm the potentially conflicting transactions
$\mathtt{tr}_1$ and $\mathtt{tr}_2$.  Partition~$P_S$ into two
non-empty sets, $Y_S$ and $Z_S$.  Set $t^*=\ell+\ell^*+1$.

\vspace{0.2cm} 
\noindent \textbf{Instance} $\mathtt{I}_1$: 
\begin{itemize}

\item $\mathcal{P}=P_B \cup P_S$.

\item All players are honest and active at all timeslots.

\item At timeslots~$t < t^*$, all resource allocations are defined
  identically to instance $\mathtt{I}_0$. At timeslot~$t^*$, every
  player's resource balances are cut in half.
That is, $\mathcal{R}^{O_i}(p,t) = x_p/2$ for all $p \in P_B$, $i \in
\{1,\ldots,k\}$, and $t \in [t^*,d]$.
($\mathcal{R}^{O_i}(p,t)=0$ for all $p \in P_S$, $i \in
\{1,\ldots,k\}$, and~$t \le d$.)

\item The environment sends the transactions in $\mathtt{T}'$ to
  players in $P_B$ at timeslot 1.

\item The environment sends $\mathtt{tr}_1$ to players in $Y_S$ at
  $t^*$.

\item GST $>t^*+\ell$. 

\item Until timeslot $t^*$, every dissemination by a
  player is received by all players at the next timeslot.

\item At all timeslots $\geq t^{\ast}$ until GST, disseminations by
  players in $P_B \cup Y_S$ are received by all other players at
  the next odd timeslot, while disseminations by players in $Z_S$
  are not received by any other player until GST.

\end{itemize} 

\vspace{0.1cm} 

\noindent The next instance is the ``mirror image'' of $\mathtt{I}_1$, in which
the roles of $Y_S$ and $Z_S$ (and of $\mathtt{tr}_1$ and
$\mathtt{tr}_2$) are reversed.

\vspace{0.2cm} 
\noindent \textbf{Instance} $\mathtt{I}_2$: 
\begin{itemize}

\item $\mathcal{P}=P_B \cup P_S$.

\item All players are honest and active at all timeslots.

\item All resource allocations are defined identically to instance
  $\mathtt{I}_1$.

\item The environment sends the transactions in $\mathtt{T}'$ to
  players in $P_B$ at timeslot 1.

\item The environment sends $\mathtt{tr}_2$ to players in $Z_S$ at
  $t^*$.

\item GST $>t^*+\ell$. 

\item Until timeslot $t^*$, every dissemination by a player is
  received by all players at the next timeslot.

\item At all timeslots $\geq t^{\ast}$ until GST, disseminations by
  players in $P_B \cup Z_S$ are received by all other players at
  the next even timeslot, while disseminations by players in $Y_S$
  are not received by any other player until GST.

\end{itemize} 
As in instance~$\mathtt{I}_0$, in instances~$\mathtt{I}_1$
and~$\mathtt{I}_2$, the adversary is externally 0-bounded
and the environment is maximally 0-bounded.



\vspace{0.2cm} 

The fourth instance plays the role of instance~$\mathtt{I}$ in
Section~\ref{intu}---this is the instance in which the proof will
force a consistency violation.

\vspace{0.2cm} 
\noindent \textbf{Instance} $\mathtt{I}_3$: 
\begin{itemize}

\item $\mathcal{P}=P_B\cup P_S \cup P_T \cup P_E$.

\item Players in $P_B$ are Byzantine,
  all other players are honest.

\item All players are active at all timeslots.

\item For players in~$P_B$, all resource allocations are defined
  identically to instance $\mathtt{I}_0$.  For every $p \in P_E$ and
  for some nonnegative~$x_p$, $\mathcal{R}^{O_i}(p,t) = x_p$ for all
  $i \in \{1,\ldots,k\}$ and $t \le d$. Further, these $x_p$'s are
  chosen so that
  $\sum_{p \in P_B} x_p = \rho \sum_{p \in P_B \cup P_E} x_p$.
($\mathcal{R}^{O_i}(p,t)=0$ for all $p \in P_S \cup P_T$, $i \in
\{1,\ldots,k\}$, and~$t \le d$.)

\item
  At timeslot~1, the environment sends the transactions in $\mathtt{T}'$ to
  players in $P_B$ and the transactions in $\mathtt{T}$ to
  players in $P_T$.

\item At $t^*$, the environment sends $\mathtt{tr}_1$ to players in
  $Y_S$ and $\mathtt{tr}_2$ to players in $Z_S$.

\item GST $>t^*+\ell$. 

\item Disseminations by players in $P_T \cup P_E$ are not received
  until GST.

\item Until timeslot $t^*$, disseminations by players in
  $P_B \cup P_S$ are received by all players at the next timeslot.

\item At all timeslots $\geq t^*$ until GST, disseminations by players
  in $Y_S$ are received by players in $P_B \cup Y_S$ at the next odd
  timeslot and are not received by players in $Z_S$ until GST.
  Disseminations by players in $Z_S$ are received by players in
  $P_B \cup Z_S$ at the next even timeslot and are not received by
  players in $Y_S$ until GST.

\item Byzantine players behave honestly until~$t^*$.  At all timeslots
  $\geq t^*$ until GST, each player $p\in P_B$ simulates two
  fictitious players $p_1$ and $p_2$ as if
  $\mathcal{R}^{O_i}(p_1,t)= \mathcal{R}^{O_i}(p_2,t)= x_p/2$ for each
  $i \in \{1,\ldots,k\}$. Further:
\begin{itemize}
    
\item Fictitious player $p_1$ simulates what player~$p$ would have
  done in instance $\mathtt{I}_1$ were it honest and with all external
  resource
  balances equal to~$x_p/2$. That is, in the present instance, $p_1$
  acts honestly (but with reduced resource balances~$x_p/2$) except that
  it ignores disseminations received at even timeslots.

\item Disseminations by $p_1$ are received by players in
  $P_B \cup Y_S$ at the next odd timeslot, and are not received by
  players in $Z_S$ until GST.

\item Fictitious player $p_2$ simulates what player~$p$ would have
  done in instance $\mathtt{I}_2$ were it honest and with all external
  resource
  balances equal to~$x_p/2$. That is, in the present instance, $p_2$
  acts honestly (but with reduced resource balances~$x_p/2$) except that
  it ignores disseminations received at odd timeslots.

\item Disseminations by $p_2$ are received by players in
  $P_B \cup Z_S$ at the next even timeslot, and are not received by
  players in $Y_S$ until GST.

\end{itemize}
\end{itemize} 
In $\mathtt{I}_3$, the resource balances of players in~$P_E$ are
defined so that the adversary is externally $\rho$-bounded. By
construction of the transactions in $\mathtt{T}$, $\mathtt{tr}_1$, and
$\mathtt{tr}_2$, and because the environment sends each transaction
initially to one or more honest players, the environment is
maximally $\rho$-bounded.

\vspace{0.2cm} 
\noindent \textbf{Producing the contradiction}. Assume for a moment
that the protocol $(\Sigma,\mathcal{O},\mathcal{C}, \mathcal{S})$ is
deterministic (i.e., $\epsilon=0$). The following steps then
contradict the assumed consistency of the protocol:
\begin{enumerate}

\item Because the protocol
  $(\Sigma,\mathcal{O},\mathcal{C}, \mathcal{S})$ is live with respect
  to~$\ell$, in $\mathtt{I}_0$, all the transactions in $\mathtt{T}'$
  are confirmed for all players in $P_B \cup P_S$ by timeslot $\ell+1$.

\item By construction of the transactions in $\mathtt{T}'$, and because
  $\mathtt{tr}_1$ and $\mathtt{tr}_2$ transfer stake between
  players of~$P_B$, in $\mathtt{I}_0$, from the perspective of
  each player in $P_B \cup P_S$, for all timeslots $\ge \ell+1$ until
  GST, all stake is allocated to players of~$P_B$.

\item Because the set $\mathcal{S}$ of on-chain resources is reactive
  with respect to $\ell^*$, in $\mathtt{I}_0$,
  from the
  perspective of each player in $P_B \cup P_S$, for all timeslots
  $\ge t^* (:=\ell+\ell^*+1)$ until GST, for
  every~$i \in \{1,\ldots,j\}$, players outside of~$P_B$ own none of
  $\mathtt{S}^*_i$.

\item Because the instances $\mathtt{I}_0$, $\mathtt{I}_1$, and
  $\mathtt{I}_2$ are indistinguishable for players of $P_B \cup P_S$
  until~$t^*$, in all three instances, from the perspective of each
  player in $P_B \cup P_S$, for all timeslots
  $\ge t^* (:=\ell+\ell^*+1)$ until GST, for
  every~$i \in \{1,\ldots,j\}$, players outside of~$P_B$ own none of
  $\mathtt{S}^*_i$.

\item \label{foo}
  Consider a protocol instance~$\mathtt{I}_4$ that is identical
  to~$\mathtt{I}_1$, except that: (i) GST$=0$; and (ii) the players
  of~$Z_S$ become inactive at~$t^*$. (This player inactivity is
  consistent with the general quasi-permissionless setting as, from
  the perspective of any player in the instance, the players of~$Z_S$
  own none of any of the resources listed in $\mathcal{S}$.)
Because the protocol is live with respect
to~$\ell$, and because the set $\mathtt{T}' \cup \{ \mathtt{tr}_1 \}$
is valid relative to~$S_0$, in $\mathtt{I}_4$,
the transaction $\mathtt{tr}_1$ is confirmed for all players in $P_B
\cup Y_S$ by timeslot $t^*+\ell$.

\item \label{bar}
  Because the instances $\mathtt{I}_1$ and $\mathtt{I}_4$ are
  indistinguishable for players of~$P_B \cup Y_S$ until $t^*+\ell$, in
  $\mathtt{I}_1$, 
the transaction $\mathtt{tr}_1$ is confirmed for all players in $P_B
\cup Y_S$ by timeslot $t^*+\ell$.

\item Reasoning as in steps~(\ref{foo}) and~(\ref{bar}), using an
  analogous instance~$\mathtt{I}_5$, in $\mathtt{I}_2$,
the transaction $\mathtt{tr}_2$ is confirmed for all players in $P_B
\cup Z_S$ by timeslot $t^*+\ell$.

\item The strategies of the Byzantine players in $\mathtt{I}_3$ ensure
  that the instances $\mathtt{I}_1$ and $\mathtt{I}_3$ are
  indistinguishable for players of~$Y_S$ until $t^*+\ell$.  Thus, in
  $\mathtt{I}_3$, the transaction $\mathtt{tr}_1$ is confirmed for all
  players in $Y_S$ by timeslot $t^*+\ell$.

\item The strategies of the Byzantine players in $\mathtt{I}_3$ ensure
  that the instances $\mathtt{I}_2$ and $\mathtt{I}_3$ are
  indistinguishable for players of~$Z_S$ until $t^*+\ell$.  Thus, in
  $\mathtt{I}_3$, the transaction $\mathtt{tr}_2$ is confirmed for all
  players in $Z_S$ by timeslot $t^*+\ell$.


\item Because the transactions $\mathtt{tr}_1$ and $\mathtt{tr}_2$
  conflict (relative to~$S_0$ and $\mathtt{T}'$), the confirmation of
  the transactions of $\mathtt{T}' \cup \{ \mathtt{tr}_1 \}$
  and of $\mathtt{T}' \cup \{ \mathtt{tr}_2 \}$ for the 
 (honest) players of~$Y_S$ and~$Z_S$, respectively, constitute a
 consistency violation at timeslot~$t^*+\ell$ in~$\mathtt{I}_3$.
 
\end{enumerate}
Suppose now that the protocol
$(\Sigma,\mathcal{O},\mathcal{C}, \mathcal{S})$ is probabilistic, with
security parameter~$\epsilon$.
If a specific execution of this protocol for the instance
$\mathtt{I}_3$ does not suffer a consistency
violation by timeslot~$t^*+\ell$, we attribute it to 
the first of the following explanations that applies:
\begin{itemize}

\item Not all the transactions of $\mathtt{T}'$ were confirmed for all
  players in $P_B \cup P_S$ by timeslot $\ell+1$.

\item Some player outside of~$P_B$ still owned some of an on-chain
  resource in $\mathcal{S}$ at a timeslot $\ge t^*$ before GST.

\item The transaction $\mathtt{tr}_1$ was not confirmed for all
  players in $Y_S$ by timeslot $t^*+\ell$.

\item The transaction $\mathtt{tr}_2$ was not confirmed for all
  players in $Z_S$ by timeslot $t^*+\ell$.

\end{itemize}
Because the instances~$\mathtt{I}_0$ and~$\mathtt{I}_3$ are
indistinguishable to players of~$P_B \cup P_S$ until $t^*$, every
failure of the first type can be associated with a failure of liveness
in $\mathtt{I}_0$, which in turn occurs with probability at
most~$\epsilon$. Similarly, failures of the second, third, and fourth
types can be associated with failures of reactivity in $\mathtt{I}_0$,
liveness in $\mathtt{I_4}$, and liveness in $\mathtt{I}_5$,
respectively (each of which occurs with probability at
most~$\epsilon$). We conclude that the protocol suffers a consistency
violation in instance $\mathtt{I}_3$ with probability at least
$1-4\epsilon$.
Because $\epsilon < 1/5$, this produces a contradiction and completes
the proof.

\vspace{0.2cm} 
\noindent \textbf{Discussion}. The proof above relies on the
possibility that players can, under certain conditions, be
inactive (as in instances $\mathtt{I}_4$ and $\mathtt{I}_5$),
as well as the fact that the player set is unknown ($P_B \cup P_S$
versus $P_B \cup P_S \cup P_T \cup P_E$).  The proof does not rely on
the possibility of sybils (each player can use a unique identifier).

\section*{Acknowledgments} We would like to thank the following people
for a number of useful conversations on the paper: Ittai Abraham,
Jayamine Alupotha, Rainer B\"{o}hme, Christian Cachin,  
Justin Drake, Mahimna Kelkar, Duc V.\ Le, Giuliano Losa, 
Oded Naor, Kartik Nayak,  Joachim Neu, Ling Ren, Elaine Shi, Ertem
Nusret Tas, David Tse, and Luca Zanolini. 

The research of the second author at Columbia University is supported
in part by NSF awards CCF-2006737 and CNS-2212745.
The second author is also Head of Research at a16z Crypto, a venture
capital firm with investments in blockchain protocols.

\appendix

\section{Proof-of-space permitters - an example}\label{app:chia}

To further demonstrate the versatility of our framework, we briefly
describe how to represent the salient features of the proof-of-space
Chia protocol using the permitter formalism from Section~\ref{fp}.  We
consider the simplified form of proof-of-space described in the Chia
green paper~\cite{cohen2019chia}.


In Chia, proofs of space depend on a \emph{space parameter}
$\ell\in \mathbb{N}_{\geq 1}$.  Players who wish to generate proofs of
space must store ``lookup tables'', each of size
$N\approx \ell \cdot 2^{\ell+1}$ bits.  Each lookup table is used to
produce proofs-of-space corresponding to a specific identifier.  The
proofs of space are of a specific form that makes them efficiently
verifiable.  We interpret the resource balance of a player as the
number of lookup tables (each with its own identifier) that the player
is in a position to maintain; this number is proportional to the
amount of space devoted by the player to the protocol.

Upon receiving a \emph{challenge} $y$ from the protocol, a player can
employ an efficient process that uses its lookup table to generate
proofs of space that are specific to that challenge and the
corresponding identifier.  For roughly a $1-1/e \approx 0.632$
fraction of the challenges there will be at least one proof, and the
expected number of proofs (per lookup table) is 1. So, for a given
identifier, each challenge~$y$ produces an output which is a (possibly
empty) set of proofs of space, with the distribution on this output
being given by $\mathbb{P}$, say.

To model this process within our framework, we use a multi-use
permitter, meaning that each player $p$ can send any finite number of
requests to the permitter at each timeslot.  For convenience, we
suppose each player $p$ has a single identifier (i.e.,
$|\mathtt{id}(p)|=1$) and that $\mathcal{R}^O(p,t)\in \{ 0,1 \}$ for
all~$p$ and $t$.  (Intuitively, a `miner' generates one identifier for
each lookup table that it maintains, with each identifier effectively
acting independently.)  Each request made at a timeslot $t$ must be of
the form $(1,(id,y))$, where $y$ is a challenge and where $(id,y)$ is
an entry of signed type (and where, as always, we require
$1\leq \mathcal{R}^O(p,t)$).  (Because the permitter is multi-use, a
player can make multiple such requests in the same timeslot,
presumably with different challenges~$y$.)  At the beginning of the
protocol execution, the permitter takes each possible request
$(1,(id,y))$ and samples a (possibly empty) set of proofs-of-space
$X(id,y)$ according to the distribution $\mathbb{P}$. If $p$ ever
sends the request $(1,(id,y))$, the permitter immediately sends the
response $(y,id,X(id,y))$, where this response is of type $O$.

\section{Modeling protocols -- some examples} \label{apC} 

As noted in Section~\ref{ss:hierarchy}, typical PoW protocols (such as
Bitcoin~\cite{nakamoto2008bitcoin}) will generally operate in the
fully permissionless setting, typical PoS longest-chain protocols
(such as Snow White~\cite{daian2019snow} and
Ouroboros~\cite{kiayias2017ouroboros}) operate in the dynamically
available setting, while typical PoS protocols that employ classical
PBFT-style methods (such as Algorand~\cite{chen2016algorand} or PoS
implementations of Tendermint~\cite{buchman2018latest} or
HotStuff~\cite{yin2019hotstuff}) will generally operate in the
quasi-permissionless setting. This appendix further illustrates our
general framework by considering some protocols with non-standard
features, and determines the settings in which these would be expected
to operate.

\vspace{0.2cm} 
\noindent \textbf{Ethereum.} The present version of Ethereum employs a
consensus protocol (essentially Gasper~\cite{buterin2020combining})
that combines longest-chain (LMD-Ghost) voting with a PBFT-style
protocol (Casper \cite{buterin2017casper}).  This protocols aims to be
live and consistent simultaneously in two settings: the dynamically
available and synchronous setting, and the quasi-permissionless and
partially synchronous setting.\footnote{See Neu et
  al.~\cite{neu2021ebb} for a discussion of the extent
  to which the protocol satisfies these properties and for a description of an alternative protocol which satisfies these properties.}
Theorem~\ref{psm} establishes that moving to the quasi-permissionless
setting is necessary when communication is only partially
synchronous.


\vspace{0.2cm} 
\noindent \textbf{Prism and OHIE}. While Prism \cite{bagaria2019prism}
and OHIE \cite{yu2020ohie} may be ``non-standard'' PoW protocols in
the sense that they build multiple chains and (in the case of Prism)
use multiple block types, they can be modeled within our framework as
operating in the fully permissionless setting.

\vspace{0.2cm} 
\noindent \textbf{Byzcoin, Hybrid, and Solida.} Byzcoin
\cite{kokoris2016enhancing}, Hybrid \cite{pass2016hybrid}, and Solida
\cite{abraham2016solida} use PoW to select a rolling sequence of
committees. As noted in Section \ref{blockrewards}, the basic idea is
that, once a committee is selected, its members should (be active and)
carry out a classical PBFT-style protocol to implement the next
consensus decision. This consensus decision includes the next sequence
of transactions to be committed to the blockchain, and also determines
which players have provided sufficient PoW in the meantime for
inclusion in the next committee.  Such protocols can be modeled in the
quasi-permissionless setting by specifying a protocol-defined resource
that allocates non-zero balance to the members of the current
committee (i.e., the committee members according to the most recently
confirmed block).

\vspace{0.2cm} 
\noindent \textbf{Bitcoin-NG.} Bitcoin-NG \cite{eyal2016bitcoin}
increases the transaction processing rate of Bitcoin by considering
blocks of two types. ``Key blocks'' require PoW and are essentially
the same as blocks in the Bitcoin protocol. Once a miner produces a
key block, they become the ``leader'' and can then---until some other
miner takes the reins by finding a new key block---produce
``microblocks'' that include further transactions and which do not
require PoW.  While the protocol is (probabilistically) live and
consistent in the fully permissionless and synchronous setting (as
inherited from the Bitcoin protocol), to take advantage of microblock
production (and the corresponding increase in the rate of transaction
processing), the protocol should be modeled in the
quasi-permissionless setting by specifying a protocol-defined resource
that allocates non-zero balance to the miner who produced the most
recent block in the longest chain.


\section{Proof of Theorem~\ref{sep}}\label{app:lg}

Recall the definition of the Byzantine Agreement problem from
Section~\ref{2.5}, with each player~$p$ receiving a protocol
input~$x_p$ that is either~0 or~1. 

\vspace{0.2cm}
\noindent \textbf{The setup}.
The protocol $(\Sigma, \mathcal{O}, \mathcal{C})$ that achieves the
guarantees promised by Theorem~\ref{sep} is based on the protocol of
Losa and Gafni~\cite{losa2023consensus} and is a variation on that of
Dolev and Strong~\cite{dolev1983authenticated}.  The protocol takes
advantage of an initial stake distribution~$S_0$ but does not use any
oracles. For the purposes of solving BA, we can assume that there is
no environment and no transactions; the confirmation
rule~$\mathcal{C}$ is irrelevant.
The protocol is designed for the synchronous setting; let~$\Delta$
denote the (determined) maximum message delay.

The initial stake distribution assigns a positive and integer number
of coins to each of a finite set of identifiers; let~$N$ denote the
total number of coins. We assume that these coins are numbered in an
arbitrary way. By assumption, at most a~$\rho < 1/2$ fraction of these
coins are assigned to identifiers belonging to faulty players.

Players are instructed to disseminate messages only at timeslots that
are multiples of~$\Delta$; for a nonnegative integer~$r$, we refer to
timeslot $r\Delta$ as {\em round~$r$}.\footnote{In this protocol,
  players are assumed to know the number of the current timeslot, even
  after waking up from a period of inactivity; see also
  footnote~\ref{foot:know_timeslot} in Section~\ref{ss:st}.}  By the assumptions of the
synchronous setting, a message disseminated at round~$r$ will be
received by round~$r+1$ by all players active at that round.  There
will be a total of $N^*:=\lceil N/2 \rceil + 1$ rounds.  


\vspace{0.2cm}
\noindent \textbf{Activity messages and attestations}.  If a
player~$p$ is active at a round~$r \le N^*$, that player is instructed
to disseminate a {\em round-$r$ activity message}
$(c,r,\mathtt{active})$ for each of the
coins~$c$ assigned to its identifiers.

If a player is active at round~$1$, that player is instructed to
disseminate, for each of the coins~$c$ assigned to one of its
identifiers, a {\em length-$1$ attestation}~$(c,1,x_p)$
to its protocol input~$x_p$.

In general, if: (i) a player~$p$ is active at a round~$r+1 \le N^*$;
(ii) $p$ has received a length-$r$ attestation~$m$
to~$b \in \{0,1\}$, involving~$r$ different coins;
and (iii) $p$ has not already disseminated any
attestations to~$b$, then~$p$ is instructed to disseminate, for each
coin~$c$ assigned to one of its identifiers, a length-$(r+1)$
attestation~$(c,r+1,m)$ to~$b$.

\vspace{0.2cm}
\noindent \textbf{Output values}. Each player~$p$ is instructed to
output in the first timeslot $\ge (N^*+1)\Delta$ at which it is
active; let~$M$ denote all the messages it has received by this time.
Let~$A_r(M)$ denote the coins associated with round-$r$ activity
messages in~$M$. Let $V_{r,b}(M)$ denote the coins~$c$ associated with
length-$\le r$ attestations to~$b$ (with~$c$ the coin mostly recently
added to the attestation).
If there is a
round~$r \in \{1,2,\ldots, N^*\}$ such that a strict majority of the
coins in~$A_r(M)$ belong also to~$V_{r,b}(M)$, then~$p$ is {\em
  convinced of~$b$ by~$M$}. If~$p$ is convinced of exactly one
value~$b$ by~$M$, it outputs~$b$; otherwise, it outputs a default
value (0, say).

\vspace{0.2cm}
\noindent \textbf{Limited faulty behavior}.  Because we assume that
faulty players deviate from honest behavior only by delaying
message dissemination (or crashing):
\begin{itemize}

\item [(1)] 
A faulty player cannot disseminate messages that reference coins that
are not assigned to its identifiers.

\item [(2)] 
A faulty player cannot send round-$r$ activity messages for rounds~$r$ at
which it is not active.

\item [(3)] A faulty player cannot produce a length-1 attestation to 
  a value other than its protocol input.

\end{itemize}

\vspace{0.2cm}
\noindent \textbf{Correctness}. Termination is obvious from the
protocol instructions for outputting values.  For validity, suppose
that every player~$p$ is given the same protocol
input~$b \in \{0,1\}$.  Consider some honest player~$p$ and the
messages~$M$ it has received when it chooses an output.  By~(3),
$V_{r,1-b}(M)$ is empty for all~$r \in \{1,2,\ldots,N^*\}$. Thus,~$M$
cannot convince~$p$ of~$1-b$.  By~(1)--(2) and the $\rho$-bounded
assumption (with $\rho < 1/2$), a strict majority of the coins
in~$A_1(M)$ are assigned to honest players' identifiers.  The protocol
instructions for honest players active at round~1 ensure that all of
these coins appear also in~$V_{1,b}(M)$. Thus, $p$ is convinced of~$b$
(and only~$b$) by~$M$, and will output~$b$.

For agreement, it suffices to show that every honest player is
convinced of the same set of values by its timeslot of output---the
first timeslot $\ge (N^*+1)\Delta$ at which it is active.
So, suppose some honest player is convinced of~$b \in \{0,1\}$ on
account of receiving a set~$M$ of messages in which, for some $r \in
\{1,2,\ldots,N^*\}$, more than half the coins in~$A_r(M)$ are also
in~$V_{r,b}(M)$. 
By~(1)--(2) and the $\rho$-bounded
assumption (with $\rho < 1/2$), a strict majority of the coins
in~$A_r(M)$ are assigned to honest players' identifiers. Because a
strict majority of the coins in~$A_r(M)$ are also in~$V_{r,b}(M)$,
there is a coin in~$A_r(M) \cap V_{r,b}(M)$ assigned to an identifier
that belongs to an honest player~$p$. Membership of this coin in
$V_{r,b}(M)$ implies that~$p$ must have
disseminated a length-$\ell$ attestation to~$b$ for some~$\ell \le
r$. Because~$p$ is honest, this dissemination must have occurred at
round~$\ell$. We now consider two cases.

First, suppose that $\ell < N^*$.  Honest players active at
round~$\ell+1$, having then received $p$'s attestation, would
disseminate their own (length-$(\ell+1)$) attestations to~$b$, unless
they had done so already in some previous round.  Thus, for any
set~$M'$ of messages that might be received by a player by a timeslot
$\ge (N^*+1)\Delta \ge (\ell+2)\Delta$,
$V_{\ell+1,b}(M')$ must contain every coin assigned to an identifier of
an honest player that is active at round~$\ell+1$. By~(1)-(2) and the
$\rho$-bounded assumption, these coins constitute a strict majority of
the coins in~$A_{\ell+1}(M')$. Thus, every honest player will be
convinced of~$b$ by the messages~$M'$ it has received by its timeslot
$\ge (N^*+1)\Delta$ of output.

Finally, suppose that~$\ell=N^*$, meaning that the honest player~$p$
disseminated a (length-$N^*$) attestation to~$b$ at round~$N^*$. By
the $\rho$-bounded assumption (with $\rho < 1/2$) and our choice
of~$N^*$, at least one of the other~$N^*-1$ coins associated with this
attestation must be assigned to an identifier of an honest
player~$p'$.  Player~$p'$ must have disseminated, for some
$\ell' < N^*$, a length-$\ell'$ attestation to~$b$ at round $\ell'$;
thus, this case reduces to the preceding one (with~$p'$ and $\ell'$
playing the roles of~$p$ and $\ell$, respectively), and every honest
player will be convinced of~$b$ by its timeslot of output.

\end{document}